\keywords{runtime monitors, property enforcement, monitor synthesis, first-order safety properties, modal $\mu$-calculus}
\tikzset{cross it/.style={dashed,decoration={markings, mark=at position 0.5 with {
				\draw [-,solid] ++ (-1mm,-1mm) -- (1mm,1mm);
				\draw [-,solid] ++ (-1mm,1mm) -- (1mm,-1mm);
			}}
			,postaction={decorate}}}
\tikzset{snake it/.style={decorate, decoration={snake,amplitude=.4mm,segment length=2mm,post length=1mm}}}
\tikzset{
	doubleline/.style args={#1 colored by #2 and #3}{
		-stealth,line width=#1,#2, % first arrow
		postaction={draw,-stealth,#3,line width=(#1)/3,
			shorten <=(#1)/3,shorten >=2*(#1)/3}, % second arrow
	}
}
\tikzstyle{innerWhite} = [semithick, white,line width=1.4pt, shorten >= 4.5pt]
\tikzstyle{roundnode} = [rounded rectangle, draw=black, thick, minimum size=7mm]
\tikzstyle{dottednode} = [circle, dotted, very thick, minimum size=7mm]
\newcommand{\ans}{\actN{ans}\xspace}
\newcommand{\cls}{\actN{cls}\xspace}
\newcommand{\loggTuple}[2]{(\actN{log},#1,#2)\xspace} 
\newcommand{\logg}{\loggTuple{\dvV}{\dvVV}\xspace} 
\newcommand{\actLoggTuple}[2]{\actOut{\pidVV}{\loggTuple{#1}{#2}}}
\newcommand{\actReq}{\actIn{\pidV}{\dvV}\xspace}
\newcommand{\actAns}{\actOut{\pidV}{\dvVV}\xspace}
\newcommand{\actCls}{\actIn{\pidVV}{\dvVVV}\xspace}
\newcommand{\actLog}{\actOut{\pidVV}{\logg}\xspace}
\newcommand{\pVg} {\ensuremath{\pV_\textbf{a}}\xspace}
\newcommand{\pVbo} {\ensuremath{\pV_\textbf{b}}\xspace}
\newcommand{\pVbi} {\ensuremath{\pV_\textbf{c}}\xspace}
\newcommand{\pVcls} {\ensuremath{\pV_\textbf{cls}}\xspace}
\newcommand{\pVclsdef}{(\prf{\actCls}\match{\dvVVV}{\cls}{\nil}{\rV})}
\newcommand{\pVgdef}{\rec{\rV}{(\ch{(\prf{\actReq}\prf{\assign{\dvVV}{\ans(\dvV)}}{\prf{\actAns}}{\prf{\actLog}}{\rV})}{\pVcls})}}
\newcommand{\pVbidef}{\prf{\actIn{\pidV}{\dvVV}}{\pVg}}
\newcommand{\pVbodef}{\rec{\rV}{(\ch{(\prf{\actReq}\prf{\assign{\dvVV}{\ans(\dvV)}}{\prf{\actAns}}
			(\ch{
				{\prf{\underline{\actAns}}}{\prf{\actLog}}{\pVg}
			}{
				{\prf{\actLog}}{\rV}
			}))}{\pVcls})}}
\newcommand{\vVA}{\ensuremath{\vV_1}\xspace}
\newcommand{\vVB}{\ensuremath{\vV_2}\xspace}
\newcommand{\vVdef}{\ensuremath{\vV_\text{def}}\xspace}
\newcommand{\vVVA}{\ensuremath{\vVV_1}\xspace}
\newcommand{\vVVB}{\ensuremath{\vVV_2}\xspace}
\newcommand{\vVVC}{\ensuremath{\vVV_3}\xspace}
\newcommand{\actInPidVvVA}{\actIn{\pidV}{\vVA}\xspace}
\newcommand{\actInPidVvVB}{\actIn{\pidV}{\vVB}\xspace}
\newcommand{\actAnsPidVvVA}{\actOut{\pidV}{\ans(\vVA)}\xspace}
\newcommand{\trnsInIdS}{\actSN{\actIn{\dbV}{\dbVVA}}{\dvV{\neq}\pidVV}\!\xspace}
\newcommand{\trnsOutIdS}{\actSet{\actOut{\dvV}{\dbVVB}\!}\xspace}
\newcommand{\trnsOutSupS}{\actSN{\actOut{\dvV}{\dbVdc}}{\!\actdot}\xspace}
\newcommand{\trnsLogIdSFull}{\actSN{\actOut{\pidVV}{\dbVVC}}{\dvVVC{=}\loggTuple{\dvVVA}{\dvVVB}}\xspace}
\newcommand{\trnsInInsS}{\!\actSN{\actdot}{\actIn{\dvV}{\vVdef}}\xspace}
\newcommand{\pidVVOutClogPat}{\actSN{\actOut{\pidVV}{\dbVVC}}{\dvVVC{=}\loggTuple{\dvVVA}{\dvVVB}}\xspace} 
\newcommand{\dbVVVInAPat}{\actSN{\actIn{\dbV}{\dbVVA}}{\dvV{\neq}\pidVV}\xspace} 
\newcommand{\dvVVVInBPat}{\actSet{\actIn{\dvV}{\dbVdc}}\xspace} 
\newcommand{\dvVVVOutAPat}{\actSet{\actOut{\dvV}{\dbVVB}}\xspace} 
\newcommand{\dvVVVOutBPat}{\actSet{\actOut{\dvV}{\dbVdc}}\xspace} 
\newcommand{\hVdef}{\hMaxX{\hNec{\dbVVVInAPat}(\hAnd{\hNec{\dvVVVInBPat}\hFls\,}{\,\hNec{\dvVVVOutAPat}\hV_1'})}\xspace}
\newcommand{\hVdefAP}{(\hAnd{\hNec{\dvVVVOutBPat}\hFls\,}{\,\hNec{\pidVVOutClogPat}\hVarX})}
\newcommand{\dbVVVInAPatF}{\actSN{\actIn{\dbV}{\dbVVA}}{\dvV{\neq}\pidVV}\xspace} 
\newcommand{\dvVVVInBPatF}{\actSN{\actIn{\dbVA}{\dbVdc}}{\dvVA{=}\dvV}\xspace} 
\newcommand{\dvVVVOutAPatF}{\actSN{\actOut{\dbVB}{\dbVVB}}{\dvVB{=}\dvV}\xspace} 
\newcommand{\dvVVVOutBPatF}{\actSN{\actOut{\dbVC}{\dbVdc}}{\dvVC{=}\dvV}\xspace} 
\newcommand{\pidVVOutClogPatF}{\actSN{\actOut{\dbVD}{\dbVVC}}{\dvVD{=}\pidVV\land\dvVVC{=}\loggTuple{\dvVVA}{\dvVVB}}\xspace} 
\newcommand{\hVdefF}{\hMaxX{\hNec{\dbVVVInAPatF}(\hAnd{\hNec{\dvVVVInBPatF}\hFls\,}{\,\hNec{\dvVVVOutAPatF}\hV_1'})}\xspace}
\newcommand{\hVdefAPF}{(\hAnd{\hNec{\dvVVVOutBPatF}\hFls\,}{\,\hNec{\pidVVOutClogPatF}\hVarX})}
\newcommand{\eVa} {\ensuremath{\eV_\textbf{a}}\xspace}
\newcommand{\eVe} {\ensuremath{\eV_\textbf{e}}\xspace}
\newcommand{\eVd} {\ensuremath{\eV_\textbf{d}}\xspace}
\newcommand{\eVdt} {\ensuremath{\eV_\textbf{dt}}\xspace}
\newcommand{\eVdtP} {\ensuremath{\eV'_\textbf{dt}}\xspace}
\newcommand{\eVdet} {\ensuremath{\eV_\textbf{det}}\xspace}
\newcommand{\eVdetP} {\ensuremath{\eV'_\textbf{det}}\xspace}
\newcommand{\eVdetPP} {\ensuremath{\eV''_\textbf{det}}\xspace}
\newcommand{\eVclsdef}{\prf{\actSet{\actIn{\pidVV}{\dbVdc}}}{\eIden}}
\renewcommand{\eBI}[2]{\eI{#1}{#2}}
\renewcommand{\eSembiS}[1]{\ensuremath{{\llparenthesis\,#1\, \rrparenthesis}}}
\title[First-order Bidirectional Runtime Enforcement]{Bidirectional Runtime Enforcement of First-Order 
% Safety 
Branching-Time Properties 
% \thanks{}
}
\author[L. Aceto]{Luca Aceto\lmcsorcid{0000-0002-2197-3018}}[a,b]	
\author[I. Cassar]{Ian Cassar\lmcsorcid{0000-0002-5845-3753}}[a,c]
\author[A. Francalanza]{Adrian Francalanza\lmcsorcid{0000-0003-3829-7391}}[c]
\author[A. Ing\'{o}lfsd\'{o}ttir]{Anna Ing\'{o}lfsd\'{o}ttir\lmcsorcid{0000-0001-8362-3075}}[a]
\address{ICE-TCS, Department of Computer Science, Reykjav\'{i}k University, Iceland}
\address{Gran Sasso Science Institute, L'Aquila, Italy}	
\address{Department of Computer Science, University of Malta, Malta}
\thanks{``TheoFoMon: Theoretical Foundations for Monitorability'' (nr.163406-051) and “MoVeMnt: Mode(l)s of Verification and Monitorability” (nr.217987-051) of the Icelandic Research Fund, the EU H2020 RISE programme under the Marie Sk{\l}odowska-Curie grant agreement nr. 778233, the Italian MIUR project PRIN 2017FTXR7S IT MATTERS ``Methods and Tools for Trustworthy Smart Systems'', the Research Excellence Funds of the University of Malta, 2021 nr. I22LU01, and the Endeavour Scholarship Scheme (Malta), part-financed by the European Social Fund (ESF) - Operational Programme II – 2014-2020.}
\newcommand{\nw}[1]
{#1}
\begin{document}

\begin{abstract}
	Runtime enforcement is a dynamic analysis technique that instruments a monitor with a system in order to ensure its correctness as specified by some property. 
	This paper explores \emph{bidirectional} enforcement strategies for properties describing the input and output behaviour of a system.
	We develop an operational framework for bidirectional enforcement and use it to study the enforceability of the safety fragment of Hennessy-Milner logic with recursion (\SHML).
	We provide an automated synthesis function that generates correct 
  % and optimal 
  monitors from \SHML formulas, and show that this logic is enforceable via a specific type of bidirectional enforcement monitors called action disabling monitors.
\end{abstract}

\maketitle

%\pagestyle{headings}
% \pagenumbering{arabic}

\section{Introduction}
\label{sec:introduction}
% !TEX root = journal.tex

\emph{Runtime enforcement} (RE)~\cite{Ligatti2005,Falcone2008}
% ,Cassar2018Concur,Cassar2019RV}
 is a dynamic verification technique that uses \emph{monitors} to analyse the runtime behaviour of a system-under-scrutiny (\sus) and transform it in order to conform to some correctness \emph{specification}.
The earliest work in RE \cite{Ligatti2005,Ligatti2009,Sakarovitch:2009:Transducers,Bielova2011,Khoury2012} models the behaviour of the \sus as a \emph{trace} of \emph{abstract} actions (\eg $\acta,\actb,\ldots \in \Act$).
Importantly, it assumes that the monitor can either suppress or replace \emph{any} of these (abstract) actions and, whenever possible, insert additional actions into the trace.

This work has been used as a basis to implement \emph{unidirectional} enforcement approaches \cite{Konighofer2017,Falcone2012,Cassar2018Concur,Alur:2011:Transducers} that monitor the outputted trace of actions emitted by the \sus as illustrated by \Cref{fig:enf-setups}(a).
In this setup, the monitor is instrumented with the \sus to form a \emph{composite system} (represented by the dashed enclosure in \Cref{fig:enf-setups}) and is tasked with transforming the output behaviour of the \sus to ensure its correctness.
For instance, an erroneous output \actb of the \sus is intercepted by the monitor and 
% modified accordingly 
transformed into $\actb'$, to stop the error from propagating to the surrounding environment.

Despite its merits, unidirectional enforcement disregards the fact that not all events originate from the \sus. 
For instance, protocol specifications describing the interaction of communicating computational entities include \emph{input} actions, instigated by the environment in addition to output actions originating from the \sus.
%
% lacks the power to enforce properties involving the \emph{input} behaviour of the \sus.
%
Arguably, these properties are  harder to enforce. 
Since inputs are instigated by the environment, the \sus possesses only partial control over them and the capabilities to prevent or divert such actions can be curtailed. 
Moreover, in such a bidirectional setting, the properties to be enforced tend to be of a \emph{first-order} nature \cite{Cassar2018Concur,Havelund2018}, describing relationships between the respective payload carried by input and output events.
This means that even when the (monitored) \sus can control when certain inputs can be supplied (\eg by opening a communication port, or by reading a record from a database \etc), the environment still has control over the provided payload.
Broadly, there are two approaches to enforce bidirectional properties at runtime. 
Several bodies of work employ two monitors attached at the output side of each (diadic) interacting party~\cite{Bocchi2017,Jia2016Popl,Chen2012,FrancalanzaMT20}. 
As shown in \Cref{fig:enf-setups}(b), the extra monitor is attached to the \emph{environment} to analyse its outputs before they are passed on as \emph{inputs} to the \sus.
While this approach is effective, it assumes that a monitor can actually be attached to the environment (which is often inaccessible).

By contrast, \Cref{fig:enf-setups}(c) presents a less explored \emph{bidirectional enforcement} approach where the monitor analyses the entire behaviour of the \sus without the need to instrument the environment. 
The main downside of this alternative setup is that it enjoys limited control over the \sus's inputs.
As we already argued, the monitor may be unable to enforce a property that could be violated by an input action with an invalid payload value. 
In other cases, the monitor might need to adopt a different enforcement strategy to the ones that are conventionally used for enforcing output behaviour in a unidirectional one.

\begin{figure}
  % [h]
% [t]
	\centering
	% {\normalsize(a)}
\begin{minipage}{0.3\textwidth}
	\begin{tikzpicture}[scale=1.3, transform shape]
	%Rectangles
	\draw (0,0) rectangle (0.5,0.6);  %% System
	\draw[fill=black] (1,0) rectangle (1.5,0.3)node[pos=.5]{\textcolor{white}{$\blacktriangleright$}}; %% Monitor
	\draw (2.2,0) rectangle (2.7,0.6);  %% Environment
	\draw[densely dashed] (-0.05,-0.05) rectangle (1.6,0.65);  %% Composite System
	%Names
	\node at (0.25,-0.2) {{\scriptsize \sus}};
	\node at (1.3,-0.2) {{\scriptsize Mon}};
	\node at (2.45,-0.2) {{\scriptsize Env}};
  %Label
  \node at (1.3,-0.6) {{\scriptsize (a)}};
	%System Nodes
	\node (sa) at (0.4,0.4) {};	
	\node (sb) at (0.4,0.2) {};	
	%Mon Left Nodes
	\node (mlb) at (1.1,0.2) {};	
	%Mon Right Nodes
	\node (mrb) at (1.4,0.2) {};	
	%Env Nodes
	\node (ea) at (2.3,0.4) {};
	\node (eb) at (2.3,0.2) {};
	% Solid Arrows (Above)
	\foreach \from/\to/\val in {sb/mlb/{\actb}, mrb/eb/{$\actb'$}}
	\draw [->] (\from) -- (\to) node[pos=.5,yshift=-1.2mm] {\scriptsize\val};
	
	% \draw [->] (ea) -- (sa) node[pos=.5,xshift=4.7mm,yshift=1.2mm] {\scriptsize\acta};
	\end{tikzpicture}
\end{minipage}
\qquad \qquad\qquad
%
% {\normalsize(b)}
\begin{minipage}{0.3\textwidth}
	\begin{tikzpicture}[scale=1.3, transform shape]
	%Rectangles
	\draw (0,0) rectangle (0.5,0.6);  %% System
	\draw[fill=black] (1,0) rectangle (1.5,0.25)node[pos=.5]{\textcolor{white}{$\blacktriangleright$}}; %% Monitor OUT
	\draw[fill=black] (1,0.35) rectangle (1.5,0.6)node[pos=.5]{\textcolor{white}{$\blacktriangleleft$}}; %% Monitor IN
	\draw (2.2,0) rectangle (2.7,0.6);  %% Environment
	 
	\draw[densely dashed] (-0.05,-0.05) -- (1.55,-0.05) -- (1.55,0.28) -- (0.6,0.28) -- (0.6,0.65) -- (-0.05,0.65) -- cycle;  %% Composite System OUT
	\draw[densely dashed] (2.75,-0.05) -- (2.1,-0.05) -- (2.1,0.31) -- (0.95,0.31) -- (0.95,0.65) -- (2.75,0.65) -- cycle;  %% Composite System IN
	
	%Names
	\node at (0.25,-0.2) {{\scriptsize \sus}};
	\node at (1.3,-0.2) {{\scriptsize Mons}};
	\node at (2.45,-0.2) {{\scriptsize Env}};
  %Label
  \node at (1.3,-0.6) {{\scriptsize (b)}};
	%System Nodes
	\node (sa) at (0.4,0.4) {};	
	\node (sb) at (0.4,0.2) {};	
	%Mon Left Nodes
	\node (mla) at (1.1,0.4) {};	
	\node (mlb) at (1.1,0.2) {};	
	%Mon Right Nodes
	\node (mra) at (1.4,0.4) {};	
	\node (mrb) at (1.4,0.2) {};	
	%Env Nodes
	\node (ea) at (2.3,0.4) {};
	\node (eb) at (2.3,0.2) {};
	
	% Solid Arrows (Above)	
	\draw [<-] (sa) -- (mla) node[pos=.5,yshift=1.4mm,xshift=0.5mm] {\scriptsize$\acta'$};
	\draw [<-] (mra) -- (ea) node[pos=.5,yshift=1mm] {\scriptsize$\acta$};
		
	% Solid Arrows (Below)
	\foreach \from/\to/\val in {mlb/sb/{\actb}, eb/mrb/{$\actb'$}}
	\draw [<-] (\from) -- (\to) node[pos=.5,yshift=-1.2mm] {\scriptsize\val};
	\end{tikzpicture}
\end{minipage}
\quad\;\,
%
% {\normalsize(c)}
\begin{minipage}{0.3\textwidth}
	\begin{tikzpicture}[scale=1.3, transform shape]
	%Rectangles
	\draw (0,0) rectangle (0.5,0.6);  %% System
	\draw[fill=black] (0.95,0.1) rectangle (1.55,0.5) node[pos=.5]{\textcolor{white}{$\blacktriangleleft\blacktriangleright$}}; %% Monitor
	\draw (2.2,0) rectangle (2.7,0.6);  %% Environment
	\draw[densely dashed] (-0.05,-0.05) rectangle (1.62,0.65);  %% Composite System
	%Names
	\node at (0.25,-0.2) {{\scriptsize \sus}};
	\node at (1.3,-0.2) {{\scriptsize Mon}};
	\node at (2.45,-0.2) {{\scriptsize Env}};
  %Label
  \node at (1.3,-0.6) {{\scriptsize (c)}};
	%System Nodes
	\node (sa) at (0.4,0.4) {};	
	\node (sb) at (0.4,0.2) {};
	%Mon Left Nodes
	\node (mla) at (1.05,0.4) {};	
	\node (mlb) at (1.05,0.2) {};
	%Mon Right Nodes
	\node (mra) at (1.44,0.4) {};	
	\node (mrb) at (1.44,0.2) {};
	%Env Nodes
	\node (ea) at (2.3,0.4) {};
	\node (eb) at (2.3,0.2) {};
%	% Solid Arrows (Above)
%	\foreach \from/\to/\val in {sa/mla/{$\acta'$}, mra/ea/{\acta}}
%	\draw [<-] (\from) -- (\to) node[pos=.5,yshift=1.2mm] {\scriptsize\val};
	
	\draw [<-] (sa) -- (mla) node[pos=.5,yshift=1.4mm,xshift=0.5mm] {\scriptsize$\acta'$};
	\draw [<-] (mra) -- (ea) node[pos=.5,yshift=1mm] {\scriptsize$\acta$};

	% Solid Arrows (Below)
	\foreach \from/\to/\val in {mlb/sb/{\actb}, eb/mrb/{$\actb'$}}
	\draw [<-] (\from) -- (\to) node[pos=.5,yshift=-1.2mm] {\scriptsize\val};
	\end{tikzpicture}
\end{minipage}
  % \vspace{-2mm}
	\caption{Enforcement instrumentation setups.}
	\label{fig:enf-setups}
\end{figure}

This paper explores how  existing monitor transformations---namely, suppressions, insertions and replacements---can be repurposed to work for bidirectional enforcement, \ie the setup in \Cref{fig:enf-setups}(c).
Since inputs and outputs must be enforced differently, we find it essential to distinguish between the monitor's transformations  and their resulting effect on the visible behaviour of the composite system.
This permits us to study the enforceability of properties defined via a safety subset  of the well-studied branching-time logic \recHML~\cite{Rathke1997,Aceto2007Book,Larsen1990} (a reformulation of the modal $\mu$-calculus \cite{Kozen1983MuCalc}), called \SHML.
A crucial aspect of our investigation is the \emph{synthesis} function that maps the \emph{declarative} safety \recHML specifications to \emph{algorithmic} monitors that enforce properties concerning both the input and output behaviour of the \sus. 
Since monitors are part of the trusted computing base, it was imperative that we ensure that all synthesised monitors are correct~\cite{FrancalanzaS13,francalanza2016theory,BasinDHHMKKMRST22}.
Our contributions are:
\begin{enumerate}[$(i)$]
	\item A general instrumentation framework for bidirectional enforcement (\Cref{fig:mod-bi-re}) 
  that is parametrisable by any system whose behaviour can be modelled as a labelled transition system. 
  The framework subsumes the one presented in previous work~\cite{Cassar2018Concur} and differentiates between the enforcement of input and output actions.
	% The novelty of this framework lies in how its effect on the visible behaviour of the resulting composite system differs according to whether the transformed action is an input or an output.
	%
	%The way this instrumentation framework interprets the transformations applied by the monitor %(\ie action suppression, insertion and replacements), 
	%differs according to whether the action %that must be enabled, disabled or adapted, 
	%is an input or an output.
	%
	%
	\item A novel formalisation describing what it means for a monitor to \emph{adequately enforce} a property in a bidirectional setting (\Cref{def:enforceability,def:enforcement}).
	These definitions are parametrisable with respect to an instrumentation relation, an instance of which is given by our enforcement framework of \Cref{fig:mod-bi-re}.
	\item A new result showing that the subclass of \emph{disabling monitors}, \Cref{def:disa-mon} (the counterpart to suppression monitors in unidirectional enforcement), suffices to bidirectionally enforce safety properties expressed as  \recHML formulas (\Cref{thm:enf}).  
   A by-product of this result is a synthesis function (\Cref{def:synthesis-bi}) that generates a disabling monitor from such safety formulas. 
  %
	% We evaluate the quality of this mapping by proving that the synthesised monitors are correct and optimal (\Cref{thm:enf,thm:oenf}).
  \item A preliminary investigation on the notion of monitor \emph{optimality} 
  (\Cref{def:opt-enf-bi}). 
	Our proposed definition assesses the level of intrusiveness of the monitor and guides in the search for the least intrusive one.
  We evaluate our monitor synthesis function of \Cref{def:synthesis-bi} in terms of this optimality measure, \Cref{thm:oenf}. 
\end{enumerate}
This article is the extended version of the paper titled \emph{``On Bidirectional Runtime Enforcement''} that appeared at FORTE 2021 \cite{AcetoCFI21}.
In addition to the material presented in the conference version, this version contains extended examples, the proofs of the main results and new material on monitor optimality.  
The related work section has also been considerably expanded. 

% Full proofs and additional details can to be found at~\cite{BidirectionalTechRep20,Cassar21Thesis}.

\section{Preliminaries}
\label{sec:preliminaries}
% !TEX root = journal.tex

\emph{The Model.} 
We assume a countably infinite set of communication ports $\pidV,\pidVV,\pidVVV{\,\in\,}\Port$, a set of values $\vV,\vVV{\,\in\,}\Val$, and partition the set of actions \Act into 
\begin{itemize}
  \item \emph{inputs}, $\actInV{\,\in\,}\IAct$ \eg denoting an input by the system from the environment on port \pidV carrying payload \vV; and 
  \item \emph{outputs}, $\actOutV{\,\in\,}\OAct$ originating from the system to the interacting environment on port \pidV carrying payload \vV
\end{itemize}
where $\Act=\IAct{\,\cup\,}\OAct$.
Systems are described as \emph{labelled transition systems} (LTSs); these are
%
% An LTS consists of a 
triples $\langle\Sys,\Act\cup\sset{\actt},\rightarrow\rangle$ consisting of a set of \emph{system states}, $\pV,\pVV,\pVVV{\,\in\,}\Sys$, a set of \emph{visible actions}, $\acta,\actb{\,\in\,}\Act$, along with a distinguished silent action $\actt{\,\notin\,}\Act$ (where $\actu{\,\in\,}\Act{\,\cup\,}\sset{\actt}$), and a \emph{transition} relation, $\reduc\;\subseteq(\Sys\times(\Act\cup\sset{\actt})\times\Sys)$.
We write $\pV\traS{\actu}\pVV$ in lieu of $(\pV,\actu,\pVV) \in\, \rightarrow$, and $\pV \wtraS{\acta} \pVV$ to denote weak transitions representing $\pV (\traS{\actt})^{\ast}\cdot\traS{\acta}\pVV$ where $\pVV$ is called the \acta-derivative of $\pV$.
For convenience, we use the syntax of the regular fragment of value-passing CCS~\cite{Hennessy1996CCS} to concisely describe LTSs.
Traces $\tr,\trr\in\Act^\ast$ range over (finite) sequences of \emph{visible} actions. 
We write $\pV \wtraS{\tr} \pVV$ to denote a sequence of \emph{weak} transitions  $\pV \wtraS{\acta_1} \ldots \wtraS{\acta_n} \pVV$ where $\tr = \acta_1\ldots\acta_n$ for some $n\geq 0$; when $\tr{\,=\,}\trE$, $\pV\wtraS{\trE}\pVV$ means $\pV\traSC{\actt}\pVV$.
Additionally, we represent system runs as \emph{explicit traces} that include \actt-actions, $\txr,\txrr{\,\in\,}(\Act{\,\cup\,}\sset{\actt})^\ast$ and write $\pV\traS{\actu_1\ldots\actu_n}\pVV$ to denote a sequence of \emph{strong} transitions $\pV \traS{\actu_1} \ldots \traS{\actu_n}\pVV$.
The function \trcsys{\txr} returns a 
% canonical 
system that produces exclusively the sequence of actions defined by \txr, \nw{modulo the data carried by input actions in \txr that cannot be controlled by the receiving process}.
For instance, 
 $\trcsys{\prf{\actIn{\pidV}{3}}\prf{\actt}\actOut{\pidV}{5}}$ produces the process $\prf{\actIn{\pidV}{\dvV}}\prf{\actt}\prf{\actOut{\pidV}{5}}\nil$. 
We consider states in our system LTS modulo the classic notion of \emph{strong bisimilarity}~\cite{Hennessy1996CCS,Sangiorgi2011Bisim} and write $\pV\bisim\pVV$ when states $\pV$ and $\pVV$ are bisimilar.
% Our system LTS induces the classic notion of \emph{strong bisimilarity} \cite{Hennessy1996CCS,Sangiorgi2011Bisim}, $\pV\bisim\pVV$, which is a commonly used form of system equivalence.

\begin{figure}[t]
	\textbf{\textsf{Syntax}}
	% \vspace{-1mm}
	\!\!{\small$$\begin{array}{r@{\,}llc@{\,}llc@{\,}ll}
	\hV,\hVV \in \SHML \bnfdef&  \hTru &(\text{truth}) & \bnfsepp  &\hFls \; &(\text{falsehood})& \bnfsepp &\hBigAndU{i\in\IndSet}{\hV_i} & (\text{conjunction})  \\[1mm]
	 	\bnfsepp& \hNec{\pate,\bV}{\hV} & (\text{necessity}) &\bnfsepp & \hMaxXF & (\text{greatest fp.}) &\bnfsepp& \hVarX & (\text{fp. variable})
\end{array}$$} 
% \vspace{-2mm}
\\[1em]
	\textbf{\textsf{Semantics}}
	{
	{\small	
  % \[
	  \begin{align*}
      % {r@{\,}c@{\;}l@{\qquad}r@{\,}c@{\;}l@{\;\;}r@{\,}c@{\;}l}
		\hSemS{\hTru,\rho}  &\defEquals \Sys \\
		\hSemS{\hFls,\rho}  &\defEquals  \emptyset \\
		% &\hspace{-5mm}
    \hSemS{\hVarX,\rho} &\defEquals \rho(\hVarX)
		\\
    % [1mm]
		\hSemS{\hBigAnd{i\in\IndSet}{\hV_i},\rho} & \defEquals  \bigintersectU{i\in\IndSet}\hSemS{\hV_i,\rho} 
    %&
    \\
		% \!
    \hSemS{\hMaxXF,\rho} & \defEquals  \bigcup \Set{S \;|\;  S \subseteq \hSemS{\hV,\rho[\hVarX\mapsto S]}}
		\\
    % [1mm]
		%\hSemS{\hBigOr{i\in\IndSet}{\hV_i},\rho} & \defEquals & \bigunionU{i\in\IndSet}\hSemS{\hV_i,\rho} &
		%\!\hSemS{\,\hMinXF,\rho} & \defEquals & \bigcap \Set{S \;|\;  \hSemS{\hV,\rho[\hVarX\mapsto S]} \subseteq S\,}
		%\\[1mm]
		\hSemS{\,\hNec{\pate,\bV}{\hV},\rho}  & \defEquals
			% \multicolumn{7}{l}{
			\!\!\!\Set{\pV \;|\;  \forall\acta,\pVV,\sV\cdot (\pV  \wtra{\acta} \pVV  \text{ and } \mtch{\pate}{\acta}{=}\sV \text{ and } \ceval{\bV\sV\!}{\!\boolT}) \text{ implies } \pVV{\,\in\,}\hSemS{\hV\sV,\rho}\!}
      % }
		%	\\[1mm]
		%\hSemS{\hSuf{\actSN{\pate}{\bV}}{\hV},\rho}  & \defEquals&
		%	\multicolumn{7}{l}{
		%	\!\!\!\Set{\pV \;|\; \exists\acta,\pVV,\sV\cdot(\pV  \wtra{\acta} \pVV   \text{ and } \mtch{\pate}{\acta}{=}\sV \text{ and } \ceval{\bV\sV}{\boolT} \text{ and } \pVV \in \hSemS{\hV\sV,\rho}) \!}
		%}
	\end{align*}
	% \]
  }
	}
  % \vspace{-5mm}
	\caption{The syntax and semantics for \SHML, the safety fragment of the branching-time logic \recHML\cite{Larsen1990}.}
	\label{fig:recHML} 
  % \vspace{-4mm}
\end{figure}

\medskip
\noindent
\emph{The Logic.} The behavioral properties we consider are described using \SHML~\cite{Aceto1999TestingHML,Francalanza2017FMSD}, a subset of the value passing \recHML \cite{Rathke1997,Hennessy1995} that uses \emph{symbolic actions} of the form \actSN{\pate}{\bV} consisting of an action pattern \pate and a condition \bV. 
Symbolic actions facilitate reasoning about LTSs with infinitely many actions (\eg inputs or outputs carrying data from infinite domains).
They abstract over concrete actions using \emph{data variables} $\dvV,\dvVV,\dvVVV\in\DVars$ that occur free in the constraint \bV or as binders in the pattern \pate. 
% denoted as $\dbV$.
Patterns are subdivided into input \patInV and output \patOutV patterns where \dbV binds the information about the port on which the interaction has occurred, whereas \dbVV binds the payload;
$\bv{\pate}$ denotes the set of binding variables in \pate whereas \fv{\bV} represents the set of free variables in condition \bV.
We assume a (partial) \emph{matching function}  $\mtch{\pate}{\acta}$ that (when successful) returns the (smallest) substitution $\sV:\DVars \rightharpoonup (\Port \cup \Val)$, mapping bound variables in \pate to the corresponding values in \acta ; by replacing every occurrence \dbV in \pate with $\sV(\dvV)$ we get the matched action \acta.
The \emph{filtering condition}, \bV, 
% may refer to variables bound in \pate and it 
is evaluated \wrt the substitution returned by successful matches, written as $\ceval{\bV\sV\!}{\!v}$ where $v\in\set{\boolT,\boolF}$.

Whenever a symbolic action $\actSNV{}$ is \emph{closed}, \ie $\fv{\bV}{\,\subseteq\,}\bv{\pate}$, it denotes the \emph{set} of actions $\hSemS{\actSNV{}}{\,\defeq\,}\Setdef{\acta}{\exists\sV\cdot\mtch{\pate}{\acta}{\,=\,}\sV \andt \ceval{\bV\sV\!}{\!\boolT}}$.
For example, we can have $\hSemS{\bigl(\patOutV,\; (x = \pidV \vee x = \pidVV) \;\wedge\; y \geq 3 \bigr)} = \{ \actOut{\pidV}{3}, \actOut{\pidVV}{3}, \actOut{\pidV}{4}, \actOut{\pidVV}{4}, \actOut{\pidV}{5}, \actOut{\pidVV}{5},\actOut{\pidV}{6}, \actOut{\pidVV}{6}, \ldots  \}.$
%  and allows more adequate reasoning about LTSs with infinitely many actions (\eg inputs or outputs carrying data from infinite domains).
%
Following standard (concrete) value-passing LTS semantics~\cite{Milner1992CCS,Hennessy1996CCS}, our systems have \emph{no control} over the data values supplied via inputs. 
Accordingly,  we assume a well-formedness constraint where the condition \bV of an input symbolic action, \actSN{\patInV}{\bV}, \emph{cannot} restrict the values of binder \dvVV, \ie $\dvVV{\,\notin\,}\fv{\bV}$.
%
% Put differently, for a closed input symbolic action $\actSN{\patInV}{\bV}$, if $\sV$ and $\sV'$ are substitutions that agree on \dvV then $\ceval{\bV\sV}{\boolT}$ iff $\ceval{\bV\sV'}{\boolT}$.
%
% By contrast, as the system produces the output values itself, a closed output symbolic action, \actSN{\patOutV}{\bV}, may also describe \emph{specific} output values by restricting the possible values of \dvVV in \bV. 
%
As a shorthand, whenever a condition in a symbolic action equates a bound variable to a specific value we embed the equated value within the pattern, 
 \eg $\actSN{\patOutV}{\dvV{\,=\,}\pidV\land\dvVV{\,=\,}3}$, 
 \actSN{\patInV}{\dvV{\,=\,}\pidV} 
and 
\actSN{\patInV}{\dvV{\,=\,}\dvVVV} become 
\actSN{\actOut{\pidV}{3}}{\boolT}, \actSN{\actIn{\pidV}{\dbVV}}{\boolT} and \actSN{\actIn{\dvVVV}{\dbVV}}{\boolT} \resp ; we also elide \boolT conditions, and occasionally just write \actSet{\actOut{\pidV}{3}} and \actSet{\actIn{\pidV}{\dbVV}} in lieu of \actSN{\actOut{\pidV}{3}}{\boolT} and \actSN{\actIn{\pidV}{\dbVV}}{\boolT} when the meaning of this shorthand can be inferred from the context.

%
%It therefore denotes the set of output actions produced by the system, \ie $\Setdef{\!\actOutV\!}{\!\actOutV{\,\in\,}\OAct\cdot\mtch{\patOutV}{\actOutV}{\,=\,}\subb{\SubE{\pidV}{\dvV},\SubE{\vV}{\dvVV}} \andt \ceval{\bV\subb{\SubE{\pidV}{\dvV},\SubE{\vV}{\dvVV}}\!}{\!\boolT}\!\!}$\!.
%
%\begin{example} Symbolic action $\actSN{\patInV}{\dvVV{=}1}$ is valid since $(\fv{\dvVV{=}1}=\set{\dvVV})\subseteq(\bv{\patInV}=\set{\dvV,\dvVV})$, but actions $\actSN{\actIn{\dbV}{\dvVV}}{\dvVV=1}$ and $\actSN{\actIn{\dbV}{1}}{\dvVV=1}$ are invalid since $\fv{\dvVV{=}1}\nsubseteq(\bv{\actIn{\dbV}{\dvVV}}=\set{\dvV})$. \qed
%\end{example}
%
%\noindent Two symbolic actions, $\actSN{\pate_1}{\bV_1}$ and $\actSN{\pate_2}{\bV_2}$, are said to be \emph{equivalent} when $\hSemS{\actSN{\pate_1}{\bV_1}}=\hSemS{\actSN{\pate_2}{\bV_2}}$, and \emph{pattern equivalent} when $\hSemS{\actSN{\pate_1}{\boolT}}=\hSemS{\actSN{\pate_2}{\boolT}}$.
%

%
\Cref{fig:recHML} presents the \SHML syntax for some countable set of logical variables $\hVarX,\hVarY\!\in\!\LVars$.
The construct $\hBigAnd{i\in\IndSet}{\hV_i}$ describes a \emph{compound} conjunction,  $\hAnd{{\hV_1}}{\hAnd{\ldots}{{\hV_n}}}$, where $\IndSet=\sset{1,..,n}$  is a finite set of indices. 
The syntax also permits recursive properties using greatest fixpoints, \hMaxXF, which bind free occurrences of \hVarX in \hV.
The central construct is the (symbolic) universal modal operator, \hNec{\pate,\bV}{\hV}, where the binders \bv{\pate} bind the free data variables in \bV and \hV.
We occasionally use the notation \dbVdc to denote ``don't care'' binders in the pattern \pate, whose bound values are not referenced in \bV and \hV.
We also assume that all fixpoint variables, $\hVarX$, are guarded by modal operators.
%  by a modal necessity (\eg $\hMax{\hVarX}{(\hNec{\acta}\hFls \hAndS \hVarX)}$ is invalid, unlike $\hMax{\hVarX}{(\hNec{\actb}\hFls\hAndS\hNec{\acta}\hVarX)}$ in which $\hVarX$ is guarded by $\hNec{\acta}$).

%
Formulas in \SHML are interpreted over the system powerset domain where $S{\in}\pset{\Sys}$.
The semantic definition of \Cref{fig:recHML}, \hSemS{\hV,\rho},  is given for \emph{both} open and closed formulas. It employs a valuation  from logical variables to sets of states, $\rho\in(\LVars \rightarrow \pset{\Sys})$, which permits an inductive definition on the structure of the formulas;
$\rho'=\rho[\hVarX\mapsto S]$ denotes a valuation where $\rho'(\hVarX) = S$ and  $\rho'(\hVarY) = \rho(\hVarY)$ for all other $\hVarY\neq \hVarX$.
The only non-standard case is that for the universal modality formula, \hNec{\pate,\bV}{\hV}, which is satisfied by any system that either \emph{cannot} perform an action \acta that matches \pate while satisfying condition \bV, or for any such matching action \acta with substitution \sV, its derivative state satisfies the continuation $\hV\sV$.
%The only non-standard cases are those for the modal formulas, as we use \SAs.
We consider formulas modulo associativity and commutativity of $\hAndS$, and unless stated explicitly, we assume \emph{closed} formulas, \ie without free logical and data variables. 
Since the interpretation of a closed \hV is independent of the valuation $\rho$ we write $\hSemS{\hV}$ in lieu of $\hSemS{\hV,\rho}$.
A system \pV \emph{satisfies} formula \hV whenever $\pV{\,\in\,}\hSemS{\hV}$, and a formula \hV is \emph{satisfiable}, when %there exists a system \pVV such that $\pVV\in\hSemS{\hV}$, \ie
$\isSatF$.

We find it convenient to define the function \afterFS, describing how an \SHML formula \emph{evolves} in reaction to an action \actu.
Note that, for the case $\hV=\hNec{\pate,\bV}{\hVV}$, the formula returns $\hVV\sV$  when \actu matches successfully the symbolic action $(\pate,\bV)$ with \sV, and \hTru otherwise, to signify a trivial satisfaction.
\begin{defi} 
  We define the function $\afterFS:(\SHML{\times}\Act{\,\cup}\set{\actt})\,{\rightarrow}\,\SHML$ as:
  {	
    \normalfont
    \begin{align*}
    \afterF{\hV}{\acta} &\defeq 
    \begin{xbrace}{c@{\quad}l}
    \hV & \text{if }\hV{\,\in\,}\Set{\hTru,\hFls}\\[2mm]
    \afterF{\hV'\sub{\hV}{\hVarX}}{\acta} & \text{if }\hV{\,=\,}\hMaxX{\hV'}\\[2mm]
    \hBigAndU{i\in\IndSet}\afterF{\hV_i}{\acta} & \text{if }\hV{\,=\,}\hBigAnd{i\in\IndSet}\hV_i \\[2mm]
    \hVV\sV & \hspace{-8mm}\text{if }\hV{\,=\,}\hNec{\pate,\bV}{\hVV} \andt \exists\sV{\cdot}(\mtch{\pate}{\acta}{=}\sV \land \ceval{\bV\sV\!}{\!\boolT}) \\[2mm]
    \hTru & \hspace{-8mm}\text{if }\hV{\,=\,}\hNec{\pate,\bV}{\hVV} \andt \nexists\sV{\cdot}(\mtch{\pate}{\acta}{=}\sV \land \ceval{\bV\sV\!}{\!\boolT})
    \end{xbrace}\\
    \afterF{\hV}{\actt} &\defeq \hV  \tag*{\exqed}
    \end{align*}
  }
  \end{defi}
  We  abuse notation and 
 lift the \afterFS function to (explicit) traces in the obvious way, \ie $\afterF{\hV}{\txr}$ is equal to $\afterF{\afterF{\hV}{\actu}}{\txrr}$ when $\txr=\actu\txrr$ and to $\hV$ when $\txr=\txrE$.
 Our definition of \afterFS is justified vis-a-vis the semantics of \Cref{fig:recHML} via \Cref{thm:sem-just-after}; it will play a role later on when defining our notion of enforcement in \Cref{sec:enforcement}.

\begin{rem} \label{rem:after-welldef}
	The function $\afterFS$ is well-defined due to our assumption that formulas are guarded, guaranteeing that $\hV'\sub{\hV}{\hVarX}$ has fewer top level occurrences of greatest fixpoint operators than $\hMaxX{\hV'}$. \exqed
\end{rem}
\begin{prop} \label[proposition]{thm:sem-just-after}
	For every system state \pV, formula \hV and action \acta, if $\pV{\,\in\,}\hSemS{\hV}$ and $\pV{\wtraS{\acta}}\pV'$ then $\pV'{\,\in\,}\hSemS{\afterF{\hV}{\acta}}$. \qed
\end{prop}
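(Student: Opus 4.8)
The plan is to argue by induction, but not by plain structural induction on \hV: since \afterFS unfolds fixpoints via $\afterF{\hV'\sub{\hMaxX{\hV'}}{\hVarX}}{\acta}$, the recursive argument need not be a subformula of \hV. Instead I would induct on the very well-founded measure that makes \afterFS terminate (\Cref{rem:after-welldef}): the lexicographic order given first by the number of top-level greatest-fixpoint operators of \hV and then by the structural size of \hV. The case analysis then mirrors the clauses defining \afterFS.

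The base and easy inductive cases are immediate from the semantics of \Cref{fig:recHML}. If $\hV=\hTru$ then $\afterF{\hV}{\acta}=\hTru$ and $\pV'\in\Sys=\hSemS{\hTru}$. If $\hV=\hFls$ then $\hSemS{\hFls}=\emptyset$, so the premise $\pV\in\hSemS{\hV}$ is false and the implication holds vacuously. If $\hV=\hBigAnd{i\in\IndSet}\hV_i$ then $\pV\in\hSemS{\hV}=\bigintersectU{i\in\IndSet}\hSemS{\hV_i}$ gives $\pV\in\hSemS{\hV_i}$ for every $i$; each $\hV_i$ is structurally smaller with no more top-level fixpoints, so the induction hypothesis yields $\pV'\in\hSemS{\afterF{\hV_i}{\acta}}$ for every $i$, whence $\pV'\in\hSemS{\hBigAndU{i\in\IndSet}\afterF{\hV_i}{\acta}}=\hSemS{\afterF{\hV}{\acta}}$.

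The crux among the non-inductive cases is the necessity $\hV=\hNec{\pate,\bV}{\hVV}$, where the semantics of the universal modality is used directly. Since matching returns the unique smallest substitution, there are exactly two possibilities for the observed action \acta along $\pV\wtraS{\acta}\pV'$. If $\mtch{\pate}{\acta}=\sV$ for some \sV with $\ceval{\bV\sV}{\boolT}$, then instantiating the defining condition of $\hSemS{\hNec{\pate,\bV}{\hVV}}$ with this \acta, $\pV'$ and \sV gives $\pV'\in\hSemS{\hVV\sV}=\hSemS{\afterF{\hV}{\acta}}$. Otherwise no such matching \sV exists, $\afterF{\hV}{\acta}=\hTru$, and $\pV'\in\Sys=\hSemS{\hTru}$ trivially.

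The main obstacle is the fixpoint case $\hV=\hMaxX{\hV'}$, and it is here that the choice of induction measure pays off. I would first invoke the standard unfolding identity for greatest fixpoints, $\hSemS{\hMaxX{\hV'}}=\hSemS{\hV'\sub{\hMaxX{\hV'}}{\hVarX}}$ (a consequence of the Knaster--Tarski characterisation underlying \Cref{fig:recHML}), so that $\pV\in\hSemS{\hV}$ rewrites to $\pV\in\hSemS{\hV'\sub{\hMaxX{\hV'}}{\hVarX}}$. Guardedness guarantees that $\hV'\sub{\hMaxX{\hV'}}{\hVarX}$ has strictly fewer top-level fixpoint operators than $\hMaxX{\hV'}$, hence is strictly smaller in the chosen order; applying the induction hypothesis to it yields $\pV'\in\hSemS{\afterF{\hV'\sub{\hMaxX{\hV'}}{\hVarX}}{\acta}}$, which is exactly $\hSemS{\afterF{\hV}{\acta}}$ by the fixpoint clause of \afterFS. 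The only subtle point to get right is confirming that the measure genuinely decreases at each recursive call --- strictly in the fixpoint case via guardedness, and in the lexicographic sense in the conjunction case --- thereby legitimising the induction.
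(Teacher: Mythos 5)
You should first note that the paper does not actually prove \Cref{thm:sem-just-after}: the proposition is stated with a terminal QED marker and no argument, so there is no official proof to compare against. Judged on its own merits, your proof is correct, and it is essentially the argument the authors leave implicit. Your induction measure --- lexicographically, the number of top-level greatest-fixpoint operators, then structural size --- is precisely the measure that \Cref{rem:after-welldef} invokes to make \afterFS well-defined, so every defining clause of \afterFS yields a legitimate inductive step: conjuncts do not increase the fixpoint count and strictly decrease the size, while guardedness makes the count drop strictly at an unfolding. The necessity case rightly needs no induction, and because the semantic clause for $\hNec{\pate,\bV}{\hVV}$ is itself phrased in terms of weak transitions, the $\actt$-prefix inside $\pV\wtraS{\acta}\pV'$ is absorbed without any separate $\actt$-closure lemma. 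Two points deserve to be made explicit for full rigour: (i) the unfolding identity $\hSemS{\hMaxX{\hV'}}=\hSemS{\hV'\sub{\hMaxX{\hV'}}{\hVarX}}$ is not definitional in \Cref{fig:recHML}; it follows from Knaster--Tarski only via monotonicity of the induced functional (which holds because \SHML has no negation) together with a substitution lemma $\hSemS{\hVVV\sub{\hVV}{\hVarX},\rho}=\hSemS{\hVVV,\rho[\hVarX\mapsto\hSemS{\hVV,\rho}]}$, which itself requires a routine structural induction; (ii) your case analysis has no clause for $\hV=\hVarX$ --- correctly so, since \afterFS has none --- but this is only licensed because the proposition concerns closed formulas and each of your recursive calls (the conjuncts of a closed conjunction, the unfolding of a closed fixpoint, and the instantiation $\hVV\sV$ in the necessity case) preserves closedness; that observation is worth a sentence.
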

%
% A proof for this theorem is given in \Cref{sec:appendix}.
% %
% \noindent We abuse notation and lift the \afterFS function to (explicit) traces in the obvious way, \ie $\afterF{\hV}{\txr}$ is equal to $\afterF{\afterF{\hV}{\actu}}{\txrr}$ when $\txr=\actu\txrr$ and to $\hV$ when $\txr=\txrE$.

\begin{exa} \label[example]{ex:shml-formula-bi} 
	The \emph{safety} property $\hV_1$ \emph{repeatedly} requires that \emph{every} input request that is made on a port that is \emph{not} \pidVV, cannot be followed by another input on the same port in succession.
	However, following this input it allows a \emph{single} output answer on the same port in response, followed by the logging of the serviced request by outputting a notification on a dedicated port \pidVV. 
	We note how the channel name bound to $x$ is used to constrain sub-modalities. 
  Similarly, values bound to $\dvVVA$ and $\dvVVB$ are later referenced in condition $\dvVVC{\,=\,}\loggTuple{\dvVVA}{\dvVVB}$.
	\begin{align*}
		\hV_1&\defeq\hVdef \\
		\hV'_1&\defeq\hVdefAP
	\end{align*}	
	%
	 %to specify the type of data output by the system.
	%
	% The formula is violated by two consecutive inputs on the same port \dvV, %\ie $\hNec{\pidVInAPat}\hNec{\pidVInBPat}\hFls$, 
	% and when a request is serviced with multiple answers. %, \ie $\hNec{\pidVOutAPat}\hNec{\pidVOutBPat}\hFls$; 
	% %
	% An answer output followed by a log action on port \pidVV is normal, and thus the formula recurses. %, \ie $\hNec{\pidVOutAPat}\hNec{\pidVVOutClogPat}\hVarX$.
	% %
	%
	% \noindent 
  Consider the systems \pVg, \pVbo and \pVbi:
	\begin{align*}
		\pVg&\defeq\pVgdef\\ 
     & \text{(where $\pVcls{\,\defeq\,}\pVclsdef$)} \\
		\pVbo&\defeq\pVbodef \\
    \pVbi&\defeq\pVbidef 
	\end{align*}
	The system \pVg implements a \emph{request-response} server that repeatedly inputs values (for some domain \Val) on port \pidV, $\actReq$, for which it internally computes an answer and assigns it to the data variable \dvVV, $\assign{\dvVV}{\ans(\dvV)}$. 
	Subsequently, it outputs the answer on port \pidV in response to each request, $\actAns$, and then logs the serviced request pair of values by outputting the triple $\logg$ on port $\pidVV$, $\actLog$.
	It terminates whenever it receives a close request $\cls$ from port $\pidVV$, \ie $\actCls$ when $\dvVVV{\,=\,}\cls$.
	Systems \pVbo and \pVbi are similar to \pVg but define additional behaviour: \pVbi requires a startup input, \actIn{\pidV}{\dvVV}, before behaving as \pVg, whereas \pVbo occasionally provides a redundant (underlined) answer prior to logging a serviced request.
	
	Using the semantics of \Cref{fig:recHML}, one can verify that the first system satisfies our correctness property $\hV_1$, \ie $\pVg \in \hSemS{\hV_1}$.  
  However the second system \pVbo does not satisfy this property because it can inadvertently answer twice a request, \ie  $\pVbo \notin \hSemS{\hV_1}$ since we have $\pVbo\wtraS{\actInPidVvVA.\actAnsPidVvVA.\actAnsPidVvVA}$ (for some value  $\vVA$).
  Analogously, the third system \pVbi violates property $\hV_1$ because it can accept two consecutive inputs on port \pidV (without answering the preliminary request first), \ie 
  $\pVbi{\,\notin\,}\hSemS{\hV_1}$ since $\pVbi\wtraS{\actInPidVvVA.\actInPidVvVB}$  (for any pair of values  $\vVA$ and $\vVB$). \exqed 
\end{exa}

%
% \begin{example}
% 	When applied to $\hV_1$ of \Cref{ex:shml-formula-bi} in relation to $\txr^1=\prf{\actInPidVvVA}\prf{\actt}\actAnsPidVvVA$, formula $\afterF{\hV_1}{\txr^1}$ denotes $(\hAnd{\hNec{\actSet{\actOut{\pidV}{\dbVdc}}}\hFls\,}{\,\hNec{\actSN{\actOut{\pidVV}{\dbVVC}}{\dvVVC{=}\loggTuple{\vVA}{\ans(\vVA)}}}\hV_1})$ and when applied to trace $\prf{\actInPidVvVA}\actInPidVvVB$ it evolves into \hFls. \qed
% \end{example}

%
 
%

\section{A Bidirectional Enforcement Model}
\label{sec:model}
% !TEX root = journal.tex

Bidirectional enforcement seeks to transform the entire (visible) behaviour of the \sus in terms of output actions  (instigated by the \sus itself, which in turn controls the payload values being communicated) and input actions (originating from the interacting environment which chooses the payload values); this contrasts with unidirectional approaches that only modify output traces.
%
% When changing the behaviour of a system (and not just a single trace) 
In this richer setting, it helps to differentiate between the transformations performed by the monitor (\ie insertions, suppressions and replacements), and the way they can be used to affect the resulting behaviour of the composite system.
In particular, we say that: 
\begin{itemize}
  \item an action that can be performed by the \sus has been \emph{disabled} when it is no longer visible in the resulting composite system (consisting of the \sus and the monitor);
  \item an action is \emph{enabled} when the composite system can execute it while the \sus cannot;
  \item an action is \emph{adapted} when either its payload  differs from that of the composite system, or when the action is rerouted through a different port.
\end{itemize}

%

% \footnote{We shall not consider the case where an input (\resp output) action is changed to an output (\resp input)action.}
%
% However, since inputs and outputs are fundamentally different, the type of the action itself cannot be adapted, that is, an input cannot become an output, and vice versa.
%

\begin{figure}
  % [t]
	\centering
	\begin{tikzpicture}[scale=1, transform shape]
	%Rectangles
	\draw[fill=gray] (-1.8,-2.6) rectangle (1.8,2); %% Monitor
	\draw (-5.0,-2.6) rectangle (-4,2);  %% System
	\draw (4.5,-2.6) rectangle (5.5,2);  %% Environment
	\draw[densely dashed] (-5.1,-2.7) rectangle (1.9,2.1); %% Composite system
	%Names
	\node at (0,2.4) {Monitor};
	\node at (-4.5,2.4) {System};
	\node at (4.8,2.4) {Environment};
	%System Nodes
	
	\node (sa) at (-4,1.5) {\hspace{-1.5em}(a)};	
	\node (sb) at (-4,0.9) {\hspace{-1.5em}(b)};
	\node (sc) at (-4,0.3) {\hspace{-1.5em}(c)};
	\node (sd) at (-4,-0.3) {\hspace{-1.5em}(d)};
	\node (se) at (-4,-0.9) {\hspace{-1.5em}(e)};	
	\node (sf) at (-4,-1.5) {\hspace{-1.5em}(f)};
	\node (sg) at (-4,-2.1) {\hspace{-1.5em}(g)};
	
	%Mon Left Nodes
	\node (mla) at (-1.75,1.5) {};	
	\node (mlb) at (-1.75,0.9) {};
	\node (mlc) at (-1.75,0.3) {};
	\node (mld) at (-1.75,-0.3) {};
	\node (mle) at (-1.75,-0.9) {};
	\node (mlf) at (-1.75,-1.5) {};		
	\node (mlg) at (-1.75,-2.1) {};
	
	%Mon Right Nodes
	\node (mra) at (1.7,1.5) {};	
	\node (mrb) at (1.7,0.9) {};
	\node (mrc) at (1.7,0.3) {};
	\node (mrd) at (1.8,-0.3) {};
	\node (mre) at (1.8,-0.9) {};
	\node (mrf) at (1.8,-1.5) {};	
	\node (mrg) at (1.8,-2.1) {};
	
	% Monitor Internal Nodes
	\node[text width=8.5em, align=center,draw, fill=white] (mia) at (0,1.5) {\scriptsize suppress output};
	\node[text width=8.5em, align=center,draw, fill=white] (mib) at (0,0.9) {\scriptsize modify output};
	\node[text width=8.5em, align=center,draw, fill=white] (mic) at (0,0.3) {\scriptsize insert output};
	\node[text width=8.5em, align=center,draw, fill=white] (mid) at (0,-0.3) {\scriptsize block input};
	\node[text width=8.5em, align=center,draw, fill=white] (mie) at (0,-0.9) {\scriptsize 
  % block and 
  \nw{insert input}};
	\node[text width=8.5em, align=center,draw, fill=white] (mif) at (0,-1.5) {\scriptsize modify input};
	\node[text width=8.5em, align=center,draw, fill=white] (mig) at (0,-2.1) {\scriptsize suppress input};
	
	%Env Nodes
	\node (ea) at (4.5,1.5) {};
	\node (eb) at (4.5,0.9) {};
	\node (ec) at (4.5,0.3) {};
	\node (ed) at (4.5,-0.3) {};
	\node (ee) at (4.5,-0.9) {};
	\node (ef) at (4.5,-1.5) {};
	\node (eg) at (4.5,-2.1) {};
	
	% Solid Arrows (Actual Action)
	\foreach \from/\to/\val in {sa/mla/{output}, sb/mlb/{output}, ef/mrf/{input},  eg/mrg/{enabled input}}
	\draw [->] (\from) -- (\to) node[pos=.5,above] {\scriptsize\val};
	
	% Crossed Arrows (Disabled/absent Action)
	\foreach \from/\to/\val in {mra/ea/{disabled output}, sc/mlc/{no output}, mld/sd/{no input}, ed/mrd/{disabled input}, ee/mre/{disabled input}, mlg/sg/{no input}}
	\draw [->,cross it, gray] (\from) -- (\to) node[pos=.5,above] {\scriptsize\val};
	
	% Squigglie Arrows (Modified Action)
	\foreach \from/\to/\val in { mrb/eb/{\;modified output}, mrc/ec/{enabled output},  mle/se/{default input}, mlf/sf/{modified input} }
	\draw [->,snake it] (\from) -- (\to) node[pos=.5,above] {\scriptsize\val};
	
\end{tikzpicture}
	\caption{Disabling, enabling and adapting bidirectional runtime enforcement via suppressions, insertions and replacements.}
	\label{fig:en-dis-act}
  % \vspace{-4mm}
\end{figure}

We argue that implementing action enabling, disabling and adaptation differs according to whether the action is an input or an output; see 
 \Cref{fig:en-dis-act}. 
%  we illustrate our proposed instrumentation setup that implements them by using the monitor's existing transformations. 
%
Enforcing actions instigated by the \sus---such as outputs---is more straightforward. 
\Cref{fig:en-dis-act}(a), (b) and (c) \resp state that disabling an output can be achieved by suppressing it, adapting an output amounts to replacing the payload or redirecting it to a different port, whereas output enabling  can be attained via an insertion. 
However, enforcing actions instigated by the environment such as inputs is harder. 
%For the instrumentation 
%
In \Cref{fig:en-dis-act}(d), we propose to \emph{disable} an input by concealing the input port.
Since this may block the \sus from progressing, the instrumented monitor may additionally \emph{insert} a default input to unblock the system \nw{waiting to input on the channel used for the insertion},
% \footnote{The added benefits of this mechanism are further discussed in the forthcoming sections.},  
\Cref{fig:en-dis-act}(e),
\nw{in cases where the environment fails to provide the corresponding output}.
Input \emph{adaptation}, \Cref{fig:en-dis-act}(f), is also attained via a \emph{replacement}, albeit applied in the opposite direction to the output case.
%
% In fact, it modifies the data received by the monitor over some port, and forwards it to the \sus over the same (or a different) port.
%
Inputs can also be \emph{enabled} whenever the \sus is unable to carry them out, \Cref{fig:en-dis-act}(g), by having the monitor accept the input in question and then \emph{suppress} it. Note that, from the perspective of the environment, the input would still be effected.

\begin{figure}
  % [tb]
	\noindent\textbf{\textsf{Syntax}}
	\begin{align*}
	\eV,\eVV\in\Trn
	&\bnfdef \; \eTrns{\pate}{c}{\pate'}{\eV} \;
	\bnfsepp   \chBigText{i\in\IndSet}\, \eV_i \; (\IndSet\text{ is a finite index set}) \;
	\bnfsepp  \rec{\rV}{\eV} \;
	\bnfsepp  \rV\\
	\end{align*}
  \\
	\noindent\textbf{\textsf{Dynamics}}
  % \vspace{-2mm}
	\begin{mathpar}
		%		\inference[\rtit{eId}]{ }{\eIden \traS{\ioact{\actu}{\actu}} \eIden }
		%		\and
		\inference[\rtit{eSel}]{\eV_j \traSS{\actggp} \eVV_j}{
			\chBigText{i\in\IndSet}\,\eV_i \traSS{\actggp} \eVV_j}[$j{\in}\IndSet$]
		\and
		\inference[\rtit{eRec}]{\eV\sub{\rec{\rV}{\eV}}{\rV} \traS{\actggp} \eVV }{\rec{\rV}{\eV} \traS{\actggp} \eVV}
		\and
		\inference[\rtit{eTrn}]{
			% \mtchS{\actSN{\pate}{\bV}}{\acta}{\,=\,}\sV
			\mtch{\pate}{\actg} = \sV
			&&
			\ceval{\bV\sV}{\boolT}
			&&
			\actgp{\,=\,}\actp\sV
		}{\eTrns{\pate}{c}{\actp}{\eV} \traS{\actggp} \eV\sV}
		\\
		% \vspace{-4mm}
	\end{mathpar} 
  \\
	\noindent\textbf{\textsf{Instrumentation}} 	
  % \vspace{-1mm}
	\begin{mathpar}
		\inference[\rtit{biTrnO}]
		{\pV \traS{\actOutVV}\pV' & \eV\tra{\ioact{\actOutVVB}{\actOutVB}}\eVV}
		{ \eBI{\eV}{\pV} \traS{\actOutV} \eBI{\eVV}{\pV'}}
		\and
		\inference[\rtit{biTrnI}]
		{\eV\tra{\ioact{\actInVB}{\actInVVB}}\eVV & \pV \traS{\actInVV}\pV' }
		{ \eBI{\eV}{\pV} \traS{\actInV} \eBI{\eVV}{\pV'}}
		\\
		\inference[\rtit{biDisO}]
		{\pV \traS{\actOutV}\pV' & \eV\traS{\ioact{(\actOutV)}{\actdot}}\eVV}
		{ \eBI{\eV}{\pV} \traS{\actt} \eBI{\eVV}{\pV'}}
		\and
		\inference[\rtit{biDisI}]
		{\eV\traS{\ioact{\actdot}{(\actInV)}}\eVV & \pV \traS{\actInV}\pV'  }
		{ \eBI{\eV}{\pV} \traS{\actt} \eBI{\eVV}{\pV'}}
		\\
		%\and
		%
		\inference[\rtit{biEnO}]
		{%\pV \ntra{\actOutV} & 
			\eV\traS{\ioact{\actdot}{(\actOutV)}}\eVV}
		{ \eBI{\eV}{\pV} \traS{\actOutV} \eBI{\eVV}{\pV}}
		\and
		\inference[\rtit{biEnI}]
		{\eV\traS{\ioact{(\actInV)}{\actdot}}\eVV % & \pV \ntra{\actInV} 
		}
		{ \eBI{\eV}{\pV} \traS{\actInV} \eBI{\eVV}{\pV}}
		\and
		\inference[\rtit{biAsy}]
		{ \pV \traS{\actt}\pV' }
		{ \eBI{\eV}{\pV} \traS{\actt} \eBI{\eV}{\pV'} }
		\and
		\inference[\rtit{biDef}]
		{\pV \traS{\actOutV}\pV'
			& \eV\ntra{\actOutV}
			& \forall\,\pidVV{\in}\Port,\vVV{\in}\Val\cdot\eV\ntra{\ioact{\actdot}{\actOutVV}}
		}
		{ \eBI{\eV}{\pV} \traS{\actOutV} \eBI{\eIden}{\pV'}}
	\end{mathpar}
	% where $\eIden$ is shorthand for $\eIdenFull$ and $\eV{\!\!\ntraSS{\actg}}$ means $\nexists\actgp,\eVV{\cdot}\eV{\tra{\actggp}}\eVV$.
	\caption[]{A bidirectional instrumentation model for enforcement monitors.}
	\label{fig:mod-bi-re}
  % \vspace{-5mm}
\end{figure}

\Cref{fig:mod-bi-re} presents an operational model for the bidirectional instrumentation proposal of \Cref{fig:en-dis-act} in terms of (symbolic) transducers. 
A variant of these transducers was originally introduced in \cite{Cassar2018Concur} for unidirectional enforcement.  
Transducers, $\eV,\eVV{\,\in\,}\Trn$, are monitors that define \emph{symbolic transformation triples}, \actSTN{\pate}{\bV}{\actp}, consisting of an action \emph{pattern} \pate, \emph{condition} \bV, and a \emph{transformation action} $\actp$. 
Conceptually, the action pattern and condition determine the range of system (input or output) actions upon which the transformation should be applied, while the transformation action specifies the transformation that should be applied. %, \ie a suppression, replacement or insertion.
The symbolic transformation pattern \pate is an extended version of those definable in symbolic actions, that may also include \actdot;
when $\pate{\,=\,}\actdot$, it means that the monitor can act independently from the system to insert the action specified by the transformation action. 
Transformation actions are possibly open actions (\ie actions with possibly free variable such as \actIn{\dvV}{\vV} or \actOut{\pidV}{\dvV}) or the special action \actdot; the latter represents the suppression of the action specified by \pate.
We assume a well-formedness constraint where, for every \prf{\actSTN{\pate}{\bV}{\actp}}{\eV},  $\pate$ and $\actp$ cannot both be \actdot, and when neither is, they are of the \emph{same} type \ie an input (\resp output) pattern and action.
Examples of well-formed symbolic transformations are:
\begin{itemize}
  \item $\actSTN{\actdot}{\boolT}{\actInV}$,  inserting an input on port \pidV with value \vV;
  \item $\actSTN{\patOutV}{y \geq 5}{\actdot}$,  suppressing an output action carrying a payload that is greater or equal to 5; and 
  \item $\actSTN{\patOutV}{x = \pidVV}{\actOut{\pidV}{y}}$, redirecting (\ie adapting) outputs on port \pidVV carrying the payload $y$ (learnt dynamically at runtime) to port \pidV.
\end{itemize}

%
% For instance, symbolic transformations \actSTN{\actdot}{\boolT}{\actInV} and \actSTN{\patOutV}{\boolT}{\actdot} are valid since only one of their patterns is \actdot, and so is \actSTN{\patOutV}{\boolT}{\actOutV} since both patterns are output patterns. 
%
% This constraint ensures that input actions cannot be adapted into outputs and vice versa.
%
% It is crucial since inputs and outputs are instigated by different entities, namely, the environment and the \sus respectively.
%

The monitor transition rules in \Cref{fig:mod-bi-re} assume closed terms, \ie every \emph{transformation-prefix transducer} of the form \prf{\actSTN{\pate}{\bV}{\actp}}{\eV} must obey the closure constraint stating that $\bigl(\fv{\bV} {\,\cup\,} \fv{\actp} {\,\cup\,} \fv{\eV}\bigr) {\,\subseteq\,} \bv{\pate}$. A similar closure requirement is assumed for recursion variables \rV and \rec{\rV}{\eV}.
%  {\,\cup\,} \mathcal{V}$, where $\mathcal{V}$ is the set of variables that are bound by priorly defined transformation prefixes.
%
Each transformation-prefix transducer yields an LTS with labels of the form \ioact{\actg}{\actgp}, where $\actg,\actgp\in(\Act\,{\cup}\sset{\actdot})$. %and $\actdot$ is a monitor action. 
Intuitively, transition $\eV \traS{\ioact{\actg}{\actgp}} \eVV$
denotes the way that a transducer in state \eV \emph{transforms} the action $\actg$ into $\actgp$ while transitioning to state \eVV.
The transducer action \ioact{\acta}{\actb} represents the \emph{replacement} of \acta by \actb, \ioact{\acta}{\acta} denotes the
% obvious
\emph{identity} transformation,
whereas \ioact{\acta}{\actdot} and \ioact{\actdot}{\acta} respectively denote the \emph{suppression} and \emph{insertion} transformations of action \acta. 
%
%In the former, \actdot signifies the removal of action \acta from the execution of the monitored system, while in the latter it represents a monitor transition that introduces an action \acta that was not induced by the system.
%
%
The key transition rule in \Cref{fig:mod-bi-re} is \rtit{eTrn}.
It states that the transformation-prefix transducer $\prf{\actSTN{\pate}{\bV}{\actp}}{\eV}$ transforms action $\actg$ into a (potentially) different action \actgp and reduces to state $\eV\sV$, whenever \actg matches pattern \pate, \ie $\mtch{\pate}{\actg}{=}\sV$, and satisfies condition \bV, \ie \ceval{\bV\sV}{\boolT}.
Action \actgp results from instantiating the free variables in $\actp$ as specified by \sV, \ie $\actgp{=}\actp\sV$.
The remaining rules for selection (\rtit{eSel})  and recursion (\rtit{eRec}) are standard.
We employ the shorthand notation $\eV{\!\!\ntraSS{\actg}}$ to mean $\nexists\actgp,\eV' $ such that $\eV{\tra{\actggp}}\eV$.
Moreover, for the semantics of \Cref{fig:mod-bi-re}, we can encode the identity transducer/monitor, \eIden, as follows
\begin{equation}
  \label[equation]{eq:id-def}
  \eIden \defeq \eIdenFull.
\end{equation}
When instrumented with any arbitrary system, the identity monitor \eIden leaves its behaviour unchanged.
As a shorthand notation, we write \prf{\actSIDs{\pate}{\bV}}{\eV} instead of \prf{\actSTN{\pate}{\bV}{\actp}}{\eV} when all the binding occurrences $(\dvV)$ in $\pate$ correspond to free occurrences $\dvV$ in $\actp$, thus denoting an identity transformation.
Similarly, we elide \bV whenever $\bV= \btrue$.
The first contribution of this work lies in the new \emph{instrumentation relation} of \Cref{fig:mod-bi-re},  linking the behaviour of the \sus \pV with that of a monitor \eV: %that \emph{agrees} with the (observable) actions \Act of \pV.
the term \eBI{\eV}{\pV} denotes their composition as a \emph{monitored system}. 
Crucially, the instrumentation rules in \Cref{fig:mod-bi-re} give us a semantics in terms of an LTS over the actions $\Act{\,\cup}\sset{\actt}$, in line with the LTS semantics of the \sus. 
%
% Concretely, r
Following \Cref{fig:en-dis-act}(b), rule \rtit{biTrnO} states that if the \sus transitions with an output \actOutVV to $\pV'$ and the transducer  can \emph{replace} it with \actOutV and transition to $\eVV$, the \emph{adapted} output can be externalised so that the composite system $\eBI{\eV}{\pV}$ transitions over \actOutV to \eBI{\eVV}{\pV'}.
Rule \rtit{biDisO} states that if \pV performs an output \actOutV that the monitor \emph{can suppress}, the instrumentation withholds this output and the composite system silently transitions; this amounts to action \emph{disabling} as outlined in \Cref{fig:en-dis-act}(a).
Rule \rtit{biEnO} is dual, and it \emph{enables} the output \actOutV  on the \sus as outlined in \Cref{fig:en-dis-act}(c): it augments the composite system \eBI{\eV}{\pV} with an output \actOutV whenever \eV can \emph{insert} \actOutV, independently of the behaviour of \pV.
\nw{Rules \rtit{biDisO}, \rtit{biTrnO} and \rtit{biEnO} therefore correspond to items $(a)$, $(b)$ and $(c)$ in \Cref{fig:en-dis-act} respectively.}
Rule \rtit{biDef} is analogous to standard rules for premature monitor termination \cite{francalanza2016theory,Francalanza2017FMSD,Fra17:Concur,Achilleos2018Fossacs}, and accounts for underspecification of transformations.
We, however, restrict defaulting (termination)  to output actions performed by the \sus exclusively, \ie
a monitor only defaults to \eIden when it cannot react to or enable a system output.
By forbidding the monitor from defaulting upon unspecified inputs, the monitor is able to \emph{block} them from becoming part of the composite system's behaviour.
Hence, any input that the monitor is unable to react to, \ie $\eV\ntraS{\ioact{\actInV}{\actg}}$, is considered as being \emph{invalid and blocked} by default.
This technique is thus used to implement \Cref{fig:en-dis-act}(d).
To avoid disabling valid inputs unnecessarily, the monitor must therefore explicitly define symbolic transformations that cover \emph{all} the valid inputs of the \sus.
%
% For instance, the symbolic transformation $\actSTN{\actIn{\pidV}{\dbV}}{\boolT}{\actIn{\pidV}{\dvV}}$ allows values to be input on port \pidV only, while $\actSTN{\actIn{\dbVV}{\dbVdc}}{\dvVV{\neq}\pidVV}{\actdot}$ allows inputs on any port except \pidVV; any other input is invalid and thus blocked.
%
% By including such symbolic transformations, rules \rtit{biTrnI} and \rtit{biEnI} can be applied. %and thus prevent blocking the specified inputs.
%
Note, that rule \rtit{biAsy} still allows the \sus to silently transition  independently of \eV.
Following \Cref{fig:en-dis-act}(f), rule \rtit{biTrnI}  adapts inputs, provided the \sus can accept the adapted input. 
%
% As far as the environment is concerned, the \sus accepted the input provided by the environment on port \pidV.
%
Similarly, rule \rtit{biEnI} \emph{enables} an input on a port \pidV as described in \Cref{fig:en-dis-act}(g): the composite system accepts the input while suppressing it from the \sus. 
%
% Although unspecified inputs on a port \pidV are implicitly \emph{disabled} (since the monitor cannot react to them, \ie $\eV\ntraS{\ioact{\actInV}{\actg}}$), rule \rtit{biDisI} prevents the monitor from blocking systems that require the blocked input in order to progress. 
%
% Specifically, this rule 
Rule \rtit{biDisI} allows the monitor to generate a default input value \vV and forward it to the \sus on a port \pidV, thereby unblocking it \nw{whenever the environment is unable to provide the corresponding output on channel \pidV (carrying \vV)};  
\nw{from the environment's perspective,} 
% externally, 
the composite system silently transitions to some state, following \Cref{fig:en-dis-act}(e).
\nw{
It is worth comparing  rule \rtit{biDisI} with the other instrumentation rule \rtit{biEnO} discussed earlier, since they both handle outputs inserted by the monitor.
In the case of rule \rtit{biEnO}, whenever the monitor inserts an output to be consumed \emph{by the environment}, this is expressed at the level of the composite system as an external output (see conclusion of rule \rtit{biEnO}) since, in our LTS, the actions represent the interaction between the (composite) system and the environment.
Contrastingly, whenever the monitor inserts an output to be input \emph{by the \sus}, then this is expressed at the level of the composite system as a silent action (see conclusion of rule \rtit{biDisI}) since no interaction occurs between the (composite) system and the environment.
}
\nw{We conclude our discussion of the instrumenation rules in \Cref{fig:mod-bi-re} by remarking that rules \rtit{biDisI}, \rtit{biTrnI} and \rtit{biEnI} respectively implement items $(e)$, $(f)$ and $(g)$ of \Cref{fig:en-dis-act}.}
%
% Finally, rule \rtit{biAsy} allows the \sus \pV to internally transition with a silent action \actt to some state $\pV'$ independent of \eV.
%

\nw{
\begin{defi} \label[definition]{def:disa-mon}
  We call disabling monitors/transducers those monitors that only perform disabling actions.  The same applies to enabling and adapting monitors/transducers. \exqed
\end{defi}  
}

\begin{exa}	\label[example]{ex:transducers}
	Consider the following action disabling transducer \eVd, that repeatedly disables every output performed by the system via the branch 
  % $\prf{\actSN{\patOutV}{\actdot}}{\rVV}$.
  $\prf{\actSN{\actOut{\dbVdc}{\dbVdc}}{\actdot}}{\rVV}$. 
  In addition, it limits inputs to those on port \pidVV via the input branch $\prf{\actSet{\actIn{\pidVV}{\dbVdc}}}{\rVV}$;   
  inputs on other ports are disabled since none of the relevant instrumentation rules in \Cref{fig:mod-bi-re} can be applied.
  %  to allow the composite system to transition over these input actions.
	\begin{align*}
	\eVd & \!\!\defeq\!\! \rec{\rVV}{\ch{\prf{\actSet{\actIn{\pidVV}{\dbVdc}}}{\rVV}}{\prf{\actSN{\actOut{\dbVdc}{\dbVdc}}{\actdot}}{\rVV}}}
	\end{align*}
	%
	% \noindent It is a recursive transducer, $\rec{\rVV}{\_}$, 
	%
	%By not defining a branch for transforming inputs on ports other than \pidVV (such as $\prf{\actSN{\patInV}{\dvV\neq\pidVV}}{\rVV}$), it also disables every input interaction performed on these ports.
	%
	%It, however, allows inputs to occur on \pidVV as it defines the branch $\prf{\actSet{\actIn{\pidVV}{\dbVV}}}{\rV}$.
	%
	% Moreover, by only defining the input branch $\prf{\actSet{\actIn{\pidVV}{\dbVdc}}}{\rVV}$ it also restricts the composite system by allowing it to only input values from port \pidVV.
	%
	% Concretely, inputs from other ports are disabled since none of the instrumentation rules in \Cref{fig:mod-bi-re} can be applied to allow the composite system to transition over these input actions.
	%
	%As a side effect, it also blocks every subsequent action.
	%
  Recall the two systems below from \Cref{ex:shml-formula-bi}:
  \begin{align*}
		\pVbo&\defeq\pVbodef \\
    \pVbi&\defeq\pVbidef 
    \intertext{where}
    \pVg & \defeq\pVgdef \qquad \text{and}\\
    \pVcls & \defeq\pVclsdef 
	\end{align*}
	When instrumented with the system \pVbi, monitor \eVd blocks its initial input, \ie we have $\eBI{\eVd}{\pVbi}\ntraS{\acta}$ for any \acta.
	In the case of \pVbo, the composite system \eBI{\eVd}{\pVbo} can only input  requests on port \pidVV, such as the termination request $\eBI{\eVd}{\pVbo}\traS{\actIn{\pidVV}{\cls}}\eBI{\eVd}{\nil}$.
	%
	%
	% Now consider the more elaborate transducer \eVdt.
	%
	\begin{align*}
	\eVdt & \defeq 
	\rec{\rV}{(
		\ch{
			\prf{\trnsInIdS\,}
			(\ch{ \prf{\actSN{ \actIn{\dbVA}{\dbVdc}}{\dvVA\neq\dvV}}{\eIden}\!}
			{\!\prf{\trnsOutIdS}{
					\eVdtP
				}
			}
			)\! 
		}
		{
			\eVclsdef
		}
	)} 
	\\
	\eVdtP & \defeq  \ch{\prf{\trnsOutSupS}{\eVd}}{\ch{\defmonE}{\prf{\trnsLogIdSFull}{\rV}\!}} 
	\end{align*}
	By defining branch $\eVclsdef$, the more elaborate monitor \eVdt (above) allows the \sus to immediately input on port \pidVV (possibly carrying a termination request). 
  % (before defaulting to \eIden).
	%
	% On the other hand, 
  At the same time, the branch prefixed by $\trnsInIdS$ permits the \sus to input the first request via any port $\dvV{\,\neq\,}\pidVV$, subsequently blocking inputs on the same port \dvV (without deterring inputs on other ports) via the input branch $\prf{\actSN{ \actIn{\dbVA}{\dbVdc}}{\dvVA\neq\dvV}}{\eIden}$.
	In conjunction to this branch, \eVdt defines  another branch $\prf{\trnsOutIdS}{\eVdtP}$ to allow outputs on the port bound to variable \dvV.
	The continuation monitor \eVdtP then defines the suppression branch $\prf{\trnsOutSupS}{\eVd}$ by which it disables any \emph{redundant} response that is output following the first one.
	Since it also defines branches $\prf{\trnsLogIdSFull}{\rV}$ and $\defmonE$, it does not affect log events or further inputs that occur immediately after the first response.
	%
	
	%
	%By defining $\eVclsdef$ at every branch level, it also refrains from blocking termination inputs \actIn{\pidVV}{\cls}, from being passed on to the \sus. 
	%	
	%To block invalid requests, it adopts the same approach as that of \eVd, but upon detecting a redundant response, \eVdt suppresses it and then reduces to \eVd (see underlined branch).
	%
	%For instance, when 
	%
	When instrumented with system \pVbi from \Cref{ex:shml-formula-bi}, \eVdt allows the composite system to perform the first input but then blocks the second one, permitting  only input requests on channel $b$, \eg $$\eBI{\eVdt}{\pVbi}\tra{\actIn{\pidV}{\vV}}\cdot\tra{\actIn{\pidVV}{\cls}}\eBI{\eIden}{\nil}.$$
	It also disables the first redundant response of system \pVbo while transitioning to \eVd, which proceeds to suppress every subsequent output (including log actions) while blocking every other port except \pidVV, \ie 
  $$\eBI{\eVdt}{\pVbo}\traS{\actIn{\pidV}{\vV}}\cdot\wtraS{\actOut{\pidV}{\vVV}}\cdot\traS{\actt} 
  \eBI{\eVd}{\prf{{\actLoggTuple{\vV}{\vVV}}}\pVg}
  \traS{\actt} 
  \eBI{\eVd}{\pVg}
  \ntraS{\actIn{\pidV}{\vV}}$$ (for every port \pidV where $\pidV{\neq}\pidVV$ and any value \vV).
	Rule \rtit{iDef} allows it to default when handling unspecified outputs, \eg for system $\prf{{\actLoggTuple{\vV}{\vVV}}}{\pVg}$ 
  % although $\eVdt{\ntraS{\actLoggTuple{\vV}{\vVV}}}$, 
  the composite system can still perform 
   the logging output, \ie 
  $$\eBI{\eVdt}{\prf{{\actLoggTuple{\vV}{\vVV}}}{\pVg}}\tra{{\actLoggTuple{\vV}{\vVV}}}\eBI{\eIden}{\pVg}.$$
	
  \noindent Consider one further monitor, defined below:
	\begin{align*}
    \eVdet &\! \defeq\! \rec{\rV}
    {
      (\ch{			
          \prf{\trnsInIdS\,}\eVdetP			
      }
      {
        \eVclsdef
      }
      )
    }
    \\
    \eVdetP &\!\defeq\!
    \rec{\rVV_1}{
      \ch{ 
        \ch{
          \underline{\prf{\actSN{\actdot}{\actIn{\dvV}{\vVdef}}}\rVV_1}
        }{
          \prf{ \actSet{\actOut{\dvV}{\dbVVB}}}{\eVdetPP} 
        }
      }
      {
         \prf{\actSN{ \actIn{\dbVA}{\dbVdc}}{\dvVA\neq\dvV}}{\eIden}
      } 
    }
    \\
    \eVdetPP &\!\defeq\! \rec{\rVV_2}{\bigl(\ch{\underline{
      \prf{ \actSTN{\actOut{\dvV}{\dbVdc}}{\dvV \neq \pidVV}{\!\actdot} }{\rVV_2}}\!}{\!\ch{\prf{\trnsLogIdSFull}{\rV}\!}{\!\defmonE}} \bigr)}
    \end{align*}
	Monitor \eVdet (above) behaves similarly to \eVdt but instead 
  % of reducing to \eVd after suppressing the first redundant response, it
   employs a loop of suppressions (underlined in $\eVdetPP$) to disable further responses until a log or termination input is made.
	%
	%  As opposed to \eVdt, 
  % monitor \eVdet only attempts to disable further responses via the suppression loop of \eVdetPP after disabling the redundant response of \pVbo, thereby allowing 
  When composed with \pVbo, it permits the 
  % subsequent 
  log action to go through:
  % , as follows:
	%
	\begin{align*}
	\eBI{\eVdet}{\pVbo}\traS{\actInV}\cdot\wtraS{\actOut{\pidV}{\vVV}}\cdot\traS{\actt}\eBI{\eVdetPP}{\prf{{\actLoggTuple{\vV}{\vVV}}}\pVbo}\traS{{\actLoggTuple{\vV}{\vVV}}} \eBI{\eVdet}{\pVbo}. 
	\end{align*}
	\eVdet also defines a branch prefixed by the insertion transformation $\actSN{\actdot}{\actIn{\dvV}{\vVdef}}$ (underlined in $\eVdetP$) where $\vVdef$ is a default input domain value.
	This permits the instrumentation to silently unblock the \sus when this is waiting for a request following an unanswered one.
	In fact, when instrumented with \pVbi, \eVdet not only forbids invalid input requests, but it also (internally) unblocks \pVbi by supplying the required input via the added insertion branch.
	This allows the composite system to proceed, as shown below (where $\pVg'{\,\defeq\,}\prf{\assign{\dvVV}{\ans(\vVdef)}}\prf{\actOut{\pidV}{\dvVV}}\prf{{\actLoggTuple{\vVdef}{\dvVV}}}{\pVg}$):
	\begin{align*}
	\eBI{\eVdet}{\pVbi}&\traS{\actInV}\eBI{\rec{\rVV}{(\ch{
				\ch{\prf{\actSN{\actdot}{\actIn{\pidV}{\vVdef}}}{\rVV}}{\prf{\actSet{\actOut{\pidV}{\dbVVB}}}{\eVdetPP}}
				}{\eVclsdef}
		)}}{\pVg}\\
	&\traS{\;\;\actt\;\;}\eBI{\rec{\rVV}{(\ch{ 
				\ch{\prf{\actSN{\actdot}{\actIn{\pidV}{\vVdef}}}{\rVV}}{\prf{\actSet{\actOut{\pidV}{\dbVVB}}}{\eVdetPP}}
				}{\eVclsdef}
		)}}{\pVg'} \\
	&\wtraS{\actOut{\pidV}{\ans(\vVdef)}.{\actLoggTuple{\vVdef}{\dvVV}}}\eBI{\eVdet}{\pVg} 
  & \hfill\qquad\qquad \tag*{\exqed}
	\end{align*}
	%	
	% where $\pVg'{\,\defeq\,}\prf{\assign{\dvVV}{\ans(\vVdef)}}\prf{\actOut{\pidV}{\dvVV}}\prf{{\actLoggTuple{\vVdef}{\dvVV}}}{\pVg}$. \qed
\end{exa}

Although in this paper we mainly focus on action disabling monitors, using our model one can also define action enabling and adaptation monitors.

\begin{exa} \label[example]{ex:other-transducers}
	Consider now the transducers \eVe and \eVa below:
	\begin{align*}
	\eVe & \defeq \prf{
		\actSTD{\patInV}{\dvV{\neq}\pidVV}
	}
	{
		\prf{
			\actSN{\actdot}{\actOut{\dvV}{\ans(\dvVV)}}
		}{\prf{
				\actSN{\actdot}{{\actLoggTuple{\dvVV}{\ans(\dvVV)}}}
			}{\eIden}}
	}
	\\
	\eVa &\defeq \rec{\rV}{
		\ch{
			\prf{\actSN{\actIn{\pidVV}{\dbVV}}{\actIn{\pidV}{\dvVV}}}{\rV}  
		}{
%			\ch{
%				\prf{\actSTN{\actIn{\pidVV}{\dbVV}}{\dvVV{\,=\,}\cls}{\actIn{\pidVV}{\dvVV}}}{\rV}  
%			}{
				\prf{\actSN{\actOut{\pidV}{\dbVV}}{\actOut{\pidVV}{\dvVV}}}{\rV}
%			}
		}
	}. %\cmt{UPDATE EXPLANATION!!!}
	\end{align*}
	Once instrumented, \eVe first uses a suppression to enable 
	an input on any port $\dvV{\,\neq\,}\pidVV$ (but then gets discarded). 
	It then automates a response by inserting an answer followed by a log action. % that are output on ports \pidVVV and \pidVV respectively. 
	Concretely, when composed with the systems $\pVV{\,\in\,}\set{\pVbo,\pVbi}$ from \Cref{ex:shml-formula-bi} (restated in \Cref{ex:transducers}), the execution of the composite system can only start as follows, for some channel name $\pidVVV \neq \pidVV$, values $\vV$ and $\vVV{\,=\,}\ans(\vV)$:
	\begin{align*}
	\eBI{\eVe}{\pVV}
	\tra{\actIn{\pidVVV}{\vV}}
	\eBI{\prf{\actSN{\actdot}{\actOut{\pidVVV}{\vVV}}}{\prf{\actSN{\actdot}{{\actLoggTuple{\vV}{\vVV}}}}{\eIden}}}{\pVV}
	\wtra{\actOut{\pidVVV}{\vVV}}
	\eBI{\prf{\actSN{\actdot}{{\actLoggTuple{\vV}{\vVV}}}}{\eIden}}{\pVV}
	\tra{{\actLoggTuple{\vV}{\vVV}}}
	\eBI{\eIden}{\pVV}.
	\end{align*}
	By contrast, \eVa uses action adaptation to redirect the inputs and outputs from the \sus through port $\pidVV$:
	%
	%Concretely, when the composite system inputs a value on port \pidVV that is not termination request, \eg \actIn{\pidVV}{\req}, this gets redirected to the \sus on port \pidV through the monitor's replacement transformation, and every output gets intercepted and rerouted to the environment through port \pidVV.
	%
	it allows the composite system to exclusively input values on port \pidVV forwarding them to the \sus on port \pidV, and dually allowing outputs from the \sus on port \pidV to reroute them to port \pidVV.
	\nw{As a result, from an external viewpoint, the resulting composite system can only be seen to communicate on port \pidVV with its environment.}
	For instance, for the systems \pVbi and \pVbo restated earlier, we can observe the following behaviour:%we have that:
	\begin{align*}
  % \Eg  
	% $
  & \eBI{\eVa}{\pVbi}
	\tra{\actIn{\pidVV}{\vVA}}
	\eBI{\eVa}{\pVg}
	\wtra{\actIn{\pidVV}{\vVB}.\actOut{\pidVV}{\vVVB}.{\actLoggTuple{\vVB}{\vVVB}}}
	\eBI{\eVa}{\pVg} %\ldots 
	% $ 
  \quad \text{and} 
  \\
  % \quad  
	% $
  &\eBI{\eVa}{\pVbo} 
	\wtra{\actIn{\pidVV}{\vVA}.\actOut{\pidVV}{\vVVA}.{\actLoggTuple{\vVA}{\vVVA}}}
	\eBI{\eVa}{\pVbo}. 
  % $
  \tag*{\exqed}
\end{align*}
\end{exa}

\section{Enforcement}
\label{sec:enforcement}
%\begin{itemize}
%	\item Define the classical definitions of enforcement ie based on soundness and transparency.
%	\item Argue + give examples showing that transparency (and the more classical and weaker, trace transparency) is not strict enough.
%	\item Introduce eventual transparency.
%	\item Give examples.
%	\item Introduce the notion of optimality as means to assess how a monitor modifies invalid system behaviour.
%\end{itemize}

We are concerned with extending the enforceability result obtained in prior work~\cite{Cassar2018Concur} to the extended setting of bidirectional enforcement.
The \emph{enforceability} of a logic rests on the relationship between the semantic behaviour specified by the logic on the one hand, and the ability of the operational mechanism (that of \Cref{sec:model} in this case) to enforce the specified behaviour on the other.
This is captured by the predicate ``(monitor) \eV \emph{adequately enforces} (property) \hV'' in \Cref{def:enforceability} below.
In fact, the definitions of formula and logic enforceability in \Cref{def:enforceability} are parametric with respect to the precise meaning of such a predicate. 
In what follows, we will explore the design space for formalising this predicate.

\begin{defi}[Enforceability~\cite{Cassar2018Concur}] \label[definition]{def:enforceability}
	A formula \hV is \emph{enforceable} iff there \emph{exists} a transducer \eV such that \eV \emph{adequately enforces} \hV.
	A logic \LSet is enforceable iff \emph{every} formula $\hV{\,\in\,}\LSet$ is \emph{enforceable}. \exqed
\end{defi}

% \Cref{def:enforceability} relies on the meaning of ``\eV \emph{adequately enforces} \hV''.
% 
Since we have limited control over the \sus that a monitor is composed with,
%  adequate enforcing 
``\eV \emph{adequately enforces} \hV'' should hold for \emph{any} (instrumentable) system. %
%In particular, 
In \cite{Cassar2018Concur} we stipulate that any notion of adequate enforcement should at least entail soundness. 
% \ie if the property of interest \hV is \emph{satisfiable}, \ie $\isSatF$, then the composite system, $\eI{\eV}{\pV}$, should satisfy \hV for \emph{every} system state \pV.

\begin{defi}[Sound Enforcement~\cite{Cassar2018Concur}] \label[definition]{def:senf} Monitor \eV \emph{soundly enforces} a \nw{formula \hV, denoted as \senfdef{\eV}{\hV}, iff, whenever \hV is satisfiable, \ie $\hSemS{\hV} \neq \emptyset$, then  for every state $\pV{\,\in\,}\Sys$, it is the case that $\eI{\eV}{\pV}{\,\in\,}\hSemS{\hV}$.} \exqed
\end{defi}

\begin{exa}\label[example]{ex:sound-enf} 
	Although showing that a monitor soundly enforces a formula should consider \emph{all} systems, we give an intuition based on \pVg, \pVbo, \pVbi for formula $\hV_1$ from \Cref{ex:shml-formula-bi} (restated below)  where $\pVg{\,\in\,}\hSemS{\hV_1}$ (hence $\isSat{\hV_1}$) and $\pVbo,\pVbi{\,\notin\,}\hSemS{\hV_1}$.
	\begin{align*}
	\hV_1&\defeq\hVdef \\
	\hV'_1&\defeq\hVdefAP
	\end{align*}	
	Recall the transducers \eVe, \eVa, \eVd, \eVdt and \eVdet from \Cref{ex:transducers,ex:other-transducers}, restated below:
  \begin{align*}
    \eVe & \defeq \prf{
      \actSTD{\patInV}{\dvV{\neq}\pidVV}
    }
    {
      \prf{
        \actSN{\actdot}{\actOut{\dvV}{\ans(\dvVV)}}
      }{\prf{
          \actSN{\actdot}{{\actLoggTuple{\dvVV}{\ans(\dvVV)}}}
        }{\eIden}}
    }
    \\
    \eVa &\defeq \rec{\rV}{
      \ch{
        \prf{\actSN{\actIn{\pidVV}{\dbVV}}{\actIn{\pidV}{\dvVV}}}{\rV}  
      }{
  %			\ch{
  %				\prf{\actSTN{\actIn{\pidVV}{\dbVV}}{\dvVV{\,=\,}\cls}{\actIn{\pidVV}{\dvVV}}}{\rV}  
  %			}{
          \prf{\actSN{\actOut{\pidV}{\dbVV}}{\actOut{\pidVV}{\dvVV}}}{\rV}
  %			}
      }
    }\\
    \eVd & \defeq \rec{\rVV}{\ch{\prf{\actSet{\actIn{\pidVV}{\dbVdc}}}{\rVV}}{\prf{\actSN{\actOut{\dbVdc}{\dbVdc}}{\actdot}}{\rVV}}}
    \\
    \eVdt & \defeq 
	\rec{\rV}{(
		\ch{
			\prf{\trnsInIdS\,}
			(\ch{ \prf{\actSN{ \actIn{\dbVA}{\dbVdc}}{\dvVA\neq\dvV}}{\eIden}\!}
			{\!\prf{\trnsOutIdS}{
					\eVdtP
				}
			}
			)\! 
		}
		{
			\eVclsdef
		}
	)} 
	\\
  \eVdet &\! \defeq\! \rec{\rV}
    {
      (\ch{			
          \prf{\trnsInIdS\,}\eVdetP			
      }
      {
        \eVclsdef
      }
      )
    }
  \intertext{where}  
	\eVdtP & \defeq  \ch{\prf{\trnsOutSupS}{\eVd}}{\ch{\defmonE}{\prf{\trnsLogIdSFull}{\rV}\!}} 
  \\ 	
  \eVdetP &\defeq
  \rec{\rVV_1}{
    \ch{ 
      \ch{
        \underline{\prf{\actSN{\actdot}{\actIn{\dvV}{\vVdef}}}\rVV_1}
      }{
        \prf{ \actSet{\actOut{\dvV}{\dbVVB}}}{\eVdetPP} 
      }
    }
    {
       \prf{\actSN{ \actIn{\dbVA}{\dbVdc}}{\dvVA\neq\dvV}}{\eIden}
    } 
  }
  \\
  \eVdetPP &\defeq \rec{\rVV_2}{\bigl(\ch{\underline{
    \prf{ \actSTN{\actOut{\dvV}{\dbVdc}}{\dvV \neq \pidVV}{\!\actdot} }{\rVV_2}}\!}{\!\ch{\prf{\trnsLogIdSFull}{\rV}\!}{\!\defmonE}} \bigr)}
    \end{align*}
	When assessing their soundness in relation to the property $\hV_1$, we have that:
	\begin{itemize}
		\item \eVe is \emph{unsound} for $\hV_1$. When composed with \pVbo, the resulting monitored system produces two consecutive output replies (namely those underlined in the trace $\tr_{\textsf{e}}^{1}$ below), thus meaning that the composite system violates the property in question, \ie $\eBI{\eVe}{\pVbo}{\notin}\hSemS{\hV_1}$.  More concretely, we have  
    $$\eBI{\eVe}{\pVbo}\wtraS{\tr_{\textsf{e}}^{1}}\eBI{\eIden}{\pVbo} \quad\text{ where }\tr_{\textsf{e}}^{1}{\defeq}\actIn{\pidVVV}{\vVA}.\actOut{\pidVVV}{\ans(\vVA)}.\actLoggTuple{\vVA}{\ans(\vVA)}.\actInPidVvVB.\underline{\actOut{\pidV}{\vVVB}.\actOut{\pidV}{\vVVB}}.$$ 
    % as it allows invalid behaviour such as $\eBI{\eVe}{\pVbo}\wtraS{\tr_{\textsf{e}}^{1}}\eBI{\eIden}{\pVbo}$ where $\tr_{\textsf{e}}^{1}{\defeq}\actIn{\pidVVV}{\vVA}.\actOut{\pidVVV}{\ans(\vVA)}.\actLoggTuple{\vVA}{\ans(\vVA)}.\actInPidVvVB.\underline{\actOut{\pidV}{\vVVB}.\actOut{\pidV}{\vVVB}}$.
		%
		% This shows that the composite system $\eBI{\eVe}{\pVbo}$ can still make two consecutive output replies (underlined), and so $\eBI{\eVe}{\pVbo}{\notin}\hSemS{\hV_1}$.
		%
		Similarly, the system \pVbi instrumented with the transducer \eVe also violates property $\hV_1$, \ie $\eBI{\eVe}{\pVbi}\notin\hSemS{\hV_1}$, since the $\eBI{\eVe}{\pVbi}$ executes the erroneous trace with two consecutive inputs on port \pidV (underlined), $\actIn{\pidVVV}{\vVA}.\actOut{\pidVVV}{\ans(\vVA)}.\actLoggTuple{\vVA}{\ans(\vVA)}.\underline{\actIn{\pidV}{\vVVB}.\actIn{\pidV}{\vVVC}}$.
		This demonstrates that $\eBI{\eVe}{\pVbi}$ can still input two consecutive requests on port \pidV (underlined).
		Either one of these \nw{counterexamples} \emph{disproves} $\senfdef{\eVe}{\hV_1}$.
		\item Monitor \eVa turns out to be \emph{sound} for $\hV_1$ because once instrumented, the resulting composite system is adapted to only interact on port \pidVV.
    % , and so its actions are not of concern to $\hV_1$.
		%
		% As \eVa applies this enforcement strategy to any \sus, we can safely conclude that  $\senfdef{\eVa}{\hV_1}$, 
    In fact, we have $\{ \eI{\eVa}{\pVg},\eI{\eVa}{\pVbo},\eI{\eVa}{\pVbi}\} \subseteq \hSemS{\hV_1}$.
		The other monitors \eVd, \eVdt and \eVdet are also \emph{sound} for $\hV_1$. 
		Whereas, monitor \eVd prevents the violation of $\hV_1$ by also blocking all input ports except \pidVV,  the transducers \eVdt and \eVdet achieve the same goal by disabling the invalid consecutive requests and answers that occur on any port except \pidVV.
		%
		%From this intuitive reasoning, and since $\eBI{\eV_x}{\pVg},\eBI{\eV_x}{\pVbo},\eBI{\eV_x}{\pVbi}{\,\in\,}\hSemS{\hV_1}$ where $\eV_x{\,\in\,}\set{\eVd,\eVdt,\eVdet}$ we conclude that $\senfdef{\eV_x}{\hV_1}$.
		\exqed
	\end{itemize}
\end{exa}

By itself, sound enforcement is a weak criterion because it does not regulate the \emph{extent} to which enforcement is applied.
More specifically, although \eVd from \Cref{ex:transducers} is sound,  it needlessly modifies the behaviour of $\pVg$ even though $\pVg$ satisfies $\hV_1$: by blocking the initial input of \pVg, \eVd causes it to block indefinitely. 
The requirement that a monitor should not modify the behaviour of a system that satisfies the property being enforced can be formalised using a transparency criterion.
%
% For 
% % the enforcement of 
% properties describing computation trees such as ours 
% % (as opposed to those describing single executions~\cite{Ligatti2005,Ligatti2009,Sakarovitch:2009:Transducers,Bielova2011,Khoury2012,Konighofer2017,Falcone2012}), 
% a transparency criterion is also required, defined  in terms of some adequate notion of system equivalence. 
% (\eg bisimulation equivalence, $\sim$~\cite{Sangiorgi2011Bisim}). 
%
% dictating  that, whenever a system $\pV$ already satisfies the property \hV, the assigned monitor \eV should not alter the behaviour of $\pV$.
%
% in addition to soundness, in  we posited that adequate enforcement must also require a degree of \emph{transparency} that safeguards the integrity of well-behaved systems. 
% %
% Transparency thus dictates that, whenever a system $\pV$ already satisfies the property \hV, the assigned monitor \eV should not alter the behaviour of $\pV$.
%
%To put it simply, the behaviour of the composite system should be \emph{bisimilar} to that of the valid \sus.

\begin{defi}[Transparent Enforcement~\cite{Cassar2018Concur}] \label[definition]{def:tenf} A monitor \eV \emph{transparently} enforces a formula \hV, 
  % written as 
  \tenfdef{\eV}{\hV}, iff  for \emph{all} %LTSs $\langle\Sys,\Act\cup\sset{\actt},\rightarrow\rangle$ and 
	% systems  
  $\pV\in\Sys$,  $\pV{\in}\hSemS{\hV}$ implies $\;\eI{\eV}{\pV}\sim\pV$. \qed
\end{defi}

\begin{exa} \label[example]{ex:transparency} As argued earlier, 
  % the counter-example given in relation to 
  \pVg suffices to disprove $\tenfdef{\eVd}{\hV_1}$.
	\nw{Monitor \eVa from \Cref{ex:other-transducers} also breaches \Cref{def:tenf}: although $\pVg \in \hSemS{\hV_1}$, we have  $\eBI{\eVa}{\pVg}\nbisim\pVg$ since for any value $\vV$ and $\vVV$, 
  $\pVg\traS{\actIn{\pidV}{\vV}}
  % \cdot\traS{\actOut{\pidVV}{\vVV}}
  $ but for any value $\vV$ we can \emph{never} have 
  $\eBI{\eVa}{\pVg}\traS{\actIn{\pidV}{\vV}}
  % \cdot\ntraS{\actOut{\pidVV}{\vVV}}
  $.}
	By contrast, monitors \eVdt and \eVdet turn out to satisfy \Cref{def:tenf}, since they only intervene when it becomes apparent that a violation will occur.
	For instance, they only disable inputs on a specific port, as a precaution, following an unanswered request on the same port, and they only disable the redundant responses that are produced after the first response to a request.
	%
	% The universal quantification over all systems makes it difficult to show that $\tenfdef{\eVdt}{\hV_1}$ and $\tenfdef{\eVdet}{\hV_1}$.
	% %
	% However, since both monitors do not modify valid systems such as \pVg, \ie $\pVg{\,\in\,}\hSemS{\hV_1}$ and $\eBI{\eVdt}{\pVg}\bisim\pVg\bisim\eBI{\eVdet}{\pVg}$, and only modify invalid ones, such as \pVbo and \pVbi (see \Cref{ex:sound-enf}), we get a good intuition for why this is the case. 
  \exqed
\end{exa}

It turns out that, 
by some measures, \Cref{def:tenf} is still a relatively weak requirement since it only limits transparency requirements to \emph{well-behaved} systems, \ie those that satisfy the property in question, and disregards enforcement behaviour for systems that violate this property.
For instance, consider monitor \eVdt from \Cref{ex:transducers} (restated in \Cref{ex:sound-enf}) and system \pVbo from \Cref{ex:shml-formula-bi} (restated in \Cref{ex:transducers}). 
%
%Despite not modifying the behaviour of well-behaved systems, such as \pVg (thus adhering to transparency), when instrumented with an invalid system, such as \pVbo, it also disables the log action, along with its redundant response, as a result of reducing into \eVd.
%%
%In fact, after reducing to \eVd, it keeps on disabling actions even after \pVbo reaches a valid point by reducing to state \pVg, \ie $\eI{\eVdt}{\pVbo}\wtraS{\actInV.\actOut{\pidV}{\vVV}}\eI{\eVd}{\pVg}$ and $\pVg{\,\in\,}\hSemS{\afterF{\hV_1}{\actInV.\actOut{\pidV}{\vVV} }}=\hSemS{\hAnd{\hNec{\actOut{\pidV}{\dbVVB}}\hFls\,}{\,\hNec{\actLoggTuple{\vVA}{\vVVA}}\hV_1}}$ but $\eI{\eVd}{\pVg}\nbisim\pVg$ since for every input value \vVB $\pVg\traS{\actInPidVvVB}$ and $\eI{\eVd}{\pVg}\ntraS{\actInPidVvVB}$.
%
At runtime, \pVbo can exhibit the following invalid behaviour: $$\pVbo\wtraS{\tr_1}\prf{\actOut{\pidVV}{\loggTuple{\vV}{\vVV}}}{\pVg} \quad \text{where } {\tr_1}{\,\defeq\,}{\prf{\actIn{\pidV}{\vV}}\prf{\actOut{\pidV}{\vVV}}\actOut{\pidV}{\vVV}} \text{ for some appropriate pair of values } \vV,\vVV.$$
In order to 
% bring the invalid 
rectify this violating behaviour 
% of $\pVbo$ (show in $\tr_1$) in line with our specification 
\wrt formula $\hV_1$, it suffices to use a monitor that disables \emph{one} of the responses in $\tr_1$, \ie $\actOut{\pidV}{\vVV}$.
%
% After correcting $\tr_1$ into ${\tr'_1}{\,\defeq\,}{\prf{\actIn{\pidV}{\vV}}\actOut{\pidV}{\vVV}}$, 
Following this disabling, no further modifications are required since the \sus reaches a state that does not violate the remainder of the formula $\hV_1$, \ie 
 $\prf{\actOut{\pidVV}{\loggTuple{\vV}{\vVV}}}{\pVg}{\in}\hSemS{\afterF{\hV_1}{\tr'_1}}$ where ${\tr'_1}{\,\defeq\,}{\prf{\actIn{\pidV}{\vV}}\actOut{\pidV}{\vVV}}$. 
However, when instrumented with \eVdt, this monitor does not only disable the invalid response, namely $\eI{\eVdt}{\pVbo}\wtraS{\prf{\actIn{\pidV}{\vV}}\prf{\actOut{\pidV}{\vVV}}}\eI{\eVd}{\prf{\actOut{\pidVV}{\loggTuple{\vV}{\vVV}}}{\pVg}}$, but subsequently disables every other action by reaching \eVd, $\eI{\eVd}{\prf{\actOut{\pidVV}{\loggTuple{\vV}{\vVV}}}{\pVg}}\traS{\actt}\eI{\eVd}{\pVg}$.
%
%
%
% It thus makes sense that transparency should also start applying whenever an invalid \sus reaches a valid point while instrumented with the monitor.
% %
% Put differently, if a composite system, $\eI{\eV}{\pV}$ (where $\pV{\notin}\hSemS{\hV}$), reduces to some state $\eI{\eV'}{\pV'}$ over a trace \tr, where $\pV'$ is in agreement with $\hV$ after following \tr (\ie $\pV'{\,\in\,}\hSemS{\afterF{\hV}{\tr}}$), then the behaviour of $\eI{\eV'}{\pV'}$ should be \emph{equivalent} to that of $\pV'$.
%
To this end, we introduce the novel requirement of \emph{eventual transparency}.
\begin{defi}[Eventually Transparent Enforcement] \label[definition]{def:evtenf} Monitor \eV enforces property \hV in an eventually transparent way, 
  % denoted as 
  $\evtenfdef{\eV}{\hV}$, iff for all %LTS $\langle\Sys,\Act\cup\sset{\actt},\rightarrow\rangle$, 
	systems 
  % states 
  $\pV, \pV'$, traces \tr and monitors $\eV'$, $\eI{\eV}{\pV}\wtraS{\tr}\eI{\eV'}{\pV'}$ and $\pV'\in\hSemS{\afterF{\hV}{\tr}}$ imply  $\eI{\eV'}{\pV'}\bisim\pV'$. \exqed
\end{defi}

\begin{exa} \label[example]{ex:evtenf} We have already argued why \eVdt (restated in \Cref{ex:sound-enf}) does not adhere to eventual transparency via the counterexample \pVbo. 
  This is not the case for \eVdet (also restated in \Cref{ex:sound-enf}). 
  % because $\evtenfdef{\eVdet}{\hV_1}$. 
	%
	Although the universal quantification over all systems and traces make it hard to prove this property, we get an intuition of why this is the case from the system \pVbo. 
  More concretely,  when 
  $$\eBI{\eVdet}{\pVbo}\wtraS{\actIn{\pidV}{\vVA}.\actOut{\pidV}{\vVVA}}\cdot\traS{\actt}\eBI{\eVdetPP}{\prf{\actLoggTuple{\vVA}{\vVVA}}\pVg}$$
%	\begin{gather*}		\eBI{\eVdet}{\pVbo}\wtraS{\actIn{\pidV}{\vVA}.\actOut{\pidV}{\vVVA}}\eBI{\eVans}{\ch{\prf{\actOut{\pidV}{\vVVA}}\prf{\actLoggTuple{\vVA}{\vVVA}}\pVg}{\prf{\actLoggTuple{\vVA}{\vVVA}}{\pVbo}}} \traS{\actt}\eBI{\eVans}{\prf{\actLoggTuple{\vVA}{\vVVA}}\pVg} %\traS{\actLoggTuple{\vVA}{\vVVA}}\eBI{\eVdet}{\pVg}
%	\end{gather*}
	we have 
  \begin{align*}
    \prf{\actLoggTuple{\vVA}{\vVVA}}\pVg &\in \hSemS{\afterF{\hV_1}{\actIn{\pidV}{\vVA}.\actOut{\pidV}{\vVVA}}} \\
    \intertext{ since } 
    \afterF{\hV_1}{\actIn{\pidV}{\vVA}.\actOut{\pidV}{\vVVA}} &=  (\hAnd{\hNec{\actSN{\actOut{\dbVC}{\dbVdc}}{\dvVC{=}\pidV}}\hFls\,}
    {\,\hNec{\actSN{\actOut{\dbVD}{\dbVVC}}{\dvVD{=}\pidVV\land\dvVVC{=}\loggTuple{\vV_1}{\vVV_1}}}\hV_1})
  \end{align*}
  and, moreover, $\eI{\eVdetPP}{\prf{\actLoggTuple{\vVA}{\vVVA}}\pVg}\bisim\prf{\actLoggTuple{\vVA}{\vVVA}}\pVg$. \exqed
\end{exa}

\begin{cor} \label[corollary]{cor:evtran-implies-tran}
  For all monitors $\eV\in\Trn$ and properties $\hV\in\SHML$,  $\evtenfdef{\eV}{\hV}$ implies $\tenfdef{\eV}{\hV}$. \qed
\end{cor}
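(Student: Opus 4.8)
The plan is to derive transparency directly from eventual transparency by instantiating the universally quantified data in \Cref{def:evtenf} at the \emph{empty} trace. Concretely, to establish $\tenfdef{\eV}{\hV}$ I fix an arbitrary system $\pV \in \Sys$ with $\pV \in \hSemS{\hV}$ and must show $\eI{\eV}{\pV} \bisim \pV$; the whole argument reduces to choosing the witnesses $\tr = \trE$, $\pV' = \pV$ and $\eV' = \eV$ in the hypothesis $\evtenfdef{\eV}{\hV}$.

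First I would observe that $\eI{\eV}{\pV} \wtraS{\trE} \eI{\eV}{\pV}$ holds trivially: by definition $\wtraS{\trE}$ denotes $\traSC{\actt}$, the reflexive--transitive closure of $\actt$-transitions, so the zero-step instance gives the required weak transition back to the same monitored state. Next I would note that $\afterF{\hV}{\trE} = \hV$, which is exactly the base case of the lifting of \afterFS to traces. Consequently the side condition $\pV' \in \hSemS{\afterF{\hV}{\tr}}$ of \Cref{def:evtenf} instantiates to $\pV \in \hSemS{\hV}$, which is precisely our assumption on $\pV$.

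With both premises of the implication in \Cref{def:evtenf} discharged, the hypothesis $\evtenfdef{\eV}{\hV}$ yields $\eI{\eV'}{\pV'} \bisim \pV'$, i.e. $\eI{\eV}{\pV} \bisim \pV$. Since $\pV$ was an arbitrary system satisfying $\hV$, this is exactly $\tenfdef{\eV}{\hV}$, completing the argument.

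There is essentially no hard step here: the result is a degenerate special case of eventual transparency. The only point that deserves a line of care is justifying the empty-trace instance---namely that $\wtraS{\trE}$ is reflexive (so that $\eI{\eV}{\pV}$ reaches itself without any $\actt$-moves) and that $\afterF{\hV}{\trE} = \hV$---so that the stronger, trace-indexed transparency guarantee collapses onto the plain transparency requirement of \Cref{def:tenf}.
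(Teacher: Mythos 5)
Your proof is correct and is precisely the argument the paper implicitly relies on when stating the corollary without proof: instantiating \Cref{def:evtenf} with $\tr=\trE$, $\pV'=\pV$, $\eV'=\eV$, using reflexivity of $\wtraS{\trE}$ (zero $\actt$-steps) and the base case $\afterF{\hV}{\trE}=\hV$ of the trace-lifted \afterFS function. Nothing is missing.
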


% Since \Cref{def:tenf} (transparency) is just an instance of \Cref{def:evtenf} (eventual transparency) (\ie when \tr is the empty trace \trE), the latter requirement 
Along with \Cref{def:senf} (soundness), \Cref{def:evtenf} (eventual transparency)  makes up our definition for \emph{``\eV (adequately) enforces \hV''}. From \Cref{cor:evtran-implies-tran}, it follows that is definition is stricter than the one given in \cite{Cassar2018Concur}.

\begin{defi}[Adequate Enforcement] \label[definition]{def:enforcement}
	A monitor \eV (adequately) \emph{enforces} property \hV, \nw{denoted as $\enfdef{\eVV}{\hV}$,}  iff it adheres to $(i)$ \emph{soundness}, \Cref{def:senf}, and $(ii)$ \emph{eventual transparency}, \Cref{def:evtenf}. \exqed
\end{defi}

% \begin{corollary}
% 	Since \Cref{def:enforcement} is defined in terms of eventual transparency (\Cref{def:evtenf}) that is stronger than transparency (\Cref{def:tenf}), our new definition for ``\eV enforces \hV'' is \emph{stricter} than the one given in \cite{Cassar2018Concur}. \qed
% \end{corollary}

\section{Synthesising Action Disabling Monitors}
\label{sec:synthesis}
% !TEX root = journal.tex

Although \Cref{def:enforceability} (instantiated with \Cref{def:enforcement}) enables us to rule out erroneous monitors that purport to enforce a property,  the \emph{universal quantifications} over all systems in \Cref{def:senf,def:evtenf} make it difficult to prove that a monitor does indeed enforce a property correctly in a bidirectional setting (disproving, however, is easier).
Establishing that a formula is enforceable,  \Cref{def:enforcement}, involves a further existential quantification over a monitor that enforces it correctly;
put differently, in order to show that a formula is \emph{not} enforceable, amounts to another universal quantification, this time over all possible monitors.
Moreover, establishing the enforceability of a logic entails yet another universal quantification, on all the formulas in the logic. 
In many cases (including ours), the sets of systems, monitors and formulas are infinite.

We address these problems through an \emph{automated synthesis procedure} that produces an enforcement monitor from a safety \recHML formula, expressed in a syntactic fragment of \SHML. 
This fragment, called \SHMLnf, has already been used to establish enforceability results in a uni-directional setting~\cite{Cassar2018Concur} and is the source logic employed by the tool detectEr\footnote{\texttt{https://duncanatt.github.io/detecter/}}\cite{AttardAAFIL21,AcetoAAEFI22,AchilleosEFLX22}  used to verify the correctness of concurrent systems written in Erlang~\cite{AcetoAFI21} and Elixir~\cite{BrunAF21}; it also coincides with \SHML in the regular setting~\cite{AcetoAFIK:jlap20:determ}.
We show that the synthesised monitors are correct, according to \Cref{def:enforcement}.
For a unidirectional setting, it has been shown that monitors that only administer  
% action 
\emph{omissions} are expressive enough to enforce \emph{safety properties}~\cite{Ligatti2005,Falcone2012,VanHulst2017,Cassar2018Concur}.
Analogously, for our bidirectional case, we restrict ourselves to action disabling monitors and show that they can enforce \emph{any} property expressed in terms of this \SHML fragment.

Our synthesis procedure is compositional, meaning that the monitor synthesis of a composite formula is defined in terms of the enforcement monitors generated from its constituent sub-formulas. 
Compositionality simplifies substantially our correctness analysis of the generated monitors (\eg we can use standard inductive proof techniques).
The choice of the logical fragment, \ie \SHMLnf, facilitates this compositional definition. 
An automated procedure to translate an \SHML formula with symbolic actions where the scope of the data binders is limited to the immediate symbolic action condition, into a corresponding \SHMLnf one (with the same semantic meaning) is given in \cite{Cassar2018Concur,AcetoAFIK:jlap20:determ}.
%
% \SHMLnf with data as also been used as the source language for the implementation of a runtime verification tool called \textsf{detectEr}\footnote{\texttt{https://duncanatt.github.io/detecter/}}\cite{AttardAAFIL21,AcetoAAEFI22,AchilleosEFLX22} which been used to verify the correctness of concurrent systems written in Erlang~\cite{AcetoAFI21} and Elixir~\cite{BrunAF21}. 

\begin{defi}[\SHML normal form] \label[definition]{def:shmlnf}
	The set of normalised \SHML formulas is generated by the following grammar (where\footnote{\nw{Recall that from \Cref{fig:recHML}, $I$ always denotes a \emph{finite} set of indices which is crucial for a synthesis process to terminate.}} $|\IndSet| \geq 1$):
	\begin{align*}
	\hV,\hVV
  \in\SHMLnf\;\bnfdef\; \hTru
	\bnfseppp\hFls
	\bnfseppp \hBigAndU{i\in\IndSet}\hNec{\pateI,\bVI}{\hVI}
	\bnfseppp\hVarX
	\bnfseppp\hMaxXF\;.
	\end{align*}
	In addition, \SHMLnf formulas are required to satisfy the following conditions:
	\begin{enumerate}
		\item Every branch in $\hBigAndU{i\in\IndSet}\hNec{\pateI,\bVI}{\hVI}$,
		must be \emph{disjoint}, 
    % $\bigdistinct{i{\in}\IndSet}\actSN{\pateI}{\bVI}$, which entails that
    \ie for every $i,j\in\IndSet$, $i\neq j$ implies $\hSemS{\actSN{\pateI}{\bVI}}{\,\cap\,}\hSemS{\actSN{\pateJ}{\bVJ}}=\emptyset$. %$\bigintersectU{i\in\IndSet}{\actSN{\pateI}{\bVI}}=\emptyset$.
		\item For every \hMaxXF we have $\hVarX \in \fv{\hV}$.  \exqed
		%\item Every logical variable is \emph{guarded} by a modal necessity. \qed
	\end{enumerate}
\end{defi}

In a (closed) \SHMLnf formula, the basic terms \hTru and \hFls can never appear unguarded unless they are at the top level (\eg we can never have $\hAnd{\hV}{\hFls}$ or $\hMax{\hVarX_{0}}{\ldots\hMax{\hVarX_{n}}{\hFls}}$). 
%
%The symbolic actions defined in modal operators are assumed to be unshortened, and so we forgo the shorthand notation for symbolic actions in \SHMLnf.
%
Modal operators are combined with conjunctions into one construct $\hBigAndU{i\in\IndSet}\hNec{\pateI,\bVI}{\hVI}$ that is written as $\hNec{\pate_0,\bV_0}{\hV_{0}}\hAndS\ldots\hAndS\hNec{\pate_n,\bV_n}{\hV_{n}}$ when $\IndSet=\Set{0,\ldots,n}$ and simply as $\hNec{\pate_0,\bV_0}{\hV}$ when $\len{\IndSet}=1$. 
The conjunct modal guards must also be \emph{disjoint} so that \emph{at most one} necessity guard can satisfy any particular visible action.
Along with these restrictions, we still assume that \SHMLnf fixpoint variables are guarded, and that for every $\actSN{\patInV}{\bV}$, $\dvVV{\,\notin\,}\fv{\bV}$.
%
%To further simplify our synthesis we also forgo the shorthand notation for symbolic actions and work \wrt unshortened formulas.
%

\begin{exa} The  formula $\hV_3$ 
  % defines a recursive property that is satisfied by all systems in which every input on port \pidV followed by an output with payload 4 is preceded by an output on port \pidV with payload 3.
  %
  % \nw{defines a recursive property stating that, following an input on port \pidV (carrying any value), prohibits that the system outputs a value of 4 (on any port), unless the output is made on port \pidV with a value that is not equal to 3 (in which cases, it recurses).}
  %
  \nw{defines a recursive property stating that an input on port \pidV (carrying any value) cannot be followed by an output with value of 4 (on any port), and that this continues to hold if the subsequent output is made on port \pidV with a value that is not equal to 3 (in which cases, the formula recurses)}
	\begin{align*}
		\hV_3 \defeq \hMaxX{\hNec{\actSN{\actIn{\dbVA}{\dbVVA}}{\dvVA{=}\pidV}}
			\begin{xbrackets}{c}
					\hNec{\actSN{\actOut{\dbVB}{\dbVVB}}{\dvVB{=}\pidV\land\dvVVB{\neq}3}}\hVarX \\[0.5em]
					\hAndS \; \hNec{\actSN{\actOut{\dbVC}{\dbVVC}}{\dvVVC{=}4}}\hFls
			\end{xbrackets} 
		}
	\end{align*}
	$\hV_3$ is not an $\SHMLnf$ formula since its conjunction is not disjoint (\eg the action \actOut{\pidV}{4} satisfies both branches). 
  % \ie $\hSemS{\actSN{\actOut{\dbVB}{\dbVVB}}{\dvVB{=}\pidV\land\dvVVB{\neq}3}}{\cap}\hSemS{\actSN{\actOut{\dbVC}{\dbVVC}}{\dvVVC{=}4}}{=}\set{\actOut{\pidV}{4}}$.
	%
	% Using the normalisation procedure of \cite{Cassar2018Concur}, 
  Still, we can reformulate $\hV_3$ as $\hV'_3\in\SHMLnf$:
	\begin{align*}
	\hV'_3 \defeq \hMaxX{\hNec{\actSN{\actIn{\dbVA}{\dbVVA}}{\dvVA{=}\pidV}}
		\begin{xbrackets}{c}
		\hNec{\actSN{\actOut{\dbVD}{\dbVVD}}{\dvVD{=}\pidV\land%\dvVVD{\neq}3\land
				\dvVVD{\neq}4 \land \dvVVD{\neq}3}}\hVarX\\[0.5em]
		 \hAndS \; \hNec{\actSN{\actOut{\dbVD}{\dbVVD}}{\dvVD{=}\pidV\land%\dvVVD{\neq}3\land
		 		\dvVVD{=}4}}\hFls
		\end{xbrackets} 
	}
	\end{align*}
	where $\dvVD$ and $\dvVVD$ are fresh variables. \exqed
\end{exa}

Our monitor synthesis function in \Cref{def:synthesis-bi} converts an \SHMLnf formula \hV into a transducer \eV.
This conversion also requires information regarding the input ports employed by the \sus, as this is used to add the necessary insertion branches to silently unblock the \sus at runtime; \nw{this prevents the monitor from unnecessarily blocking the resulting composite system.}
% the runtime progression of 
%  the resulting composite system unnecessarily.
.
The synthesis function must therefore be supplied with this information in the form of a \emph{finite} set of input ports $\prtSet{\,\subset\,}\Port$, which then relays this information to the resulting monitor. 
It also assumes a default value $\vVdef$ for the payload data domain.

\begin{defi} \label[definition]{def:synthesis-bi}
	% [Monitor Synthesis]
	The synthesis function $\eSembiS{-}{\,:\,}\SHMLnf{\,\times\,}\psetfin{\Port}{\,\to\,}\Trn$ is defined inductively as:
	\begin{align*}
		\eSembiP{\hVarX} & \defeq \rV 
    \\
		% \and
		\eSembiP{\hTru} &\defeq \eIden 
    \\ 
    \eSembiP{\hFls} &\defeq \eIden
    \\
		% \and
		\eSembiP{\hMaxX{\hV}} &\defeq \rec{\rV}{\eSembiP{\hV}} 
    \\
    % \and
		\eSembiP{\hV{=}\hBigAndD{i{\,\in\,}\IndSet}\hNec{\actSN{\pateI}{\bVI}}{\hVI}}
		&\defeq
		\rec{\rVV}{
			\ch{
				\begin{xbrackets}{c}
					\chBigI
					\begin{xbrace}{ll}
						\dis{\pateI}{\bVI}{\rVV}{\prtSet} & \quad \text{if } \hV_i{=}\hFls\\
						\prf{\actSIDs{\pateI}{\bVI}}{\eSembiP{\hV_i}} & \quad \text{otherwise}
					\end{xbrace}
				\end{xbrackets}
			}
			{
				\defmon{\hV}
			}
		} 
		%\tag*{
		%		\qed
		%	}
	\end{align*} 
	where $\dis{\pate}{\bV}{\eV}{\prtSet} \defeq 
	\begin{xbrace}{ll}
	\eDrp{\pate}{\bV}{\eV} & \quad \text{if } \pate=\actOut{\dbV}{\dbVV}\\[0.5em]
	\chBig{\pidVV{\,\in\,}\prtSet} \eIns{\bV\Sub{\pidVV}{\dvV}}{\actIn{\pidVV}{\vVdef}}{\eV} & \quad \text{if } \pate=\actIn{\dbV}{\dbVV} %5\text{ where \vVdef is} \\[-0.5em] & \quad \text{a default domain value.} 
	\end{xbrace}$
	\\
  and 
  $$\defmon{\hBigAndD{i{\,\in\,}\IndSet\!\!}\!\hNec{\actSN{\actIn{\bVar{\dvV_i}}{\bVar{\dvVV_i}}}{\bVI}}{\hVI}\hAndS\hVV} \defeq 
	\begin{xbrace}{ll} \\[1mm]
		\defmonE & \!\!\whent \IndSet{=}\emptyset \\[0.5em]
		\defmonFull & \otherwiset
	\end{xbrace}$$
	where $\hVV$ has no conjuncts starting with an input modality, variables $\dvV$ and $\dvVV$ are fresh, and $\vVdef$ is a default value. \exqed
\end{defi}
The definition above assumes a bijective mapping between formula variables and monitor recursion variables. 
% and converts logical variables \hVarX accordingly,  whereas maximal fixpoints, $\hMaxXF$, are converted into the corresponding recursive monitor.
% %
% The synthesis also converts truth, \hTru, and falsehood, \hFls, formulas into the identity monitor \eIden.
%
Normalised conjunctions, $\hBigAndU{i{\,\in\,}\IndSet}\hNec{\pateI,\bVI}{\hVI}$, are synthesised as a \emph{recursive summation} of monitors, \ie $\rec{\rVV}{\chBigU{i\in\IndSet}\eV_i}$, where $\rVV$ is fresh, and every branch $\eV_i$ can be one of the following:
\begin{enumerate}[$(i)$]
	\item when $\eV_i$ is derived from a branch of the form $\hNec{\pateI,\bVI}\hVI$ where $\hVI{\,\neq\,}\hFls$, the synthesis produces a monitor with the \emph{identity transformation} prefix, $\actSN{\pateI}{\bVI}$, followed by the monitor synthesised from the continuation $\hVI$,
	\ie  \eSembiP{\hV_i};
  % $\hNec{\actSN{\pateI}{\bVI}}\hVI$ is synthesised as $\prf{\actSN{\pateI}{\bVI}}{\eSembiP{\hV_i}}$;
	%
	\item when $\eV_i$ is derived from a violating branch of the form $\hNec{\pateI,\bVI}\hFls$, the synthesis produces an \emph{action disabling transformation} via $\dis{\pateI}{\bVI}{\rVV}{\prtSet}$. 
	%
	%\ie a branch of the form $\hNec{\actSN{\pateI}{\bVI}}\hFls$ is translated into $\prf{\dis{\pateI}{\bVI}{\rVV}{\prtSet}}{\rVV}$.
\end{enumerate} 
Specifically, in clause $(ii)$, the \disS function produces either a \emph{suppression transformation}, \actSTD{\pateI}{\bVI}, when \pateI is an \emph{output} pattern, $\actOut{(\dvV_i)}{(\dvVV_i)}$, or a \emph{summation of insertions}, $\chBigU{\pidVV\in\prtSet}\eIns{\bVI\Sub{\pidVV}{\dvV_i}}{\actIn{\pidVV}{\vVdef}}{\eV_i}$, when \pateI is an \emph{input} pattern, $\actIn{(\dvV_i)}{(\dvVV_i)}$. 
The former signifies that the monitor must react to and suppress every matching (invalid) system output thus stopping it from reaching the environment. 
By not synthesising monitor branches that react to the erroneous input, the latter allows the monitor to hide the input synchronisations from the environment. %thereby preventing the composite system from procuring the erroneous input.
At the same time, the synthesised insertion branches insert a default domain value \vVdef on every port $\pidVV{\,\in\,}\prtSet$ whenever the branch condition $\bVI\Sub{\pidVV}{\dvV_i}$ evaluates to true at runtime.
This stops the monitor from blocking  
the runtime progression of 
 the resulting composite system unnecessarily. 
This blocking mechanism can, however, block \emph{unspecified} inputs, \ie those that do not satisfy any modal necessity in the normalised conjunction.
This is undesirable since the unspecified actions do not contribute towards a safety violation and, 
% on the contrary, 
instead, lead to its trivial satisfaction.
To prevent this, the \emph{default monitor} $\defmon{\hV}$ is also added to the resulting summation.
Concretely, the \defmonS function produces a \emph{catch-all} identity monitor that forwards an input to the \sus whenever it satisfies the negation of \emph{all} the conditions associated with modal necessities for input patterns in the normalised conjunction.
%
% Put differently, the default monitor allows inputs to reach the system whenever they satisfy the condition $\hBigAnd{i{\in}\IndSet}\lnot\bVI$.
%
This condition is constructed for a normalised conjunction of the form $\hBigAndU{i\in\IndSet}\hNec{\actSN{\actIn{(\dvV_i)}{(\dvVV_i)}}{\bVI}}{\hVI}\,\hAndS\,\hVV$ (assuming that $\hVV$ does not include further input modalities). 
Otherwise, 
if none of the conjunct modalities define an input pattern, every input is allowed, \ie the default monitor becomes $\defmonE$,
%
% Upon matching a system action, the default monitor 
which transitions to \eIden after forwarding the input to the \sus.

\begin{exa} \label[example]{ex:synthesis-bi} %\cmt{highlight the issue with \prtSet, \ie the more info it has the better}
  % Consider the following unshortened 
  Recall (the full version of) formula $\hV_1$ from \Cref{ex:shml-formula-bi}.
  \begin{align*}
    \hV_1&\defeq\hVdefF \\
    \hV'_1&\defeq\hVdefAPF
  \end{align*}	
  For any arbitrary set of ports \prtSet, the synthesis of \Cref{def:synthesis-bi} produces the following monitor.
  \begin{align*}
  \eV_{\hV_1} & \!\!\defeq \!\!
  \rec{\rV}
  {
    \rec{\rVVV}
    {
      \ch{
        (
          \prf{\trnsInIdS\,}
          \rec{\rVV_1}
          {
            \eV'_{\hV_1}
          }
        \!)
      }
      {
        \prf{\actSIDs{\actIn{\dbVdef}{\dbVdc}}{\dvVdef=\pidVV}}\eIden
      }
    }
  } 
  \\
  \eV'_{\hV_1} & \!\!\defeq \!\! 
    \ch{ 
      \ch{\!\chBig{\!\!\pidV{\,\in\,}\prtSet}{\prf{\actSTI{\pidV{=}\dvV}{\actIn{\pidV}{\vVdef}}}\rVV_1}\!}
      {\!\prf{ \actSIDs{\actOut{\dbVB}{\dbVVB}}{\dvVB{=}\dvV}}{\rec{\rVV_2}{\eV''_{\hV_1}}} }
    \!\!\!}
    {\!\prf{\actSIDs{\actIn{\dbVdef}{\dbVdc}}{\dvVdef{\neq}\dvV}}\eIden\!}
  \\
  \eV''_{\hV_1} & \!\!\defeq \!\! 
    \ch{
      \ch{
        \prf{\actSTN{\actOut{\dbVC}{\dbVdc}}{\dvVC{=}\dvV}{\!\actdot}}\rVV_2
      \!}
      {\!
        \prf{\actSN{\actOut{\dbVD}{\dbVVC}}{\dvVD{=}\pidVV\land\dvVVC{=}\loggTuple{\dvVVA}{\dvVVB}}}{\rV}
      }
    \!}
    {\!
      \prf{\actSet{\actIn{\dbVdc}{\dbVdc}}}\eIden
    }	
  \end{align*}
  Monitor $\eV_{\hV_1}\!$ can be optimised by removing redundant 
  recursive constructs such as 
  % recursions, \eg 
  $\rec{\rVVV}{\_}$ that are introduced mechanically by our synthesis.
  \exqed
  \end{exa}
  \medskip

  Monitor $\eV_{\hV_1}\!$ from \Cref{ex:synthesis-bi} (with $\eSembiP{\hV_1} = \eV_{\hV_1}$) is very similar to \eVdet of \Cref{ex:transducers}, differing only in how it defines its insertion branches for unblocking the \sus. 
  For instance, if we consider $\prtSet=\set{\pidVV,\pidVVV}$, $\eSembiP{\hV_1}$ would synthesise two insertion branches, namely $\actSTI{\pidVV=\dvV}{\actIn{\pidVV}{\vVdef}}$ and $\actSTI{\pidVVV=\dvV}{\actIn{\pidVVV}{\vVdef}}$, 
  % whereas
  but if \prtSet also includes $\pidVVVV$, it would add another branch.
  By contrast,  the manually defined 
  \eVdet attains the same result more succinctly via the single insertion branch $\trnsInInsS$.
  %
  %
  % As argued in \Cref{ex:oenf-fail}, 
  % In general, it is not always possible to define a monitor like \eVdet, especially when the formula defines complex conditions in its violating modal necessities, such as $\hV_2$ of \Cref{ex:oenf-fail}.
  %
  %Hence, the lack of ``elegance'' comes as a cost of generality.
  %
  %This is possible since the violating modal necessity $\hNec{\dvVVVInBPat}\hFls$ of $\hV_1$ is shorthand for $\hNec{\actSN{\actIn{\dbVA}{\dbVVB}}{\dvVA{=}\dvVVV}}\hFls$ (for some data binder \dbVA), and condition %% ELABORATE IF THIS EXPLANATION IS NECESSARY!!
  %
  % Despite this, the following results, namely, \Cref{thm:enf,thm:oenf}, show that 
  Importantly, our synthesis provides the witness monitors needed to show enforceability. 
  % and optimality.
  %
  \begin{thm}[Enforceability] \label[theorem]{thm:enf}
    \SHMLnf is bidirectionally enforceable using the monitors and instrumentation of \Cref{fig:mod-bi-re}. 
    % in a  setting. 
    % \qed
  \end{thm}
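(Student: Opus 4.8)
The plan is to exhibit, for every formula, the monitor produced by the synthesis of \Cref{def:synthesis-bi} as the witness required by \Cref{def:enforceability}, and to show that it adequately enforces that formula in the sense of \Cref{def:enforcement}. Concretely, fix $\hV\in\SHMLnf$ and pick any finite $\prtSet\subset\Port$ (for instance the ports occurring in $\hV$); set $\eV=\eSembiP{\hV}$. By \Cref{def:enforcement} it suffices to discharge two obligations: \emph{soundness} $\senfdef{\eV}{\hV}$ (\Cref{def:senf}) and \emph{eventual transparency} $\evtenfdef{\eV}{\hV}$ (\Cref{def:evtenf}). Both quantify over all systems, so the argument must be uniform in the instrumented system. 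I would first observe that $\prtSet$ governs only the insertion branches that unblock a stuck system, and therefore influences neither criterion (both are insensitive to liveness); hence the particular choice of $\prtSet$ is immaterial here.

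The heart of the proof is a correspondence invariant tying the monitor state reached after a visible trace to the residual formula computed by $\afterFS$. I would prove, by induction on the derivation of the composite transition and following the clauses of \Cref{def:synthesis-bi}, that whenever $\eSembiP{\hV}[\pV]\wtraS{\acta}\eVV[\pVV]$ for a \emph{visible} action $\acta$, then $(i)$ $\afterF{\hV}{\acta}\neq\hFls$ and $(ii)$ $\eVV\bisim\eSembiP{\afterF{\hV}{\acta}}$ (using that synthesis commutes with the matching substitution, so $\eSembiP{\hVI}\sV=\eSembiP{\hVI\sV}$, and that $\SHMLnf$ is closed under $\afterFS$, so the right-hand side is defined). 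The case analysis is: a visible $\acta$ matching a non-violating guard $\hNec{\pateI,\bVI}{\hVI}$ with $\hVI\neq\hFls$ is passed through by the identity branch, and by definition $\afterF{\hV}{\acta}=\hVI\sV$; an $\acta$ matched by no input guard is admitted by $\defmon{\hV}$, with residual $\hTru$ and $\eSembiP{\hTru}=\eIden$; and, crucially, an $\acta$ matching a \emph{violating} guard $\hNec{\pateI,\bVI}{\hFls}$ can never appear as a visible label, because $\disS$ either suppresses the offending output (rule \rtit{biDisO}, a silent step) or omits any reacting branch for the offending input, which the instrumentation then blocks (inputs never default, by the restriction on rule \rtit{biDef}). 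Iterating the invariant along a trace shows that every visible trace $\tr$ of $\eSembiP{\hV}[\pV]$ satisfies $\afterF{\hV}{\tr}\neq\hFls$. Soundness then follows from the bad-prefix characterisation of safety formulas---every state satisfies $\hV$ precisely when none of its reachable traces $\tr$ make $\afterF{\hV}{\tr}$ equal to $\hFls$---whose non-trivial direction is proved coinductively from \Cref{thm:sem-just-after}.

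For eventual transparency I would reuse the invariant together with a \emph{transparency-at-satisfaction} claim: if $\pVV\in\hSemS{\hVV}$ then $\eSembiP{\hVV}[\pVV]\bisim\pVV$. The key observation is that a system satisfying $\hVV$ can never fire an action matching a violating guard of $\hVV$, since such a derivative would have to lie in $\hSemS{\hFls}=\emptyset$; consequently the only active branches are the identity branches and the default monitor---none of which alters visible behaviour---and, for the same reason, the insertion branches are never triggered on a satisfying system. Exhibiting $\Set{(\eSembiP{\hVV}[\pVV],\pVV)\mid\pVV\in\hSemS{\hVV}}$, closed under the derivatives permitted by \Cref{thm:sem-just-after}, as a strong bisimulation then yields the claim. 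Instantiating $\hVV=\afterF{\hV}{\tr}$ and combining with invariant $(ii)$ discharges \Cref{def:evtenf}: from $\eSembiP{\hV}[\pV]\wtraS{\tr}\eVV[\pVV]$ with $\pVV\in\hSemS{\afterF{\hV}{\tr}}$ we obtain $\eVV\bisim\eSembiP{\afterF{\hV}{\tr}}$ and hence $\eVV[\pVV]\bisim\eSembiP{\afterF{\hV}{\tr}}[\pVV]\bisim\pVV$.

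I expect the main obstacle to be making invariant $(ii)$ precise in the presence of the silent transitions introduced by the monitor. Suppression steps (rule \rtit{biDisO}) and default-value insertions (rule \rtit{biDisI}) advance the underlying system while leaving the visible trace, and hence $\afterF{\hV}{\tr}$, unchanged, so the monitor and the residual formula can drift apart syntactically; the invariant must therefore be stated up to $\bisim$ and must absorb the recursion-unfolding that insertions induce, since the insertion branch loops back to its enclosing $\rec{\rVV}{\eVV}$. Verifying that these internal steps preserve the candidate bisimulation, together with discharging the safety characterisation lemma underpinning soundness and the closure of $\SHMLnf$ under $\afterFS$, will be the delicate parts; the remaining cases are routine inductions driven directly by the clauses of \Cref{def:synthesis-bi} and the defining equations of $\afterFS$.
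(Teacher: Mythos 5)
Your plan has the same overall architecture as the paper's proof: the witness demanded by \Cref{def:enforceability} is the synthesised monitor \eSembiP{\hV}, and the theorem reduces to soundness (\Cref{def:senf}) and eventual transparency (\Cref{def:evtenf}) of that monitor. For eventual transparency your route is essentially the paper's own: your correspondence invariant is \Cref{lemma:evt-x} (which the paper proves with \emph{syntactic equality} $\eV'=\eSembiP{\afterF{\hV}{\tr}}$ rather than up to \bisim, using \Cref{lemma:evt-y-bi} to show that the silent disabling steps leave the monitor state literally unchanged), your transparency-at-satisfaction claim is \Cref{lemma:transparency-bi}, and their combination mirrors the proof of \Cref{lemma:evt-transparency-bi}. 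Where you genuinely differ is soundness: the paper (\Cref{lemma:soundness-bi}) proves it by a single direct coinduction, exhibiting a relation containing the pairs $(\eBI{\eSembiP{\hV}}{\pV},\hV)$ (plus fixpoint-unfolded variants) and checking it against the rules of \Cref{fig:uhml-sat}, whereas you factor it through a monitor-independent ``no bad prefix'' characterisation of \SHML satisfaction and then reuse clause $(i)$ of your invariant. Your decomposition buys reuse---one invariant serves both halves, and the characterisation is a fact about the logic alone---at the price of an extra lemma whose nontrivial direction is itself a coinduction of roughly the same weight as the paper's direct argument.

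Three points require care if you execute this plan. First, because \afterFS distributes over conjunctions, the residual of a violated normalised conjunction is $\hAnd{\hFls}{\hBigAndU{i}\hTru}$, which is \emph{not} syntactically \hFls (the paper works only modulo associativity and commutativity of \hAndS); you should phrase the characterisation as unsatisfiability of the residual, \ie \nisSat{\afterF{\hV}{\tr}}, rather than syntactic equality with \hFls, otherwise the iteration step of your invariant breaks on conjunctions. Second, stating invariant $(ii)$ up to \bisim obliges you to prove that instrumentation preserves monitor bisimilarity, so that $\eVV\bisim\eSembiP{\afterF{\hV}{\tr}}$ yields $\eBI{\eVV}{\pV'}\bisim\eBI{\eSembiP{\afterF{\hV}{\tr}}}{\pV'}$; this congruence lemma appears nowhere in the paper precisely because the paper's version of the invariant delivers equality instead of bisimilarity. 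Third, the paper's own soundness statement carries a side condition you declare immaterial---the ports of \pV must lie in \prtSet---so if you drop it you must argue explicitly that blocking inputs on uncovered ports only shrinks the composite behaviour and therefore cannot falsify a safety formula. None of these is fatal, but all three must be discharged for the proof to go through.
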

  \begin{proof}
    % Since \SHML is logically equivalent to \SHMLnf, 
    By \Cref{def:enforceability} the result follows from showing that for every $\hV{\,\in\,}\SHMLnf$ and $\prtSet{\,\subseteq\,}\Port$, \eSembiP{\hV} enforces \hV (for every \prtSet). 
    By \Cref{def:enforcement}, this follows from  \Cref{lemma:soundness-bi,lemma:evt-transparency-bi}, stated and proved in \Cref{sec:enforce-proof}. 
  \end{proof}  

\Cref{thm:enf} entails that the synthesised monitors generated by the function described in \Cref{def:synthesis-bi} do \emph{enforce} their respective \SHMLnf formula and are correct by construction.
\nw{By this we mean that, if the formula $\hV$ being enforced can be expressed in the syntactic fragment \SHMLnf, and it is satisfiable (\ie $\hSemS{\hV} \neq \emptyset$), then the resulting composite system, \eI{\eV}{\pV}, consisting of the synthesises monitor, \mV, composed with the \sus, \pV, is guaranteed to satisfy \hV and the changes to its original behaviour are only those that led to a violation.
It is worth pointing out that should \hV be unsatisfiable, there is very little that can be done by way of enforcement;  
%
% Accordingly, 
the satisfiability caveats in \Cref{def:senf,def:tenf,def:evtenf}  are intentionally inserted so as not to require anything of the synthesised monitor in such cases.  
We argue that this way of dealing with unsatisfiable formulas is not a deficiency of the enforcement setup, but rather a flaw in the correctness specifications being imposed. 
}

\nw{
Note that the enforcement of formulas that use \recHML constructs outside of the \SHML is problematic. 
For instance, consider the disjunction formula $\hV_1 \vee \hV_2$ (recall that disjuctions are not part of the \SHML syntax).
In a branching-time setting, the subformulas $\hV_1$ and $\hV_2$ can, in principle, describe computation from different parts of the computation tree. 
This means that, although the current execution observed by a monitor might provide enough information to determine that one subformula is about to be violated (say $\hV_1$), there could never be an execution that allows the monitor to determine when to intervene whenever \emph{both} subformulas become violated.
More precisely, by intervening to prevent $\hV_1$ from being violated might break transparency, \Cref{def:evtenf}, in cases where $\hV_2$ is still satisfied (and thus $\hV_1 \vee \hV_2$ still holds).
Conversely, not intervening might affect soundness, \Cref{def:senf}, in cases where $\hV_2$ is also violated (and thus $\hV_1 \vee \hV_2$ is certainly violated). 
It has been well established that a number of \recHML properties are not monitorable for a variety of settings~\cite{Francalanza2017FMSD,Achilleos2018FSTTCS,Achilleos2018Fossacs,Aceto2019POPL,AcetoAFIL21,AcetoAFIL21b} and it is therefore reasonable to expect similar limits in the case of enforceability.
}

\subsection{Enforceability Proofs}
\label[section]{sec:enforce-proof}

  \begin{figure}[t]
    \begin{displaymath}
    \begin{array}{r@{\;\,}c@{\;\,}ll} 
    (\pV,\hTru)&\in&\R & \imp \textsl{ true } \\[.5mm]
    (\pV,\hFls)&\in&\R & \imp \textsl{ false } \\[.5mm]
    (\pV,\hBigAnd{i\in\IndSet}\hV_i)&\in&\R & \imp (\pV,\hV_i)\in\R \textsl{ for all } i{\,\in\,}\IndSet \\[.5mm]
    (\pV,\hNec{\actSN{\pate}{\bV}}\hV)&\in&\R &\imp (\forall\acta,\pVV\cdot\pV\wtraS{\acta}\pVV \textsl{ and } \mtchS{\actSN{\pate}{\bV}}{\acta}=\sV)\,\imp\, (\pVV,\hV\sV)\in\R \\[.5mm]
    (\pV,\hMaxXF)&\in&\R & \imp (\pV,\hV\Sub{\hMaxXF}{\hVarX})\in\R \\[.5mm]
    \end{array}
    \end{displaymath}
    where $\mtchS{\actSN{\pate}{\bV}}{\acta}=\sV$ is short for $\mtch{\pate}{\acta}=\sV$ and $\ceval{\bV\sV}{\btrue}$.
    \caption{A satisfaction relation for \SHML formulas}
    \label{fig:uhml-sat}
  \end{figure}

  In what follows, we state and prove monitor soundness and transparency, \Cref{def:senf,def:evtenf} for the synthesis function presented in \Cref{def:synthesis-bi}.
  Upon first reading, the remainder of the section can be safely skipped without affecting the comprehension of the remaining material.
  
  To facilitate the forthcoming proofs we occasionally use the satisfaction semantics for \SHML from \cite{Aceto1999TestingHML,Hennessy1995} which is defined in terms of the \emph{satisfaction relation}, \hSat. When restricted to \SHML, \hSat is the \emph{largest relation} \R satisfying the implications defined in \Cref{fig:uhml-sat}. 
  It is well known that this semantics agrees with the \SHML semantics of \Cref{fig:recHML}. 
  As a result, we use $\pV\hSat\hV$ in lieu of $\pV\in\hSemS{\hV}$. 
  At certain points in our proofs we also refer to the \actt-closure property of \SHML, \Cref{prop:tau-closure}, that was proven in \cite{Aceto1999TestingHML}.
  \begin{prop}
    \label[proposition]{prop:tau-closure}
    if $\pV\traS{\actt}\pV'$ and $\pV\hSatS\hV$ then $\pV'\hSatS\hV$. \qed
  \end{prop}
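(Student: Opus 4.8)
The plan is to prove the statement coinductively, exploiting the characterisation in \Cref{fig:uhml-sat}: since $\hSat$ (restricted to \SHML) is the \emph{largest} relation $\R$ closed under the listed implications, it suffices to exhibit \emph{some} relation that contains the pairs we care about and is itself closed under those implications; maximality then forces it into $\hSat$. Concretely, I would set
\[
\R \defeq \hSat \,\cup\, \Set{(\pV',\hV) \mid \exists\, \pV.\; \pV\traS{\actt}\pV' \text{ and } \pV\hSat\hV}
\]
and show that $\R$ satisfies every implication of \Cref{fig:uhml-sat}. Granting this, $\R\subseteq\hSat$, and since any $\pV',\hV$ with $\pV\traS{\actt}\pV'$ and $\pV\hSat\hV$ give $(\pV',\hV)\in\R$ by construction, I conclude $\pV'\hSat\hV$, as required.

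The crux is the necessity case; the pairs coming from the left disjunct ($\hSat$ itself) are closed because $\hSat$ already satisfies all the implications. So take $(\pV',\hNec{\actSN{\pate}{\bV}}{\hV})\in\R$ arising from the right disjunct, witnessed by $\pV\traS{\actt}\pV'$ and $\pV\hSat\hNec{\actSN{\pate}{\bV}}{\hV}$, and suppose $\pV'\wtraS{\acta}\pVV$ with $\mtchS{\actSN{\pate}{\bV}}{\acta}{=}\sV$. The key observation I would invoke is that a weak transition absorbs a leading $\actt$-step: unfolding $\pV'\wtraS{\acta}\pVV$ as $\pV'\,(\traS{\actt})^{\ast}\cdot\traS{\acta}\,\pVV$ and prefixing $\pV\traS{\actt}\pV'$ yields $\pV\,(\traS{\actt})^{\ast}\cdot\traS{\acta}\,\pVV$, i.e.\ $\pV\wtraS{\acta}\pVV$. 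Since $\pV\hSat\hNec{\actSN{\pate}{\bV}}{\hV}$, the defining implication gives $\pVV\hSat\hV\sV$, hence $(\pVV,\hV\sV)\in\hSat\subseteq\R$, which is exactly what closure under the necessity implication demands.

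The remaining obligations reduce to the same pattern: if $(\pV',\hBigAnd{i\in\IndSet}\hV_i)\in\R$ via $\pV\traS{\actt}\pV'$, then $\pV\hSat\hV_i$ for each $i$, so each $(\pV',\hV_i)\in\R$ via the same $\actt$-step; and $(\pV',\hMaxXF)\in\R$ gives $\pV\hSat\hV\Sub{\hMaxXF}{\hVarX}$ by unfolding, hence $(\pV',\hV\Sub{\hMaxXF}{\hVarX})\in\R$. The $\hTru$ implication is trivial, and the $\hFls$ implication holds vacuously, since no pair $(\pV',\hFls)$ can enter $\R$ (it is excluded from $\hSat$, and the right disjunct would require the impossible $\pV\hSat\hFls$). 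I do not expect a genuine obstacle: the whole argument hinges on the single structural fact that a leading $\actt$-transition can be merged into a weak transition, immediate from the definition $\pV\wtraS{\acta}\pVV \deftri \pV\,(\traS{\actt})^{\ast}\cdot\traS{\acta}\,\pVV$. The one point to state carefully is that coinduction (maximality of $\hSat$), rather than structural induction on $\hV$, is what handles the greatest-fixpoint construct $\hMaxXF$ without circularity; packaging the claim as ``the augmented relation $\R$ is closed under the \Cref{fig:uhml-sat} implications'' is precisely what makes that case go through.
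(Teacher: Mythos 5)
Your proof is correct. Note that the paper itself contains no proof of \Cref{prop:tau-closure}: it states the proposition with a \qed and attributes it to prior work \cite{Aceto1999TestingHML}, so there is no in-paper argument to compare against — your proposal effectively fills in that citation with a self-contained argument. Your route is the natural one: augment \hSat with all pairs $(\pV',\hV)$ such that $\pV\traS{\actt}\pV'$ for some $\pV$ with $\pV\hSat\hV$, and check that the augmented relation satisfies the implications of \Cref{fig:uhml-sat}, so that maximality forces it inside \hSat. Each case check is sound: the necessity case correctly exploits that the paper's weak transitions are exactly $\pV(\traS{\actt})^{\ast}\cdot\traS{\acta}\pVV$, so a leading $\actt$-step is absorbed; the conjunction and fixpoint cases correctly reuse the same witnessing $\actt$-step; and the $\hFls$ obligation is vacuous as you argue. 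One small point worth making explicit is that your argument relies on \hSat itself being closed under the implications (you use this both for the pairs in the left disjunct and to derive, e.g., $\pV\hSat\hV_i$ from $\pV\hSat\hBigAnd{i\in\IndSet}\hV_i$); this is standard — the union of all relations satisfying the implications again satisfies them, since the conclusions mention membership in the relation only positively — but it deserves a sentence. Your closing remark is also the right diagnosis: coinduction via maximality, rather than structural induction on \hV, is what renders the $\hMaxXF$ case unproblematic; a structural-induction proof over the denotational semantics of \Cref{fig:recHML} is possible but needs a more delicate strengthening (roughly, that $\hSemS{\hV,\rho}$ is $\actt$-closed whenever $\rho$ maps every variable to a $\actt$-closed set).
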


We start by stating and proving synthesis soundness, which relies on the following technical lemma relating recursive monitor unfolding and its behaviour.
\begin{lem}
  \label[lemma]{lem:monitor-unfold-bisim}
  \eBI{\rec{\rV}{\eV}}{\pV} \bisim \eBI{\bigl(\eV\sub{\rec{\rV}{\eV}}{\rV}\bigr)}{\pV}
\end{lem}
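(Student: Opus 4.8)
The plan is to exploit the fact that, by rule \rtit{eRec}, the recursive monitor $\rec{\rV}{\eV}$ and its one-step unfolding $\eV\sub{\rec{\rV}{\eV}}{\rV}$ induce \emph{exactly the same} monitor transitions. First I would record this as the key observation $(\star)$: since $\rec{\rV}{\eV}$ is syntactically a recursion (neither a summation nor a transformation prefix), the only monitor rule whose conclusion can have the shape $\rec{\rV}{\eV}\traS{\ioact{\actg}{\actgp}}\eVV$ is \rtit{eRec}, whose premise is precisely $\eV\sub{\rec{\rV}{\eV}}{\rV}\traS{\ioact{\actg}{\actgp}}\eVV$. Hence $\rec{\rV}{\eV}\traS{\ioact{\actg}{\actgp}}\eVV$ holds if and only if $\eV\sub{\rec{\rV}{\eV}}{\rV}\traS{\ioact{\actg}{\actgp}}\eVV$ holds, with the \emph{same} label and the \emph{same} target $\eVV$. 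In other words, the two monitors possess identical outgoing-transition relations.

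Next I would observe that every instrumentation rule in \Cref{fig:mod-bi-re} determines a composite transition $\eBI{\eV}{\pV}\traS{\actu}\eBI{\eVV}{\pV'}$ solely from $(i)$ a transition of the underlying system \pV and $(ii)$ positive monitor transitions of the form $\eV\tra{\ioact{\actg}{\actgp}}\eVV$, together with the negative monitor conditions used in \rtit{biDef} (namely $\eV\ntra{\actOutV}$ and $\eV\ntra{\ioact{\actdot}{\actOutVV}}$). By $(\star)$, both the positive and the negative monitor premises coincide for $\rec{\rV}{\eV}$ and for $\eV\sub{\rec{\rV}{\eV}}{\rV}$; moreover the composite target is built from the shared monitor target \eVV and system target $\pV'$ (and equals $\eBI{\eIden}{\pV'}$ in the \rtit{biDef} case, which depends on $\pV'$ alone). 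Consequently, for every action \actu and every composite state $C$, we have $\eBI{\rec{\rV}{\eV}}{\pV}\traS{\actu}C$ if and only if $\eBI{\eV\sub{\rec{\rV}{\eV}}{\rV}}{\pV}\traS{\actu}C$; that is, the two composite states have literally the same set of outgoing transitions.

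Finally I would conclude by exhibiting the strong bisimulation
\[
\R \defeq \Set{\,(\eBI{\rec{\rV}{\eV}}{\pV},\, \eBI{\eV\sub{\rec{\rV}{\eV}}{\rV}}{\pV}) \mid \pV{\,\in\,}\Sys\,}\;\cup\;\Set{(C,C)\mid C \text{ a composite state}}.
\]
The identity pairs are trivially matched, and for the remaining pairs the transition-coincidence established above shows that any move of one component is matched by the identical move of the other, landing in a pair $(C,C)\in\R$; symmetry is immediate. This yields $\eBI{\rec{\rV}{\eV}}{\pV}\bisim\eBI{(\eV\sub{\rec{\rV}{\eV}}{\rV})}{\pV}$, as required.

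The step I expect to require the most care is the treatment of the negative premises of \rtit{biDef} (and, implicitly, the default blocking of unspecified inputs, whereby the absence of a matching monitor reaction prevents the composite transition): one must verify that ``the monitor cannot react'' transfers in both directions across the unfolding. This is, however, exactly the contrapositive of $(\star)$, so no genuine difficulty arises; the remainder is a routine rule-by-rule inspection confirming that each instrumentation rule fires identically on the two sides.
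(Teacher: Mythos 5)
Your strategy is exactly the one the paper's (two-line) proof appeals to: rule \rtit{eRec} makes $\rec{\rV}{\eV}$ and $\eV\sub{\rec{\rV}{\eV}}{\rV}$ transition-equivalent as monitors, and the instrumentation rules then lift this to the composite states. Your observation $(\star)$ is sound, since \rtit{eRec} is indeed the only monitor rule whose conclusion can have a recursion as its source, and it preserves both label and target. However, one intermediate step fails as stated: the claim that $\eBI{\rec{\rV}{\eV}}{\pV}$ and $\eBI{\eV\sub{\rec{\rV}{\eV}}{\rV}}{\pV}$ have \emph{literally the same} set of outgoing transitions is false, because of rule \rtit{biAsy}. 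That rule leaves the monitor component untouched: a silent system move $\pV\traS{\actt}\pV'$ yields $\eBI{\rec{\rV}{\eV}}{\pV}\traS{\actt}\eBI{\rec{\rV}{\eV}}{\pV'}$ on one side and $\eBI{\eV\sub{\rec{\rV}{\eV}}{\rV}}{\pV}\traS{\actt}\eBI{\eV\sub{\rec{\rV}{\eV}}{\rV}}{\pV'}$ on the other. These targets are syntactically distinct composite states (the monitor stays folded on the left and unfolded on the right), so neither transition is matched by a transition to the \emph{identical} state $C$, and your assertion that every matching move lands in a diagonal pair $(C,C)\in\R$ is wrong for \rtit{biAsy}-derived moves.

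Fortunately, the relation $\R$ you exhibit already absorbs this defect: since its first component contains the pair $(\eBI{\rec{\rV}{\eV}}{\pVV},\,\eBI{\eV\sub{\rec{\rV}{\eV}}{\rV}}{\pVV})$ for \emph{every} system state $\pVV$, not just the given $\pV$, the \rtit{biAsy}-residuals land in that first component rather than on the diagonal. The repair is therefore only in the case analysis, not in the relation: for \rtit{biTrnO}, \rtit{biTrnI}, \rtit{biDisO}, \rtit{biDisI}, \rtit{biEnO}, \rtit{biEnI} and \rtit{biDef} the target monitor is a common derivative $\eVV$ (or $\eIden$), and matching moves do land on the diagonal, with the negative premises of \rtit{biDef} transferring by the contrapositive of $(\star)$ exactly as you say; for \rtit{biAsy} the matching moves land in the first component of $\R$ with the system state advanced to $\pV'$. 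With that adjustment your argument is correct and is precisely the elaboration of the proof the paper merely sketches.
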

\begin{proof}
  Follows from the instrumentation relation of \Cref{fig:mod-bi-re} and, more importantly, the monitor rule \rtit{eRec}, also in \Cref{fig:mod-bi-re}.
\end{proof}

  \begin{prop}[Soundness] 
    \label[proposition]{lemma:soundness-bi} For every finite port set \prtSet, system state $\pV{\,\in\,}\Sys$ whose port names are included in \prtSet, and $\hV{\,\in\,}\SHMLnf$, if $\isSatF$ then $\eBI{\eSembiP{\hV}}{\pV}{\,\in\,}\hSemS{\hV} $. 
    % \qed
  \end{prop}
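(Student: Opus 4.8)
The plan is to argue coinductively, using the satisfaction relation $\hSat$ of \Cref{fig:uhml-sat}; since $\hSat$ coincides with $\hSemS{\cdot}$ and is the \emph{largest} relation closed under the stated implications, it suffices to exhibit a relation $\R$ that contains $(\eBI{\eSembiP{\hV}}{\pV},\hV)$ and is closed under those implications, whence $\R\subseteq\hSat$. As states are taken up to strong bisimilarity and $\hSat$ respects $\bisim$, I may work with the representatives $\eBI{\eSembiP{\hVV}}{\pV}$. Because in \SHMLnf conjunction and necessity are bundled, I take
\[
\R \defeq \Set{(\eBI{\eSembiP{\hVV}}{\pV},\hVVVV) \mid \hVV{\in}\SHMLnf\text{ closed},\ \hVV{\neq}\hFls,\ \hVVVV\text{ is }\hVV\text{ or a conjunct }\hNec{\pateI,\bVI}\hVI\text{ of }\hVV,\ \text{ports along executions of }\pV\text{ in }\prtSet},
\]
pairing the \emph{full} synthesised monitor with the whole formula and with each top-level conjunct, so that the conjunction implication descends correctly. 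The invariant maintained is that $\R$ never pairs a state with $\hFls$; the seed pair qualifies because $\isSat{\hV}$ gives $\hV{\neq}\hFls$ (by a routine fact, every closed \SHMLnf formula other than $\hFls$ is satisfiable, witnessed by $\nil$, which meets every necessity vacuously). I also use that the synthesis commutes with substitution, both of a formula for a logical variable (matching $\rV$ under the assumed variable bijection) and of data values $\sV$ for pattern binders.

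The atomic cases are immediate: $\hVVVV{=}\hTru$ gives a vacuously true obligation, $\hVVVV{=}\hFls$ never occurs, and $\hVVVV{=}\hVarX$ does not arise for closed formulas. For a fixpoint $\hVV{=}\hMaxX{\hVVV}$, unfolding at the composite level via \Cref{lem:monitor-unfold-bisim} together with the substitution fact yields
\[
\eBI{\eSembiP{\hMaxX{\hVVV}}}{\pV}=\eBI{\rec{\rV}{\eSembiP{\hVVV}}}{\pV}\bisim\eBI{\eSembiP{\hVVV}\sub{\eSembiP{\hMaxX{\hVVV}}}{\rV}}{\pV}=\eBI{\eSembiP{\hVVV\sub{\hMaxX{\hVVV}}{\hVarX}}}{\pV},
\]
and since the unfolding is again a closed \SHMLnf formula distinct from $\hFls$, the unfolded pair lies in $\R$. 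Before the modal case I record the behaviour of the composite's silent steps: those produced by \rtit{biAsy}, \rtit{biDisO} (suppression of a would-be violating output) and \rtit{biDisI} (insertion of a default input) leave the second component unchanged and return the monitor to the enclosing recursion variable, hence keep the monitor $\bisim\eSembiP{\hVV}$; consequently each weak transition $\wtraS{\acta}$ demanded by the necessity implication factors as such formula-preserving $\actt$-steps around a single externalised $\acta$, with \Cref{prop:tau-closure} underpinning this bookkeeping.

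The crux is a conjunct $\hVVVV{=}\hNec{\pateI,\bVI}\hVI$, for which the necessity implication requires, for every $\acta$ with $\eBI{\eSembiP{\hVV}}{\pV}\wtraS{\acta}\eBI{\eV'}{\pV'}$, $\mtch{\pateI}{\acta}{=}\sV$ and $\ceval{\bVI\sV}{\btrue}$, that $(\eBI{\eV'}{\pV'},\hVI\sV){\in}\R$. By disjointness of the normalised guards, $\acta$ matches at most this branch. If $\hVI{\neq}\hFls$, the synthesis of \Cref{def:synthesis-bi} emits the identity branch $\prf{\actSIDs{\pateI}{\bVI}}{\eSembiP{\hVI}}$, so the only externalising rule that can fire is \rtit{biTrnO} (for an output) or \rtit{biTrnI} (for an input); it delivers $\pV\wtraS{\acta}\pV'$ and a residual monitor $\eV'\bisim\eSembiP{\hVI}\sV=\eSembiP{\hVI\sV}$, whence $(\eBI{\eV'}{\pV'},\hVI\sV){\in}\R$, as $\hVI\sV$ is a closed \SHMLnf formula distinct from $\hFls$.

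It remains to treat the violating continuation $\hVI{=}\hFls$, where the necessity implication can only be met by showing its antecedent vacuous, i.e.\ that the composite \emph{cannot} externalise any $\acta$ matching $(\pateI,\bVI)$; this disabling guarantee is the main obstacle. For an output guard, \Cref{def:synthesis-bi} emits the suppression $\actSTD{\pateI}{\bVI}$, so a matching output drives only \rtit{biDisO} (an internal $\actt$): rule \rtit{biTrnO} cannot apply (no branch rewrites $\acta$ to an output) and, crucially, \rtit{biDef} cannot apply either, since its side-condition $\eV\ntra{\acta}$ fails because the monitor \emph{does} react to $\acta$ by suppressing it. For an input guard, the synthesis emits no identity branch for $(\pateI,\bVI)$ but only insertion branches $\prf{\actSTI{\bVI\Sub{\pidVV}{\dvV}}{\actIn{\pidVV}{\vVdef}}}{\rVV}$ and a default monitor guarded by the negation of all input conditions, so no branch rewrites the environment input $\acta$; neither \rtit{biTrnI} nor \rtit{biEnI} can fire, and \rtit{biDef} is restricted to outputs, so $\acta$ is blocked. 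In both subcases no action matched by a violating branch is ever externalised, the obligation towards $\hFls$ holds vacuously, and $\R$ is closed. The delicate point, where I expect to spend most effort, is precisely this exhaustive rule analysis: ruling out \rtit{biDef} for suppressed outputs, confirming that blocked inputs escape every externalising rule despite the default monitor and the insertion branches, and threading these facts through the $\actt$-closure and unfolding bookkeeping that keep the invariants of $\R$ intact (in particular closure of the port set under derivatives). With closure established, $\R\subseteq\hSat$, so $\eBI{\eSembiP{\hV}}{\pV}\hSat\hV$, i.e.\ $\eBI{\eSembiP{\hV}}{\pV}{\in}\hSemS{\hV}$.
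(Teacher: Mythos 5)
Your proof is correct and follows essentially the same route as the paper's: both exhibit a candidate satisfaction relation containing $(\eBI{\eSembiP{\hV}}{\pV},\hV)$ and verify closure under the implications of \Cref{fig:uhml-sat}, with the same key case analysis --- identity branches propagate to $\eSembiP{\hVI\sV}$ for non-violating conjuncts, while output suppression (ruling out \rtit{biTrnO} and \rtit{biDef}) and input blocking (ruling out \rtit{biTrnI}, \rtit{biEnI} and \rtit{biDef}) render the obligations of $\hFls$-continuations vacuous. The only differences are bookkeeping: the paper threads satisfiability through its relation and adds an explicit second clause pairing fixpoint-prefixed synthesised monitors with their unfoldings, whereas you use the equivalent $\hVV\neq\hFls$ invariant, descend into conjuncts explicitly, and handle unfolding up to bisimilarity via \Cref{lem:monitor-unfold-bisim}.
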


\begin{proof}
  To prove that for every system \pV, formula \hV and finite set of ports \prtSet
    $$ \ift \isSat{\hV} \thent \eBI{\eSembiP{\hV}}{\pV}{\,\hSat\,}\hV $$
    we setup the relation \R (below) and show that it is a \emph{satisfaction relation} (\hSat) by demonstrating that that it abides by the rules in \Cref{fig:uhml-sat}. 
    \begin{align*}
      \R & \;\defeq\; 
      \Setdef{(\pVV,\hV)}
      {
        (i)\ \isSat{\hV} \andt \pVV = \eBI{\eSembiP{\hV}}{\pV} \qquad \ort\\
        (ii)\ \isSat{\hV} \andt \pVV = \eBI{\eSembiP{\hMax{\hVarX_1}{\ldots\hMax{\hVarX_n}{\hVV}}}}{\pV} \\
        \quad \andt \hV = \hVV \Sub{\hMax{\hVarX_1}{\ldots\hMax{\hVarX_n}{\hVV}}}{\hVarX_1}\ldots\Sub{\hMax{\hVarX_n}{\hVV}}{\hVarX_n}
      }
    \end{align*}
    The second case  defining the tuples $(\pVV,\hV) \in \R$ (labeled as $(ii)$ for clarity) maps monitored system \eBI{\mV}{\pV} where \mV is obtain by synthesising a formula consisting of a prefix of maximal fixpoint binders of length $n$, \ie $\mV = \eSembiP{\hV}$ where $\hV=\hMax{\hVarX_1}{\ldots\hMax{\hVarX_n}{\hVV}}$, with the \resp \emph{unfolded} formula $\hVV \Sub{\hMax{\hVarX_1}{\ldots\hMax{\hVarX_n}{\hVV}}}{\hVarX_1}\ldots\Sub{\hMax{\hVarX_n}{\hVV}}{\hVarX_n}$. 

     We prove the claim that 
    $\mathord{\relR} \subseteq  \mathord{\hSat}$ by case analysis on the structure of $\hV$.   We here consider the two main cases:

    \begin{description}[leftmargin=5mm]
      \item[Case $\hV = \hMaxX{\hVV}$]
      We consider two subcases for why $(\pVV,\hV) \in \R$, following either condition $(i)$ or $(ii)$:
      \begin{description}[leftmargin=5mm]
        \item[Case $(i)$] We know that $\isSat{\hMaxX{\hVV}}$ and $\pVV = \eBI{\eSembiP{\hMaxX{\hVV}}}{\pV}$ for some \pV. By the rules defining  (\hSat) in \Cref{fig:uhml-sat} we need to show that 
        $$(\eBI{\eSembiP{\hMaxX{\hVV}}}{\pV}, \hVV\Sub{\hMaxX{\hVV}}{\hVarX}) \in \R$$ 
        as well. This follows immediately from rule $(ii)$ defining \R.
        \item[Case $(ii)$]  We know that  
        $\isSat{\hMaxX{\hVV}}$, that 
        $$\qquad\qquad\hMaxX{\hVV} = \hMaxX{(\hVVV\Sub{\hMax{\hVarY_1}{\ldots\hMax{\hVarY_k}{\hMaxX{\hVVV}}}}{\hVarY_1}\ldots\Sub{\hMax{\hVarY_k}{\hMaxX{\hVVV}}}{\hVarY_k})}$$   for some  \hVVV and $k$, and that 
        $\pVV = \eBI{\eSembiP{\hMax{\hVarY_1}{\ldots\hMax{\hVarY_k}{\hMaxX{\hVVV}}}}}{\pV}$ for some \pV.  
        Again, by the rules defining  (\hSat) in \Cref{fig:uhml-sat} we need to show that 
        $$(\eBI{\eSembiP{\hMax{\hVarY_1}{\ldots\hMax{\hVarY_k}{\hMaxX{\hVVV}}}}}{\pV}, 
        \hVVV'
        ) \in \R$$ 
        for $\hVVV'= \hVVV\Sub{\hMax{\hVarY_1}{\ldots\hMax{\hVarY_k}{\hMaxX{\hVVV}}}}{\hVarY_1}$ $\ldots$ $\Sub{\hMax{\hVarY_k}{\hMaxX{\hVVV}}}{\hVarY_k} \Sub{\hMaxX{\hVVV}}{\hVarX}$.
        This follows again from rule $(ii)$ defining \R with an maximal fixpoint binder length set at $n=k+1$.
      \end{description}
      \item[Case $\hV = \hBigAndDhVI 
      \text{ and } 
      \bigdistinct{h\in\IndSet}\actSN{\pate_h}{\bV_h}$] Again we have two subcases to consider for why we have the inclusion $(\pVV,\hV) \in \R$:
      \begin{description}[leftmargin=5mm]
        \item[Case $(i)$] We know that 
        \begin{equation}
          \isSat{\hBigAndDhVI} 
          \label[equation]{eq:frm-sat}
        \end{equation}
        and that $\pVV = \eBI{\eSembiP{\hBigAndDhVI}}{\pV}$ for some \pV. Recall that 
        \begin{equation}
          \eSembiP{\hBigAndDhVI} = \rec{\rVV}{
          \ch{\Big(\chBigI
            \begin{xbrace}{ll}
            \dis{\pateI}{\bVI}{\rVV}{\prtSet} & \quad \text{if } \hV_i{=}\hFls\\
            \prf{\actSIDs{\pateI}{\bVI}}{\eSembiP{\hV_i}} & \quad \text{otherwise}
            \end{xbrace}
          \Big)}{\defmon{\hBigAndDhVI} }	
            }
            \label[equation]{eq:big-mon-syn}
      \end{equation}
      By the rules defining  (\hSat) in \Cref{fig:uhml-sat} (for the case involving $\hBigAnd{i\in\IndSet}\hV_i$ and $\hNec{\actSN{\pate}{\bV}}\hV$ combined) we need to show that 
      \begin{equation}
        \forall i\in \IndSet, \acta,\pVVV \text{ if } \eBI{\eSembiP{\hBigAndDhVI}}{\pV}\wtraS{\acta}\pVVV \text{ and } \mtchS{\actSN{\pate_i}{\bV_i}}{\acta}=\sV) \text{ then } (\pVVV,\hV_i\sV)\in\R
        \label{eq:sat-oblig}
      \end{equation}
       Pick any $\hNec{\actSNVI}\hVI$ and proceed by case analysis:
       \begin{description}[leftmargin=5mm]
         \item[Case $\hNec{\actSNVI}\hVI = \hNec{\actSN{\patOutV}{\bVI}}{\hFls}$] 
         For any output action $\actOutV$ that the system \pV can produce, \ie $\pV \wtraS{\actOutV} \pV'$, that matches the pattern of the necessity formula considered, \ie $\mtchS{\actSN{\patOutV}{\bVI}}{\actOutV} = \sV$,  the monitor synthesised in \Cref{eq:big-mon-syn} transitions as 
         $$ \eSembiP{\hBigAndDhVI} \traS{\ioact{(\actOutV)}{\actdot}} \eSembiP{\hBigAndDhVI}$$
         Thus, by the instrumentation in \Cref{fig:mod-bi-re} (particularly rule \rtit{biDisO}), we conclude that it could \emph{never} be the case that 
         $$\eBI{\eSembiP{\hBigAndDhVI}}{\pV}\wtraS{\actOutV}\pVVV \text{ for any } \pVVV$$ meaning that condition \eqref{eq:sat-oblig} is satisfied.
         \item[Case $\hNec{\actSNVI}\hVI = \hNec{\actSN{\patInV}{\bVI}}{\hFls}$]  The reasoning is analogous to the previous case. 
         For any input action $\actInV$ that $\pV \wtraS{\actInV} \pV'$, that matches the pattern of the necessity formula, $\mtchS{\actSN{\patInV}{\bVI}}{\actInV} = \sV$,  the monitor synthesised in \Cref{eq:big-mon-syn} transitions as 
         $$ \eSembiP{\hBigAndDhVI} \traS{\ioact{\actdot}{(\actInV)}} \eSembiP{\hBigAndDhVI}$$
         Thus, by the instrumentation in \Cref{fig:mod-bi-re} (particularly rule \rtit{biDisI}), we conclude that it could \emph{never} be the case that 
         $$\eBI{\eSembiP{\hBigAndDhVI}}{\pV}\wtraS{\actInV}\pVVV \text{ for any } \pVVV$$ meaning that condition \eqref{eq:sat-oblig} is satisfied. 
         \item[Case $\hVI \neq \hFls$]  
         From \Cref{eq:frm-sat} we know that for any \acta such that $\mtchS{\actSN{\pate_i}{\bV_i}}{\acta}=\sV$ it holds that 
         \begin{equation}
          \isSat{\hVI\sV}. \label[equation]{eq:sat-small}  
         \end{equation}
         Now if $\pV \wtra{\acta} \pV'$, from the form of \eSembiP{\hBigAndDhVI} in \Cref{eq:big-mon-syn} and $\bigdistinct{h\in\IndSet}\actSN{\pate_h}{\bV_h}$ we conclude that 
         $$ \eSembiP{\hBigAndDhVI} \traS{\ioact{\acta}{\acta}} \eSembiP{\hVI}\sV = \eSembiP{\hVI\sV}$$
         Thus, by the instrumentation in \Cref{fig:mod-bi-re} (particularly rules \rtit{biTrnI} and \rtit{biTrnO}) we conclude
         $$ \eBI{\eSembiP{\hBigAndDhVI}}{\pV} \wtraS{\acta} \eBI{\eSembiP{\hVI\sV}}{\pV'}$$
         and from \Cref{eq:sat-small} and the definition of \R (i) we conclude that \linebreak
         $(eBI{\eSembiP{\hVI\sV}}{\pV'},\hVI\sV) \in \R$, thus satisfying \Cref{eq:sat-oblig} as required.
       \end{description}
        \item[Case $(ii)$]  We know that \isSat{\hBigAndDhVI}, that for all $i \in \IndSet$ 
        \begin{equation*}
          \hVI = \hVV_{i}\Sub{\hMax{\hVarY_1}{\ldots\hMax{\hVarY_k}{\hVV_{i}}}}{\hVarY_1}\ldots\Sub{\hMax{\hVarY_k}{\hVV_{i}}}{\hVarY_k} \text{ for some } \hVV_{i}
        \end{equation*}
        and that $\pVV = \eBI{\eSembiP{\hMax{\hVarY_1}{\ldots\hMax{\hVarY_k}{\hBigAndDhVI}}}}{\pV}$ for some \pV.  Similar to the previous case, we need to show that \pVV satisfies a requirement akin to \Cref{eq:sat-oblig}.   This follows using a similar reasoning employed in the previous case, \Cref{lem:monitor-unfold-bisim} and the transitivity of (strong) bisimulation. \qedhere  
      \end{description} 
    \end{description}
\end{proof}

\bigskip
We next state and prove the fact that the synthesis function of \Cref{def:synthesis-bi} is eventually transparent, according to \Cref{def:evtenf}. 
%
% From \Cref{cor:evtran-implies-tran}, we can also conclude that it is transparent following \Cref{def:tenf}. 
%
This proof for eventual transparency refers to the auxiliary \Cref{lemma:evt-x} and another transparency result \Cref{lemma:transparency-bi} (Transparency) for \Cref{def:tenf}, defined and proved below.
% , and whose proofs are provided in \Cref{sec:appendix-aux}.
%
The proof of \Cref{lemma:evt-x}, in turn, relies on the following technical lemma 
\nw{which states that any sequence $\tau$ transitions from a composite system enforced by a monitor synthesised from a conjuncted modal guard formula according to \Cref{def:synthesis-bi} can be decomposed such that the monitored system is allowed to produce external actions by the monitor remains in the same state.
} 
\nw{which states that any sequence of $\tau$ transitions from a composite  system enforced by a monitor synthesised from a conjunctive formula,  according to \Cref{def:synthesis-bi}, stems from a corresponding sequence of external actions of the monitored system while the monitor remains in the same state.
}
% whose proof is given following the end of the current one.
	\begin{lem}\label[lemma]{lemma:evt-y-bi} 
		For every formula of the form $\hBigAndUhVI$ and system states $\pV$ and $\pVV$, if $\eBI{\eSembiP{\hBigAndUhVI}}{\pV}\traSC{\actt}\ \pVV$ then there exists some state  $\pV'$ and trace $\trr$ such that $\pV\wtraS{\trr}\pV'$ and $\pVV=\eBI{\eSembiP{\hBigAndUhVI}}{\pV'}$.
	\end{lem}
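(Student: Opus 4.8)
The plan is to argue by induction on the number of silent steps in the derivation $\eBI{\eSembiP{\hBigAndUhVI}}{\pV}\traSC{\actt}\pVV$, with the guiding invariant that each silent step keeps the monitor pinned to its (closed, recursive) ``home'' state $\eSembiP{\hBigAndUhVI}$ and advances only the system component. For the base case of zero steps, $\pVV = \eBI{\eSembiP{\hBigAndUhVI}}{\pV}$, so taking $\pV' = \pV$ and $\trr = \trE$ discharges the claim, since $\pV\wtraS{\trE}\pV$ holds reflexively.

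For the inductive step on an $(n{+}1)$-step sequence, I would isolate the first transition $\eBI{\eSembiP{\hBigAndUhVI}}{\pV}\traS{\actt}\pVVV$ and perform a case analysis on the instrumentation rule of \Cref{fig:mod-bi-re} that produces it. The key preliminary observation is that, of all the instrumentation rules, only \rtit{biAsy}, \rtit{biDisO} and \rtit{biDisI} carry the silent label $\actt$; the remaining rules \rtit{biTrnO}, \rtit{biTrnI}, \rtit{biEnO}, \rtit{biEnI} and \rtit{biDef} all externalise a visible action and so cannot begin a $\traSC{\actt}$ derivation. Rule \rtit{biAsy} yields $\pVVV = \eBI{\eSembiP{\hBigAndUhVI}}{\pV_1}$ with $\pV\traS{\actt}\pV_1$; rule \rtit{biDisO} forces the monitor to suppress an output, giving $\pVVV = \eBI{\eVV}{\pV_1}$ with $\pV\traS{\actOutV}\pV_1$; and rule \rtit{biDisI} forces the monitor to insert an input that the system consumes, giving $\pVVV = \eBI{\eVV}{\pV_1}$ with $\pV\traS{\actInV}\pV_1$.

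The crux of the argument, and the main obstacle, is showing that in the \rtit{biDisO} and \rtit{biDisI} cases the continuation monitor satisfies $\eVV = \eSembiP{\hBigAndUhVI}$. Here I would appeal to the shape of \Cref{def:synthesis-bi}: the only branches of $\eSembiP{\hBigAndUhVI}$ able to emit a suppression label $\ioact{(\actOutV)}{\actdot}$ are the suppression branches synthesised from output conjuncts with $\hVI{=}\hFls$, and the only branches able to emit an insertion label $\ioact{\actdot}{(\actInV)}$ are the insertion branches synthesised from input conjuncts with $\hVI{=}\hFls$; in both cases the branch continuation is exactly the recursion variable $\rVV$ of the enclosing $\rec{\rVV}{\cdot}$. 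Consequently, deriving the monitor move via \rtit{eRec} (unfolding the top-level recursion), \rtit{eSel} (selecting the firing branch) and \rtit{eTrn}, the continuation $\rVV$ is replaced by the whole recursive term $\rec{\rVV}{\cdot}=\eSembiP{\hBigAndUhVI}$, which, being closed, is left unchanged by the matching substitution $\sV$ (cf.\ also \Cref{lem:monitor-unfold-bisim}). I would also note for completeness that no silent step can instead leave the monitor in a progressed state such as $\eSembiP{\hVI}$ with $\hVI\neq\hFls$, or the $\eIden$ produced by $\defmon{\hBigAndUhVI}$ or by \rtit{biDef}, since all of these arise from identity transformations or default forwarding and are therefore visible, already excluded by the case analysis above.

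Finally, I would stitch the steps together. In each of the three cases we have $\pVVV = \eBI{\eSembiP{\hBigAndUhVI}}{\pV_1}$ together with a single trace fragment $\trr_1$ that is empty in the \rtit{biAsy} case, the output $\actOutV$ in the \rtit{biDisO} case, or the input $\actInV$ in the \rtit{biDisI} case, and in all cases $\pV\wtraS{\trr_1}\pV_1$. Applying the induction hypothesis to the remaining $n$ silent steps from $\eBI{\eSembiP{\hBigAndUhVI}}{\pV_1}\traSC{\actt}\pVV$ yields a state $\pV'$ and a trace $\trr_2$ with $\pV_1\wtraS{\trr_2}\pV'$ and $\pVV = \eBI{\eSembiP{\hBigAndUhVI}}{\pV'}$. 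Concatenating the weak transitions and setting $\trr = \trr_1\trr_2$ gives $\pV\wtraS{\trr}\pV'$, closing the induction.
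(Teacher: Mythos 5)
Your proof is correct and follows essentially the same route as the paper's: induction on the number of $\actt$-steps, a case analysis restricting the first step to \rtit{biAsy}, \rtit{biDisO} or \rtit{biDisI}, and the observation (via the shape of \disS in \Cref{def:synthesis-bi}) that suppression and insertion branches loop back through the recursion variable to $\eSembiP{\hBigAndUhVI}$, after which the inductive hypothesis closes the argument. Your additional remarks---explicitly ruling out progressed continuations like $\eSembiP{\hVI}$ or \eIden, and noting that the closed recursive term is unaffected by the matching substitution---only make explicit details the paper leaves implicit.
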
 
  \begin{proof}
    % [Proof for \Cref{lemma:evt-y-bi}] 
    % We must now prove that for every formula of the form $\hBigAndUhVI$ and states $\pV$ and $\pVV$, if $\eBI{\eSembiP{\hBigAndUhVI}}{\pV}\traSC{\actt}\pVV$ then there exists some state $\pV'$ and trace $\trr$ such that $\pV\wtraS{\trr}\pV'$ and $\pVV=\eBI{\eSembiP{\hBigAndUhVI}}{\pV'}$. 
    %
    We proceed by mathematical induction on the number of \actt transitions.
	
    \begin{Case}[0\text{ transitions}]
      This case holds trivially given that $\pV\wtraS{\trE}\pV$ and so that $\pVV=\eBI{\eSembiP{\hBigAndUhVI}}{\pV}$.
    \end{Case} \vspace{-2mm}
    
    \begin{Case}[k+1\text{ transitions}] 
      Assume that $\eBI{\eSembiP{\hBigAndUhVI}}{\pV}\traSCN{\actt}{k+1}\pVV$ and so we can infer that
      \begin{gather}
      \eBI{\eSembiP{\hBigAndUhVI}}{\pV}\traS{\actt}\pVV' \quad \text{(for some }\pVV') \label{proof:evt-y-bi-1} \\
      \pVV'\traSCN{\actt}{k}\pVV.  \label{proof:evt-y-bi-2} 
      \end{gather} 
      By the definition of \eSembiS{\!-\!} we know that $\eSembiP{\hBigAndUhVI}$ synthesises the monitor $$\rec{\rVV}{\chBigI\begin{xbrace}{ll} \dis{\pateI}{\bVI}{\rVV}{\prtSet} & \ift \hV{=}\hFls \\ \prf{\actSIDs{\pateI}{\bVI}}{\eSembiP{\hVI}} & \otherwiset \end{xbrace}}$$ 
      which can be unfolded into 
      \begin{gather}
      \eSembiP{\hBigAndDhVI}{\,=\,}\chBigI\begin{xbrace}{ll} \dis{\pateI}{\bVI}{\eV}{\prtSet} & \ift \hVI{=}\hFls \\ \prf{\actSIDs{\pateI}{\bVI}}{\eSembiP{\hVI}} & \otherwiset \end{xbrace}  \label{proof:evt-y-bi-3}
      \end{gather}
      and so from \eqref{proof:evt-y-bi-3} we know that the $\actt$-reduction in \eqref{proof:evt-y-bi-1} can be the result of rules \rtit{iAsy}, \rtit{iDisO} or \rtit{iDisI}. We therefore inspect each case.
      \begin{itemize}
        \item \rtit{iAsy}: By rule \rtit{iAsy}, from \eqref{proof:evt-y-bi-1} we can deduce that 
        \begin{gather}
        \exists\pV''\cdot\pV\traS{\actt}\pV'' \label{proof:evt-y-bi-4} \\
        \pVV'=\eI{\eSemS{\hBigAndUhVI}}{\pV''} \label{proof:evt-y-bi-5}
        \end{gather}
        and so by \eqref{proof:evt-y-bi-2}, \eqref{proof:evt-y-bi-5} and the \emph{inductive hypothesis} we know that 
        \begin{gather}
        \exists\pV',\trr\cdot\pV''\wtraS{\trr}\pV' \andt \pVV=\eI{\eSemS{\hBigAndUhVI}}{\pV'}. \label{proof:evt-y-bi-6}
        \end{gather}
        Finally, by \eqref{proof:evt-y-bi-4} and \eqref{proof:evt-y-bi-6} we can thus conclude that $\exists\pV',\trr\cdot\pV\wtraS{\trr}\pV'$ and also that $\pVV=\eI{\eSemS{\hBigAndUhVI}}{\pV'}$.
        \item \rtit{iDisI}: By rule \rtit{iDisI} and from \eqref{proof:evt-y-bi-1} we infer that 
        \begin{gather}
        \exists\pV''\cdot\pV\traS{\actInVB}\pV'' \label{proof:evt-y-bi-7} \\
        \eSembiP{\hBigAndUhVI}\traS{\ioact{\actdot}{\actInVB}}\eV' \label{proof:evt-y-bi-8} \\
        \pVV'=\eBI{\eV'}{\pV''} \label{proof:evt-y-bi-9}
        \end{gather}
        and from \eqref{proof:evt-y-bi-3} and by the definition of \disS we can infer that the reduction in \eqref{proof:evt-y-bi-8} occurs when the synthesised monitor inserts action $\actInV$ and then reduces back to\linebreak[4]$\eSembiP{\hBigAndUhVI}$ allowing us to infer that
        \begin{gather}
        \eV'=\eSembiP{\hBigAndUhVI}.  \label{proof:evt-y-bi-10}
        \end{gather}
        Hence, by \eqref{proof:evt-y-bi-2}, \eqref{proof:evt-y-bi-9} and \eqref{proof:evt-y-bi-10} we can apply the \emph{inductive hypothesis} and deduce that
        \begin{gather}
        \exists\pV',\trr\cdot\pV''\wtraS{\trr}\pV' \andt \pVV=\eBI{\eSembiP{\hBigAndUhVI}}{\pV'} \label{proof:evt-y-bi-11}
        \end{gather}
        so that by \eqref{proof:evt-y-bi-7} and \eqref{proof:evt-y-bi-11} we finally conclude that $\exists\pV',\trr\cdot\pV\wtraS{\actInVB\trr}\pV'$ and that $\pVV=\eBI{\eSembiP{\hBigAndUhVI}}{\pV'}$ as required, and so we are done.
        \item \rtit{iDisO}: We omit the proof for this case as it is very similar to that of case \rtit{iDisI}.
        \qedhere
      \end{itemize}
    \end{Case} 
  \end{proof}
  % \vspace{1mm}
  
\nw{The following lemma builds on \Cref{lemma:evt-y-bi}, and states that the monitor obtained from a sequence of transitions $\tr$  and a synthesised monitor $\eSembiP{\hVV}$ can be calculated using the function $\afterF{\hV}{\tr}$ and the synthesis function given in \Cref{def:synthesis-bi}.}
\begin{lem} \label[lemma]{lemma:evt-x}
  For every set of names $\Pi$, formula $\hV{\,\in\,}\SHMLnf$, state $\pV$ and trace $\tr$, if $\eI{\eSembiP{\hV}}{\pV}\wtraS{\tr}\eI{\eV'}{\pV'}$ then $\exists\hVV{\,\in\,}\SHMLnf\cdot\hVV=\afterF{\hV}{\tr}$ and $\eSembiP{\hVV}=\eV'$. 
\end{lem}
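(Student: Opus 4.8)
The plan is to prove the statement by induction on the length of the trace $\tr$, reducing the analysis of a whole weak trace to that of a single (weak) visible transition and then appealing to the compositional lifting of $\afterFS$ over traces, \ie $\afterF{\hV}{\acta\trr} = \afterF{\afterF{\hV}{\acta}}{\trr}$. Throughout, monitors are compared up to the unfolding of recursive definitions justified by \Cref{lem:monitor-unfold-bisim}: since $\eSembiP{\hMaxX{\hVV}} = \rec{\rV}{\eSembiP{\hVV}}$ and the synthesis of \Cref{def:synthesis-bi} is compositional, so that $\eSembiP{\hVV\Sub{\hMaxX{\hVV}}{\hVarX}} = \eSembiP{\hVV}\Sub{\rec{\rV}{\eSembiP{\hVV}}}{\rV}$, rule \rtit{eRec} guarantees that a fixpoint monitor and its unfolding exhibit exactly the same derivatives. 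This lets me first collapse any prefix of top-level fixpoint binders in $\hV$ into a guarded form $\hV^{\flat}$ that is either $\hTru$, $\hFls$, or a normalised conjunction $\hBigAndUhVI$; termination of this rewriting is ensured by guardedness (\cf \Cref{rem:after-welldef}), and it is mirrored on the logical side by the fixpoint clause of $\afterFS$, so that $\afterF{\hV}{\actu} = \afterF{\hV^{\flat}}{\actu}$ for every $\actu$.

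The base case $\tr = \trE$ amounts to $\eI{\eSembiP{\hV}}{\pV}\traSC{\actt}\eI{\eV'}{\pV'}$, for which I must show $\eV' = \eSembiP{\hV}$, since $\afterF{\hV}{\trE} = \hV$. After reducing to $\hV^{\flat}$, the case $\hV^{\flat}\in\set{\hTru,\hFls}$ is immediate: $\eSembiP{\hV^{\flat}} = \eIden$ performs no monitor-driven silent move (its only transitions are identity transformations, and none of \rtit{biDisO}, \rtit{biDisI}, \rtit{biDef} applies), so every $\actt$ comes from \rtit{biAsy} and leaves the monitor unchanged. The conjunction case is exactly \Cref{lemma:evt-y-bi}, which pins the monitor down across an arbitrary block of composite $\actt$-transitions, keeping it equal to $\eSembiP{\hBigAndUhVI}$ while the system performs some suppressed or inserted internal trace. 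Hence $\eV' = \eSembiP{\hV^{\flat}}$, which equals $\eSembiP{\hV}$ up to unfolding.

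For the inductive step I split $\tr = \acta\trr$ and the weak transition as $\eI{\eSembiP{\hV}}{\pV}\traSC{\actt}\eI{\eV_0}{\pV_0}\traS{\acta}\eI{\eV_1}{\pV_1}\wtraS{\trr}\eI{\eV'}{\pV'}$. The leading $\actt$-block is handled as in the base case, giving $\eV_0 = \eSembiP{\hV^{\flat}}$. For the single visible step $\eI{\eSembiP{\hV^{\flat}}}{\pV_0}\traS{\acta}\eI{\eV_1}{\pV_1}$ I do a case analysis on the instrumentation rule of \Cref{fig:mod-bi-re} that fires. If $\acta$ matches a non-$\hFls$ branch $\hNec{\pateI,\bVI}\hVI$ with substitution $\sV$, then, as every synthesised non-$\hFls$ branch carries an identity transformation, the move is an instance of \rtit{biTrnO} or \rtit{biTrnI}, and the monitor steps (via \rtit{eSel} and \rtit{eTrn}) to $\eSembiP{\hVI}\sV = \eSembiP{\hVI\sV}$; this agrees with the necessity clause $\afterF{\hV^{\flat}}{\acta} = \hVI\sV$, using that the synthesis commutes with data substitution. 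If $\acta$ is an output matching no branch, rule \rtit{biDef} defaults the monitor to $\eIden = \eSembiP{\hTru}$, in line with $\afterF{\hV^{\flat}}{\acta} = \hTru$; dually, an input matching no input modality is forwarded by the default monitor $\defmon{\hV^{\flat}}$, which also steps to $\eIden = \eSembiP{\hTru}$. The remaining branches never contribute a \emph{visible} $\acta$: a $\hFls$-output branch is synthesised as a suppression and fires only through \rtit{biDisO} (a $\actt$-move), while a $\hFls$-input branch carries only insertions and leaves the offending input blocked, so neither appears in $\tr$ and both are vacuous here. In every non-vacuous case $\eV_1 = \eSembiP{\afterF{\hV^{\flat}}{\acta}} = \eSembiP{\afterF{\hV}{\acta}}$, and a routine check shows $\afterF{\hV}{\acta}\in\SHMLnf$ (data substitution keeps conjunctions disjoint and preserves guardedness, and $\hTru\in\SHMLnf$). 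Applying the induction hypothesis to $\afterF{\hV}{\acta}$ and $\trr$ then yields $\eV' = \eSembiP{\afterF{\afterF{\hV}{\acta}}{\trr}} = \eSembiP{\afterF{\hV}{\tr}}$ with $\afterF{\hV}{\tr}\in\SHMLnf$, as required.

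I expect the main obstacle to be the bookkeeping around recursion unfolding: the synthesised monitor's silent suppression and insertion loops unfold the outermost fixpoint, so the monitor reached after a block of $\actt$-transitions is literally $\eSembiP{\hV^{\flat}}$ rather than the folded $\eSembiP{\hV}$, and the equality in the conclusion must be read modulo the fold/unfold identity of \Cref{lem:monitor-unfold-bisim}. Keeping the visible-step equalities exact, which they are because $\afterFS$ unfolds top-level fixpoints in lockstep with \rtit{eRec}, while tracking this unfolding only across $\actt$-blocks and confirming that \Cref{lemma:evt-y-bi} indeed fixes the monitor across those blocks, is the delicate part of the argument.
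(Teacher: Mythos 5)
Your proof follows essentially the same route as the paper's: induction on the length of $\tr$, using \Cref{lemma:evt-y-bi} to pin the monitor down across the leading $\actt$-blocks, a case analysis on the structure of $\hV$ in which the visible step is resolved through the synthesised identity branches (stepping to $\eSembiP{\hVJ\sV}$ in lockstep with the necessity clause of $\afterFS$), and the fixpoint case reduced to the conjunction case by unfolding. If anything, your treatment is slightly more thorough than the paper's, which omits the sub-cases where $\acta$ matches no conjunct (handled by \rtit{biDef} for outputs and by the default monitor for inputs, with the monitor defaulting to $\eIden=\eSembiP{\hTru}$) and glosses over the fold/unfold bookkeeping that you explicitly flag as the delicate point.
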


\begin{proof}
  % [Proof for \Cref{lemma:evt-x}]
   We need to prove that for every formula $\hV{\,\in\,}\SHMLnf$, if we assume that $\eI{\eSembiP{\hV}}{\pV}\wtraS{\tr}\eI{\eV'}{\pV'}$ then there must exist some formula \hVV, such that $\hVV=\afterF{\hV}{\tr}$ and $\eSembiP{\hVV}=\eV'$. 
	We proceed by induction on the length of \tr.
	
	\begin{Case}[\tr=\trE] This case holds vacuously since when $\tr{\,=\,}\trE$ then $\eV'{\,=\,}\eSembiP{\hV}$ and $\hV{\,=\,}\afterF{\hV}{\trE}$. % as required.
	\end{Case} 
	
	\begin{Case}[\tr=\acta\trr] Assume that $\eI{\eSembiP{\hV}}{\pV}\wtraS{\acta\trr}\eI{\eV'}{\pV'}$ from which by the definition $\wtraS{\tr}$ we can infer that there are $\pVV$ and $\pVV'$ such that
		\begin{gather}
		\eI{\eSembiP{\hV}}{\pV}\traSC{\actt}\pVV  \label{proof:evt-x-1} \\
		\pVV\traS{\acta}\pVV'  \label{proof:evt-x-2} \\
		\pVV'\wtraS{\trr}\eI{\eV'}{\pV'}.  \label{proof:evt-x-3}
		\end{gather}
		We now proceed by case analysis on \hV.
		\begin{itemize}
			\item $\hV{\,=\,}\hVarX$: This case does not apply since $\eSembiP{\hFls}$ and $\eSembiP{\hVarX}$ do not yield a valid monitor.
			\item $\hV{\,\in\,}\set{\hFls,\hTru}$: Since $\eSembiP{\hTru}{\,=\,}\eIden$ we know that the \actt-reductions in \eqref{proof:evt-x-1} are only possible via rule \rtit{iAsy} which means that $\pV\traSC{\actt}\pV''$ and $\pVV{\,=\,}\eI{\eSembiP{\hTru}}{\pV''}$. The latter allows us to deduce that the reduction in \eqref{proof:evt-x-2} is only possible via rule \rtit{iTrn} and so we also know that $\pV''\traSC{\acta}\pV'''$ and $\pVV'{\,=\,}\eI{\eSembiP{\hTru}}{\pV'''}$. Hence, by \eqref{proof:evt-x-3} and the \emph{inductive hypothesis} we conclude that 
			\begin{gather}
			\exists \hVV\in\SHMLnf\cdot\hVV=\afterF{\hTru}{\trr}  \label{proof:evt-x-4} \\
			\eSembiP{\hVV}=\eV'.   \label{proof:evt-x-5}
			\end{gather}
			Since from the definition of \afterFS we know that $\afterF{\hTru}{\acta\trr}$ equates to $\afterF{\afterF{\hTru}{\acta}}{\trr}$ and $\afterF{\hTru}{\acta}{\,=\,}\hTru$, from  \eqref{proof:evt-x-4} we can conclude that $\hVV=\afterF{\hTru}{\acta\trr}$ and so this case holdssince we also know \eqref{proof:evt-x-5}.
      The case for \hFls is analogous.
			\item $\hV{\,=\,}\hBigAndUhVI$ and $\bigdistinct{i\in\IndSet}\actSNVI$: Since $\hV{\,=\,}\hBigAndUhVI$, by the definition of \eSembiS{-} we know that  $\rec{\rVV}{\chBigI\begin{xbrace}{ll} \dis{\pateI}{\bVI}{\rVV}{\prtSet} & \ift \hVI{=}\hFls \\ \prf{\actSIDs{\pateI}{\bVI}}{\eSembiP{\hVI}} & \otherwiset \end{xbrace}}$ which can be unfolded into 
			\begin{gather}
			\eSembiP{\hBigAndDhVI}{\,=\,}\chBigI\begin{xbrace}{ll} \dis{\pateI}{\bVI}{\eV}{\prtSet} & \ift \hVI{=}\hFls \\ \prf{\actSIDs{\pateI}{\bVI}}{\eSembiP{\hVI}} & \otherwiset \end{xbrace}  \label{proof:evt-x-7}
			\end{gather} 
			and so by \eqref{proof:evt-x-1}, \eqref{proof:evt-x-7} and \Cref{lemma:evt-y-bi} we conclude that $\exists \pV''\cdot\pV\wtraS{\trr}\pV''$ and that
			\begin{gather}
			\pVV=\eI{\eSembiP{\hBigAndUhVI}}{\pV''}.  \label{proof:evt-x-8}
			\end{gather}
			Hence, by \eqref{proof:evt-x-7} and \eqref{proof:evt-x-8} we know that the reduction in \eqref{proof:evt-x-2} can only happen if $\exists\pV'''\cdot\pV''\traS{\acta}\pV'''$ and \acta matches an identity transformation $\prf{\actSIDs{\pateJ}{\bVJ}}\eSembiP{\hVJ}$ (for some $j\in\IndSet$) which was derived from $\hNec{\actSNVJ}\hVJ$ (where $\hVJ\neq\hFls$). We can thus deduce that
			\begin{gather}
			\pVV'=\eI{\eSembiP{\hVJ\sV}}{\pV'''}  \label{proof:evt-x-9} \\
			\mtch{\pateJ}{\acta}=\sV \andt \ceval{\bVJ\sV}{\boolT} \label{proof:evt-x-10}
			\end{gather}
			and so by \eqref{proof:evt-x-3}, \eqref{proof:evt-x-9} and the \emph{inductive hypothesis} we deduce that 
			\begin{gather}
			\exists \hVV\in\SHMLnf\cdot\hVV=\afterF{\hVJ\sV}{\trr}  \label{proof:evt-x-11} \\
			\eSembiP{\hVV}=\eV'. \label{proof:evt-x-12}
			\end{gather}
			Now since we know \eqref{proof:evt-x-10}, by the definition of \afterFS we infer that 
			\begin{gather}
			\begin{array}{rcl}
				\afterF{\hBigAndUhVI}{\acta\trr}&=&\afterF{\afterF{\hBigAndUhVI}{\acta}}{\trr}\\
				&=&\afterF{\hVJ\sV}{\trr}
			\end{array} \label{proof:evt-x-13}
			\end{gather}
			and so from \eqref{proof:evt-x-11} and \eqref{proof:evt-x-13} we conclude that 
			\begin{gather}
			 	\hVV=\afterF{\hBigAndUhVI}{\acta\trr}.  \label{proof:evt-x-14} 
			\end{gather}
			Hence, this case is done by \eqref{proof:evt-x-12} and \eqref{proof:evt-x-14}.
			\item $\hV{\,=\,}\hMaxX{\hVV}$ and $\hVarX{\,\in\,}\fv{\hVV}$: Since $\hV{\,=\,}\hMaxX{\hVV}$, by the syntactic rules of \SHMLnf we know that $\hVV{\,\notin\,}\set{\hFls,\hTru}$ since $\hVarX{\,\notin\,}\fv{\hVV}$, and that $\hVV{\,\neq\,}\hVarX$ since logical variables must be guarded, hence we know that $\hVV$ can only be of the form 
			\begin{gather}
			\hVV=\hMax{\hVarY_1}{\ldots\hMax{\hVarY_n}{\hBigAndUhVI}}.  \label{proof:evt-x-14.5} 
			\end{gather}
			where $\hMax{\hVarY_1}{\ldots\hMax{\hVarY_n}{}}$ denotes an arbitrary number of fixpoint declarations, possibly none. Hence, knowing \eqref{proof:evt-x-14.5}, by unfolding every fixpoint in $\hMaxX{\hVV}$ we reduce the formula to $$\hV=\hBigAndUhVI\subb{\subE{\hMaxX{\hMax{\hVarY_1}{\ldots\hMax{\hVarY_n}{\hBigAndUhVI}}}}{\hVarX},\ldots}$$ and so from this point onwards the proof proceeds as per that of case $\hV{\,=\,}\hBigAndUhVI$ which allows us to deduce that
			\begin{gather}
			\exists\hVV'{\,\in\,}\SHMLnf\cdot\hVV'{\,=\,}\afterF{\hBigAndUhVI\subb{\ldots}}{\acta\trr} \label{proof:evt-x-15} \\
			\eSembiP{\hVV'}=\eV'.  \label{proof:evt-x-16}
			\end{gather}
			From \eqref{proof:evt-x-14.5}, \eqref{proof:evt-x-15} and the definition of \afterFS we can therefore conclude that 
			\begin{gather}
			\exists\hVV'{\,\in\,}\SHMLnf\cdot\hVV'{\,=\,}\afterF{\hMaxX{\hVV}}{\acta\trr} \label{proof:evt-x-17} 
			\end{gather}
			and so this case holds by \eqref{proof:evt-x-16} and \eqref{proof:evt-x-17}.
		\end{itemize}
		Hence, the above cases suffice to show that the case for when $\tr=\acta\trr$ holds.  \qedhere
    % \vspace{-8mm}
	\end{Case}	
\end{proof}

\smallskip
The transparency proof following \Cref{def:tenf} is given below.

\begin{prop}[Transparency] 
  \label[proposition]{lemma:transparency-bi} For every state $\pV{\,\in\,}\Sys$ and $\hV{\,\in\,}\SHMLnf$, if $\pV{\,\in\,}\hSemS{\hV}$ then $\eBI{\eSembiP{\hV}{}}{\pV}{\,\bisim\,}\pV$. 
\end{prop}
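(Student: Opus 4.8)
The plan is to establish the bisimilarity coinductively, by exhibiting a single relation that is closed under transitions. Concretely, I would set
$$\R \defeq \Set{(\eBI{\eSembiP{\hV}}{\pV},\, \pV) \;\mid\; \hV{\,\in\,}\SHMLnf \andt \pV\hSat\hV}$$
and prove that $\R$ is a strong bisimulation up to $\bisim$; since $(\eBI{\eSembiP{\hV}}{\pV},\pV)\in\R$ whenever $\pV\hSat\hV$, this yields the claim. The guiding invariant is that membership in $\R$ is preserved by every matched transition: if $\pV\hSat\hV$ and $\pV\traS{\actu}\pV'$, then on the one hand $\pV'\hSat\afterF{\hV}{\actu}$ (by \Cref{thm:sem-just-after} for visible actions and by \Cref{prop:tau-closure} for $\actu{\,=\,}\actt$), and on the other hand the monitor reached by the composite system is exactly $\eSembiP{\afterF{\hV}{\actu}}$ (a special case of \Cref{lemma:evt-x}). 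Thus the derivative pair lies again in $\R$.

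Because the synthesis of a top-level greatest fixpoint is $\eSembiP{\hMaxX{\hV'}}{\,=\,}\rec{\rV}{\eSembiP{\hV'}}$, I would use \Cref{lem:monitor-unfold-bisim} (together with the fact that $\hMaxX{\hV'}$ and its unfolding $\hV'\Sub{\hMaxX{\hV'}}{\hVarX}$ are semantically equal, so satisfaction is preserved) to reduce, up to $\bisim$, the analysis of transitions to the case in which $\hV$ is not a top-level fixpoint; guardedness ensures this unfolding terminates at a normalised conjunction. The cases $\hV{\,=\,}\hFls$ and $\hV{\,=\,}\hVarX$ cannot occur, the former because $\hFls$ is unsatisfiable (so $\pV\hSat\hFls$ is impossible) and the latter because \SHMLnf formulas are guarded; the case $\hV{\,=\,}\hTru$ is immediate since $\eSembiP{\hTru}{\,=\,}\eIden$ behaves as the identity and every derivative stays in $\R$ with formula $\hTru$. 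The substantive case is $\hV{\,=\,}\hBigAndU{i\in\IndSet}\hNec{\pateI,\bVI}\hVI$, which I would treat by inspecting, one instrumentation rule at a time, the transitions available to $\eBI{\eSembiP{\hV}}{\pV}$ and to $\pV$.

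For the matching of transitions I would argue as follows. Every $\actt$-move of $\pV$ is matched by \rtit{biAsy}, and conversely the only composite $\actt$-moves are those of \rtit{biAsy}, \rtit{biDisO} and \rtit{biDisI}. The crux is that, for a satisfying $\pV$, the disabling moves never fire: if $\pV$ could emit an output $\actOutV$ matching a violating branch $\hNec{\pateI,\bVI}\hFls$, then by the semantics $\pV$ would reach a state in $\hSemS{\hFls}{\,=\,}\emptyset$, a contradiction, so \rtit{biDisO} is inapplicable; symmetrically, an inserted default input $\actIn{\pidVV}{\vVdef}$ produced by $\dis{\pateI}{\bVI}{\cdot}{\prtSet}$ lives on a port on which $\pV$ cannot input (again because the corresponding violating input branch is satisfied, and by well-formedness its condition does not constrain the payload), so \rtit{biDisI} is inapplicable. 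Consequently the only composite $\actt$-transitions are the \rtit{biAsy} ones, which is what makes the relation a \emph{strong} bisimulation. For visible actions, disjointness of the normalised conjunction guarantees that an action performed by $\pV$ matches at most one modal guard: if it matches a non-violating branch $\hNec{\pateJ,\bVJ}\hVJ$ with $\hVJ{\,\neq\,}\hFls$, the synthesised identity transformation fires via \rtit{biTrnO}/\rtit{biTrnI} and the composite performs the very same action, reaching $\eBI{\eSembiP{\hVJ\sV}}{\pV'}$; if it matches no guard, it passes through via $\defmon{\hV}$ (for inputs) or via \rtit{biDef} (for outputs, whose premises hold since the synthesis neither reacts to nor inserts such an output), leaving $\eIden{\,=\,}\eSembiP{\hTru}$; and it cannot match a violating guard. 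In all subcases the derivative has the form $\eBI{\eSembiP{\afterF{\hV}{\actu}}}{\pV'}$ with $\pV'\hSat\afterF{\hV}{\actu}$, hence lies in $\R$. The reverse direction is symmetric: since the monitor only ever performs identity transformations on the actions $\pV$ can do, each visible $\pV$-move is matched by the identically-labelled composite move.

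The main obstacle is not any single rule but the bookkeeping needed to rule out the spurious moves: one must verify, uniformly over the synthesis clauses, that a satisfying system triggers neither output suppressions nor the default-input insertions, so that the composite exhibits no $\actt$-behaviour beyond that of $\pV$ and no extra visible actions. This is exactly where the two \SHMLnf well-formedness conditions are indispensable --- disjointness pins down the unique matching branch (and hence the unique derivative, matching $\afterFS$), while the restriction that input conditions do not mention the payload binder guarantees that ``$\pV$ cannot match the violating input branch'' upgrades to ``$\pV$ cannot accept the inserted default input on that port''. Once these non-firing facts are in place, the correspondence between $\afterFS$ and the synthesis provided by \Cref{lemma:evt-x} closes the coinduction.
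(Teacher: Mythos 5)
Your proposal is correct and matches the paper's own proof in all essentials: the paper establishes that the same relation $\R \defeq \{(\pV,\,\eBI{\eSembiP{\hV}}{\pV}) \mid \pV\hSat\hV\}$ is a strong bisimulation by case analysis on \hV, using exactly your key arguments --- a satisfying system never triggers the synthesised output suppressions or default-input insertions (the latter ruled out via the well-formedness condition that input conditions do not constrain the payload binder), disjointness of the normalised conjunction pins down a unique matching identity branch, unmatched actions pass through \defmonS (inputs) or \rtit{biDef} (outputs) to \eIden, and the synthesis contains no enabling transformations. The only cosmetic differences are that the paper handles fixpoints by unfolding the formula directly (rule \rtit{eRec} makes $\rec{\rV}{\eV}$ and its unfolding transition-identical, so no up-to-$\bisim$ machinery is needed) and maintains the invariant through the satisfaction relation itself rather than through \afterFS, \Cref{thm:sem-just-after} and \Cref{lemma:evt-x}.
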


\begin{proof}
  % [Proof for \Cref{lemma:transparency-bi} (Transparency)]
	% We need to prove that for every system \pV, if $\pV{\,\in\,}\hSemS{\hV}$ then $\eBI{\eV}{\pV}\bisim\pV$.
	%
	Since $\pV{\,\in\,}\hSemS{\hV}$ is analogous to $\pV{\,\hSatS\,}\hV$ we prove that relation $\R{\,\defeq\,}\Setdef{(\pV,\eBI{\eSembiP{\hV}}{\pV})}{\pV{\,\hSatS\,}\hV}$ is a \emph{strong bisimulation relation} that satisfies the following transfer properties:
	\begin{enumerate}[\quad(a)]
		\item if $\pV\traS{\actu}\pV'$ then $\eBI{\eSembiP{\hV}}{\pV}\traS{\actu}\pVV'$ and $(\pV',\pVV')\in\R$
		\item if $\eBI{\eSembiP{\hV}}{\pV}\traS{\actu}\pVV'$ then $\pV\traS{\actu}\pV'$ and $(\pV',\pVV')\in\R$
	\end{enumerate} 
	\noindent We prove $(a)$ and $(b)$ separately by assuming that $\pV{\,\hSatS\,}\hV$ in both cases as defined by relation \R and conduct these proofs by case analysis on \hV. 
	We now proceed to prove $(a)$ by case analysis on \hV. 
		
%	\begin{Cases}[\hV{\,\in\,}\set{\hTru,\hFls,\hVarX,\hMaxXF}] We omit showing the proof for these cases as they are very similar to the respective ones in \Cref{lemma:transparency} (a) proven in \Cref{sec:synthesis-function}.
%	\end{Cases}

	\begin{Cases}[\hV\in\Set{\hFls,\hVarX}]
		Both cases do not apply since $\nexists \pV\cdot \pV\hSatS\hFls$ and similarly since $\hVarX$ is an open-formula and so $\nexists \pV\cdot \pV\hSatS\hVarX$.
	\end{Cases}

	\begin{Case}[\hV=\hTru]
		We now assume that 
		\begin{gather}
			\pV\hSatS\hTru \label{proof:trans-bi-tt-1}\\
			\pV\traS{\actu}\pV'  \label{proof:trans-bi-tt-2}
		\end{gather}
		and since $\actu\in\set{\actt,\acta}$, we must consider both cases.
		\begin{itemize}
			\item \emph{$\actu=\actt$:} Since $\actu=\actt$, we can apply rule \rtit{iAsy} on \eqref{proof:trans-bi-tt-2} and get that
				\begin{gather}
					\eBI{\eSembiP{\hTru}}{\pV}\traS{\actt}\eBI{\eSembiP{\hTru}}{\pV'} \label{proof:trans-bi-tt-3}
				\end{gather}
				as required. Also, since we know that every process satisfies \hTru, we know that $\pV'\hSatS\hTru$, and so by the definition of \R we conclude that
				\begin{gather}
					(\pV',\eBI{\eSembiP{\hTru}}{\pV'})\in\R \label{proof:trans-bi-tt-4}
				\end{gather}
				as required. This means that this case is done by \eqref{proof:trans-bi-tt-3} and \eqref{proof:trans-bi-tt-4}.
			\item \emph{$\actu=\acta$:} Since $\eSembiP{\hTru}{\,=\,}\eIden$ encodes the `catch-all' monitor, \eIdenFull, by rules \rtit{eRec} and \rtit{eTrn} we can apply rule \rtit{iTrnI/O} and deduce that $\eIden\traS{\ioact{\acta}{\acta}}\eIden$, which we can further refine as
				\begin{gather}
					\eBI{\eSembiP{\hTru}}{\pV}\traS{\acta}\eBI{\eSembiP{\hTru}}{\pV'} \label{proof:trans-bi-tt-5}
				\end{gather}
				as required. Once again since $\pV'\hSatS\hTru$, by the definition of \R we can infer that
				\begin{gather}
					(\pV',\eBI{\eSembiP{\hTru}}{\pV'})\in\R \label{proof:trans-bi-tt-6}
				\end{gather}
				as required, and so this case is done by \eqref{proof:trans-bi-tt-5} and \eqref{proof:trans-bi-tt-6}.
		\end{itemize}
	\end{Case}

	{
		\newcommand{\hVAnd}{\ensuremath{\hV_{\hAndS}}\xspace}
		\newcommand{\formula}{\hBigAndU{i\in\IndSet}\hNec{\actSNVI}\hVI}
		\newcommand{\synForm}{\rec{\rVV}{
		\ch{
			\Big(\chBigI\begin{xbrace}{lc} \dis{\pateI}{\bVI}{\rVV}{\prtSet} &\; \text{if }\hVI=\hFls \\ \prf{\actSIDs{\pateI}{\bVI}}{\eSembiP{\hVI}} &\; \text{otherwise} \end{xbrace}\Big)
		}{
			\defmon{\hVAnd}
		}
	}}
	\begin{Case}[\hV=\formula]
		We assume that 
		\begin{gather}
			\pV\hSatS\formula \label{proof:trans-bi-nec-1}\\
			\pV\traS{\actu}\pV'  \label{proof:trans-bi-nec-2}
		\end{gather}
		and by the definition of \hSatS and \eqref{proof:trans-bi-nec-1} we have that for every index $i{\,\in\,}\IndSet$ and action $\actb{\,\in\,}\Act$,
		\begin{gather}
			\ift \pV\wtraS{\actb}\pV' \andt \mtchS{\actSNVI}{\actb}=\sV \thent \pV'\hSatS\hVI\sV. \label{proof:trans-bi-nec-3}
		\end{gather}
		Since $\actu\in\set{\actt,\acta}$, we must consider both possibilities.
		\begin{itemize}
			\item \emph{$\actu=\actt$:} Since $\actu=\actt$, we can apply rule \rtit{iAsy} on \eqref{proof:trans-bi-nec-2} and obtain
				\begin{gather}
				\eBI{\eSembiP{\formula}}{\pV}\traS{\actt}\eBI{\eSembiP{\formula}}{\pV''} \label{proof:trans-bi-nec-4}
				\end{gather}
				as required. Since \actu=\actt, and since we know that \SHML is \actt-closed, from \eqref{proof:trans-bi-nec-1}, \eqref{proof:trans-bi-nec-2} and \Cref{prop:tau-closure}, we can deduce that $\pV'\hSatS\formula$, so that by the definition of \R we conclude that
				\begin{gather}
					(\pV'',\eBI{\eSembiP{\formula}}{\pV''})\in\R \label{proof:trans-bi-nec-5}
				\end{gather}
				as required. This subcase is therefore done by \eqref{proof:trans-bi-nec-4} and \eqref{proof:trans-bi-nec-5}.	
			\item \emph{$\actu=\acta$:} Since $\actu=\acta$, from \eqref{proof:trans-bi-nec-2} we know that
				\begin{gather}
					\pV\traS{\acta}\pV' \label{proof:trans-bi-nec-6}
				\end{gather}
				and by the definition of \eSembiS{-} we can immediately deduce that
				\begin{gather}
					\eSembiP{\hVAnd} = \synForm \label{proof:trans-bi-nec-7}
				\end{gather}
				where $\hVAnd{\,\defeq\,}\formula$. Since the branches in the conjunction are all disjoint, $\bigdistinct{i\in\IndSet}\actSNVI$, we know that \emph{at most one} of the branches can match the same (input or output) action \acta. Hence, we consider two cases, namely:
				\begin{itemize}
					\item \emph{No matching branches (\ie $ \forall i\in\IndSet\cdot\mtchS{\actSNVI}{\acta}=\sVundef$):} Since none of the symbolic actions in \eqref{proof:trans-bi-nec-7} can match action \acta, we can infer that if \acta is an \emph{input}, \ie $\acta=\actInV$, then it will match the default monitor \defmon{\hVAnd} and transition via rule \rtit{iTrnI}, while if it is an \emph{output}, \ie $\acta=\actOutV$, rule \rtit{iDef} handles the underspecification. In both cases, the monitor reduces to $\eIden$. Also, notice that rules \rtit{iDisO} and \rtit{iDisI} cannot be applied since if they do, it would mean that \pV can also perform an erroneous action, which is not the case since we assume \eqref{proof:trans-bi-nec-1}. Hence, we infer that
					\begin{gather}
						\eBI{\eSembiP{\formula}}{\pV}\traS{\acta}\eBI{\eSembiP{\hTru}}{\pV'} \quad \text{(since \eIden=\eSembiP{\hTru})} \label{proof:trans-bi-nec-9}
					\end{gather}
					as required. Also, since any process satisfies \hTru, we know that $\pV'\hSatS\hTru$, and so by the definition of \R we conclude that
					\begin{gather}
						(\pV',\eBI{\eSembiP{\hTru}}{\pV'})\in\R \label{proof:trans-bi-nec-10}
					\end{gather}
					as required. This case is therefore done by \eqref{proof:trans-bi-nec-9} and \eqref{proof:trans-bi-nec-10}. \medskip
					\item \emph{One matching branch (\ie $ \exists j{\,\in\,}\IndSet\cdot\mtchS{\actSNVJ}{\acta}{\,=\,}\sV$):} From \eqref{proof:trans-bi-nec-7} we can infer that the synthesised monitor can only disable the (input or output) actions that are defined by violating modal necessities. However, from  \eqref{proof:trans-bi-nec-3} we also deduce that \pV is \emph{incapable} of executing such an action as that would contradict assumption \eqref{proof:trans-bi-nec-1}. Hence, since we now assume that $ \exists j\in\IndSet\cdot\mtchS{\actSNVJ}{\acta}=\sV$, from \eqref{proof:trans-bi-nec-7} we deduce that this action can only be transformed by an identity transformation and so by rule \rtit{eTrn} we have that 
					\begin{gather}
						\prf{\actSNVJ}{\eSembiP{\hVJ}} \traS{\ioact{\acta}{\acta}} \eSembiP{\hVJ\sV}. \label{proof:trans-bi-nec-11}
					\end{gather}
					By applying rules \rtit{eSel}, \rtit{eRec} on \eqref{proof:trans-bi-nec-11} and by \eqref{proof:trans-bi-nec-6}, \eqref{proof:trans-bi-nec-7} and \rtit{iTrnI/O} (depending on whether \acta is an input or output action) we get that
					\begin{gather}
						\eBI{\eSembiP{\formula}}{\pV}\traS{\acta}\eBI{\eSembiP{\hVJ\sV}}{\pV'} \label{proof:trans-bi-nec-12}
					\end{gather}
					as required. By \eqref{proof:trans-bi-nec-3}, \eqref{proof:trans-bi-nec-6} and since we assume that  $ \exists j\in\IndSet\cdot\mtchS{\actSNVJ}{\acta}=\sV$ we have that $\pV'\hSat\hVJ\sV$, and so by the definition of \R we conclude that
					\begin{gather}
						(\pV',\eBI{\eSembiP{\hVJ\sV}}{\pV'})\in\R \label{proof:trans-bi-nec-13}
					\end{gather}
					as required. Hence, this subcase holds by \eqref{proof:trans-bi-nec-12} and \eqref{proof:trans-bi-nec-13}.
				\end{itemize}
		\end{itemize}
	\end{Case}
	}

	{\newcommand{\formula}{\hMaxXF}
	\newcommand{\formUnfold}{\hVMaxXFSub}
	\newcommand{\formUnfoldFull}{(\hBigAndUhVI)\subb{\cdot\cdot}}
	\newcommand{\subOmega}{\subb{\SubE{\formula}{\hVarX},\SubE{(\hMax{\hVarY_{0}}{\ldots\hMax{\hVarY_{n}}{\hBigAndUhVI}})}{\hVarY_{0}},\ldots}}%,\SubE{(\hMax{\hVarY_{n}}{\hBigAndDhVI})}{\hVarY_{n}}}}
		\begin{Case}[\hV=\formula \andt \hVarX{\,\in\,}\fv{\hV}]
			Now, lets assume that 
			\begin{gather}
			\pV\traS{\actu}\pV'  \label{proof:trans-max-1}
			\end{gather}
			and that $\pV\hSatS\formula$ from which by the definition of \hSatS we have that
			\begin{gather}
			\pV\hSatS\formUnfold. \label{proof:trans-max-2}
			\end{gather}
			Since $\formUnfold{\,\in\,}\SHMLnf$, by the restrictions imposed by \SHMLnf we know that: \hV cannot be \hVarX because (bound) logical variables are required to be \emph{guarded}, and it also cannot be \hTru or \hFls since \hVarX is required to be defined in \hV, \ie $\hVarX\in\fv{\hV}$. Hence, we know that \hV can only have the following form, that is
			\begin{gather}
			\hV=\hMax{\hVarY_{0}}{\ldots\hMax{\hVarY_{n}}{\hBigAndDhVI}}  \label{proof:trans-max-3}
			\end{gather}
			and so by \eqref{proof:trans-max-2}, \eqref{proof:trans-max-3} and the definition of \hSatS we have that 
			\begin{gather}
			\pV\hSatS\formUnfoldFull \quad \wheret \label{proof:trans-max-4} \\
			\subb{\cdot\cdot}=\subOmega. \nonumber
			\end{gather}
			Since we know \eqref{proof:trans-max-1} and \eqref{proof:trans-max-4}, 
      from this point onwards the proof proceeds as in the previous case. We thus omit the details. 
      %
      % from this point onwards the proof proceeds as per the previous case. We thus omit this part of the proof and immediately deduce that
			% \begin{gather}
			% \exists\eV'\cdot \eBI{\eSembiP{\formUnfoldFull}}{\pV} \traS{\actu} \eBI{\eSembiP{\eV'}}{\pV'} \label{proof:trans-max-5} \\
			% (\pV',\eBI{\eSembiP{\eV'}}{\pV'})\in\R  \label{proof:trans-max-6}
			% \end{gather}
			% Since $\eSembiP{\formUnfoldFull}$ synthesises the \emph{unfolded equivalent} as per \eSembiP{\formUnfold}, from \eqref{proof:trans-max-5} we can conclude that
			% \begin{gather}
			% \exists\eV'\cdot \eBI{\eSembiP{\formUnfold}}{\pV} \traS{\actu} \eBI{\eSembiP{\eV'}}{\pV'} \label{proof:trans-max-7} 
			% \end{gather}
			% as required, and so this case holds by \eqref{proof:trans-max-6} and \eqref{proof:trans-max-7}.\\
		\end{Case}
	}

	% \noindent
  These cases thus allow us to conclude that $(a)$ holds. 
  We now proceed to prove $(b)$ using a similar case analysis approach. 
	
%	\begin{Cases}[\hV{\,\in\,}\set{\hFls,\hVarX,\hMaxXF}] We omit showing the proof for these cases as they are very similar to the respective ones in \Cref{lemma:transparency} (b) proven in \Cref{sec:synthesis-function}.
%	\end{Cases}
		
	\begin{Cases}[\hV\in\Set{\hFls,\hVarX}]
		Both cases do not apply since $\nexists \pV\cdot \pV\hSatS\hFls$ and similarly since $\hVarX$ is an open-formula and $\nexists \pV\cdot \pV\hSatS\hVarX$.
	\end{Cases}
		
	\begin{Case}[\hV=\hTru]
		Assume that 
		\begin{gather}
			\pV\hSatS\hTru \label{proof:trans-bi-b-tt-1}\\
			\eBI{\eSembiP{\hTru}}{\pV}\traS{\actu}\pVV'  \label{proof:trans-bi-b-tt-2}
		\end{gather}
		Since $\actu\in\set{\actt,\actInV,\actOutV}$, we must consider each case.
		\begin{itemize}
			%% -- OMIT IN DISSERTATION
			%
			\item \emph{$\actu=\actt$:} Since $\actu=\actt$, the transition in \eqref{proof:trans-bi-b-tt-2} can be performed via \rtit{iDisI}, \rtit{iDisO} or \rtit{iAsy}. We must therefore consider these cases.
			\begin{itemize}
				\item \emph{\rtit{iAsy}:} From rule \rtit{iAsy} and \eqref{proof:trans-bi-b-tt-2} we thus know that $\pVV'=\eBI{\eSembiP{\hTru}}{\pV'}$ and that $\pV\traS{\actt}\pV'$ as required. Also, since every process satisfies \hTru, we know that $\pV'\hSatS\hTru$ as well, and so we are done since by the definition of \R we know that $(\pV',\eBI{\eSembiP{\hTru}}{\pV'})\in\R$.
				\item \emph{\rtit{iDisI}:} From rule \rtit{iDisI} and \eqref{proof:trans-bi-b-tt-2} we know that: $\pVV'=\eBI{\eV'}{\pV'}$, $\pV\traS{\actInV}\pV'$ and that
				\begin{gather}
					\eSembiP{\hTru}\traS{\ioact{\actdot}{\actInVB}}\eV'. \label{proof:trans-bi-b-tt-3}
				\end{gather}
				Since $\eSembiP{\hTru}=\eIden$ we can deduce that \eqref{proof:trans-bi-b-tt-3} is \emph{false} and hence this case does not apply.
				\item \emph{\rtit{iDisO}:} The proof for this case is analogous as to that of case \rtit{iDisI}.
			\end{itemize}
			%
			%% -- OMIT IN DISSERTATION
			%
			%% -- INCLUDE IN DISSERTATION
%			\item \emph{$\actu\in\set{\actt,\actOutV}$:}  Since $\actu{\,\in\,}\set{\actt,\actOutV}$, the transition in \eqref{proof:trans-bi-b-tt-2} can be performed via rules \rtit{iDisI}, \rtit{iDisO} or \rtit{iAsy} when $\actu{\,=\,}\actt$, and via \rtit{iDef}, \rtit{iTrnO} and \rtit{iEnO} when $\actu{\,=\,}\actOutV$. We omit the proofs for these cases as they are analogous to the ones in case $\hV=\hTru$ of \Cref{lemma:soundness}. Specifically, the case for rules \rtit{iDef} and \rtit{iAsy} remain very much the same, while the cases for rules \rtit{iDisI} and \rtit{iDisO} are analogous to that of rule \rtit{iSup}, and cases \rtit{iTrnO} and \rtit{iEnO} are the same as the cases for \rtit{iTrn} and \rtit{iIns} respectively, modulo output actions.
			%% -- INCLUDE IN DISSERTATION
			%
			\item \emph{$\actu=\actInV$:} Since $\actu=\actInV$, the transition in \eqref{proof:trans-bi-b-tt-2} can be performed either via \rtit{iTrnI} or \rtit{iEnI}. We consider both cases.
			\begin{itemize}
				\item \emph{\rtit{iEnI}:} This case also does not apply since if the transition in \eqref{proof:trans-bi-b-tt-2} is caused by rule \rtit{iEnI} we would have that $\eSembiP{\hTru}\traS{\ioact{\actInV}{\actdot}}\eV$ which is \emph{false} since $\eSembiP{\hTru}=\eIden=\eIdenFull$ and rules \rtit{eRec}, \rtit{eSel} and \rtit{eTrn} state that for every \actInV, $\eIden\traS{\ioact{\actInV}{\actInV}}\eIden$, thus leading to a contradiction.
				\item \emph{\rtit{iTrnI}:} By applying rule \rtit{iTrnI} on \eqref{proof:trans-bi-b-tt-2} we know that $\pVV'=\eBI{\eV'}{\pV'}$ such that
				\begin{gather}
					\eSembiP{\hTru}\traS{\ioact{\actInV}{\actInVV}}\eV'. \label{proof:trans-bi-b-tt-5} \\
					\pV\traS{\actInVV}\pV' \label{proof:trans-bi-b-tt-4} 
				\end{gather}
				Since $\eSembiP{\hTru}=\eIden=\eIdenFull$, by applying rules \rtit{eRec}, \rtit{eSel} and \rtit{eTrn} to \eqref{proof:trans-bi-b-tt-5} we know that $\actInV=\actInVV$, $\eV'=\eIden=\eSembiP{\hTru}$, meaning that $\pVV'=\eBI{\eSembiP{\hTru}}{\pV'}$. Hence, since every process satisfies \hTru we know that $\pV'\hSatS\hTru$, so that by the definition of \R we conclude 
				\begin{gather}
					(\pV',\eBI{\eSembiP{\hTru}}{\pV'})\in\R. \label{proof:trans-bi-b-tt-6}
				\end{gather}
				Hence, we are done by \eqref{proof:trans-bi-b-tt-4} and \eqref{proof:trans-bi-b-tt-6} since we know that $\actInV=\actInVV$.
			\end{itemize}
			%
			%% -- OMIT IN DISSERTATION
			%
			\item \emph{$\actu=\actOutV$:} When $\actu=\actOutV$, the transition in \eqref{proof:trans-bi-b-tt-2} can be performed via \rtit{iDef}, \rtit{iTrnO} or \rtit{iEnO}. We omit this proof as it is very similar to that of case $\actu=\actInV$.
			%
			%% -- OMIT IN DISSERTATION
			%
		\end{itemize}
	\end{Case}
	
	{\newcommand{\formula}{\hBigAndUhVI}
		\newcommand{\synForm}{\rec{\rVV}{
			\ch{
				\Big(\chBigI\begin{xbrace}{lc} \dis{\pateI}{\bVI}{\rVV}{\prtSet}  &\; \text{if }\hVI=\hFls\!\! \\ \prf{\actSIDs{\pateI}{\bVI}}{\eSembiP{\hVI}} &\; \text{otherwise} \!\!\end{xbrace}\Big)
			}{
				\defmon{\hBigAndDhVI}
			}
		}}
		\begin{Case}[\hV=\formula]
			We now assume that 
			\begin{gather}
				\pV\hSatS\formula \label{proof:trans-bi-nec-b-0} \\
				\eBI{\eSembiP{\formula}}{\pV}\traS{\actu}\pVV'.  \label{proof:trans-bi-nec-b-0.5}
			\end{gather}
			From \eqref{proof:trans-bi-nec-b-0} and by the definition of \hSatS we can deduce that
			\begin{gather}
				\forall i\in\IndSet,\actb\in\Act\cdot \ift \pV\wtraS{\actb}\pV' \andt \mtchS{\actSNVI}{\actb}=\sV \thent \pV'\hSatS\hVI\sV \label{proof:trans-bi-nec-b-1}
			\end{gather}
			and from \eqref{proof:trans-bi-nec-b-0.5} and the definition of \eSembiS{-} we have that
			\begin{gather}
				\eBI{\Big(\synForm\Big)}{\pV'}\traS{\actu}\pVV'.  \label{proof:trans-bi-nec-b-2}
			\end{gather}
			From \eqref{proof:trans-bi-nec-b-2} we can deduce that the synthesised monitor can only disable an (input or output) action \actb 
      when its occurrence would violate a conjunct of the form $\hNec{\actSNVI}\hFls$  for some $i\in\IndSet$.
      % when this satisfies a violating modal necessity. 
      However, we also know that \pV is \emph{unable} to perform such an action as otherwise it would contradict assumption \eqref{proof:trans-bi-nec-b-1}. Hence, we can safely conclude that the synthesised monitor in \eqref{proof:trans-bi-nec-b-2} does \emph{not} disable any (input or output) actions of \pV, and so by the definition of \disS we conclude that
			\begin{gather}
				\begin{array}{l}
					\forall\actInV,\actOutV\in\Act,\pV'\in\Sys\cdot  \\ \quad
					\begin{xbrackets}{rclc}
						\pV\traS{\actInV}\pV'&\imp&\eSembiP{\formula}\ntraS{\ioact{\actdot}{\actIn{\pidV}{\vVV}}} \;\text{(for all \vVV)} & \andt \!\!\\
						\pV\traS{\actOutV}\pV'&\imp&\eSembiP{\formula}\ntraS{\ioact{\actInV}{\actdot}}
					\end{xbrackets}\!.
				\end{array}
				\label{proof:trans-bi-nec-b-3}
			\end{gather}
			Since $\actu\in\set{\actt,\actInV,\actOutV}$, we must consider each case.
			\begin{itemize}
				\item \emph{$\actu=\actt$:} Since $\actu=\actt$, from \eqref{proof:trans-bi-nec-b-0.5} we know that
				\begin{gather}
					\eBI{\eSembiP{\formula}}{\pV}\traS{\actt}\pVV'  \label{proof:trans-bi-nec-b-4}
				\end{gather}
				The \actt-transition in \eqref{proof:trans-bi-nec-b-4} can be the result of rules \rtit{iAsy}, \rtit{iDisI} or \rtit{iDisO}; we thus consider each eventuality.
				\begin{itemize}
					\item \emph{\rtit{iAsy}:} As we assume that the reduction in \eqref{proof:trans-bi-nec-b-4} is the result of rule \rtit{iAsy}, we know that $\pVV'=\eBI{\eSembiP{\formula}}{\pV'}$ and that
					\begin{gather}
						\pV\traS{\actt}\pV'  \label{proof:trans-bi-nec-b-5}
					\end{gather}
					as required. Also, since \SHML is \actt-closed, by \eqref{proof:trans-bi-nec-b-0}, \eqref{proof:trans-bi-nec-b-5} and \Cref{prop:tau-closure} we deduce that $\pV'\hSatS\formula$ as well, so that by the definition of \R we conclude that
					\begin{gather}
						(\pV',\eBI{\eSembiP{\formula}}{\pV'})\in\R  \label{proof:trans-bi-nec-b-6}
					\end{gather}
					and so we are done by \eqref{proof:trans-bi-nec-b-5} and \eqref{proof:trans-bi-nec-b-6}.
					\item \emph{\rtit{iDisI}:} By assuming that reduction \eqref{proof:trans-bi-nec-b-4} results from \rtit{iDisI}, we have that $\pVV'=\eBI{\eV'}{\pV'}$ and that
					\begin{gather}
						\eSembiP{\formula}\traS{\ioact{\actdot}{\actInV}}\eV' \label{proof:trans-bi-nec-b-7} \\
						\pV\traS{\actInV}\pV' \label{proof:trans-bi-nec-b-8}
					\end{gather}
					By \eqref{proof:trans-bi-nec-b-3} and \eqref{proof:trans-bi-nec-b-8} we can, however, deduce that for every value \vVV, we have that $\eSembiP{\formula}\ntraS{\ioact{\actdot}{\actIn{\pidV}{\vVV}}}$. This contradicts  \eqref{proof:trans-bi-nec-b-7} and so this case does not apply.		
					\item \emph{\rtit{iDisO}:} As we now assume that the reduction in \eqref{proof:trans-bi-nec-b-4} results from \rtit{iDisO}, we have that $\pVV'=\eBI{\eV'}{\pV'}$ and that
					\begin{gather}
						\pV\traS{\actOutV}\pV' \label{proof:trans-bi-nec-b-9}  \\
						\eSembiP{\formula}\traS{\ioact{\actOutV}{\actdot}}\eV'. \label{proof:trans-bi-nec-b-10} 
					\end{gather}
					Again, this case does not apply since from \eqref{proof:trans-bi-nec-b-3} and \eqref{proof:trans-bi-nec-b-9} we can deduce that $\eSembiP{\formula}\ntraS{\ioact{\actOutV}{\actdot}}$ which contradicts \eqref{proof:trans-bi-nec-b-10}.		
				\end{itemize}
				%\item \emph{$\actu\in\set{\actt,\actOutV}$:}  Since $\actu{\,\in\,}\set{\actt,\actOutV}$, the transition in \eqref{proof:trans-bi-nec-b-0.5} can be performed via rules \rtit{iDisI}, \rtit{iDisO} or \rtit{iAsy} when $\actu{\,=\,}\actt$, and via \rtit{iDef}, \rtit{iTrnO} and \rtit{iEnO} when $\actu{\,=\,}\actOutV$. We omit the proofs for these cases as they are analogous to the ones in case $\hV=\formula$ of \Cref{lemma:soundness}. Specifically, the case for rules \rtit{iDef} and \rtit{iAsy} remain very much the same, while the cases for rules \rtit{iDisI} and \rtit{iDisO} are analogous to that of rule \rtit{iSup}, and cases \rtit{iTrnO} and \rtit{iEnO} are the same as the cases for \rtit{iTrn} and \rtit{iIns} respectively, modulo output actions.
			 	%
				\item \emph{$\actu=\actInV$:} When $\actu=\actInV$, the transition in \eqref{proof:trans-bi-nec-b-2} can be performed via rules \rtit{iEnI} or \rtit{iTrnI}, we consider both possibilities.
					\begin{itemize}
					\item \emph{\rtit{iEnI}:} This case does not apply since from \eqref{proof:trans-bi-nec-b-2} and by the definition of \eSembiS{-} we know that the synthesised monitor does not include action enabling transformations.
					\item \emph{\rtit{iTrnI}:} By assuming that \eqref{proof:trans-bi-nec-b-2} is obtained from rule \rtit{iTrnI} we know that
					\begin{gather}
						\synForm{\,\traS{\ioact{\actInV}{\actInVV}}\,}\eV'  \label{proof:trans-bi-nec-b-12} \\
						\pV\traS{\actInVV}\pV' \label{proof:trans-bi-nec-b-13} \\
						\pVV'=\eBI{\eV'}{\pV'}. \label{proof:trans-bi-nec-b-13.5}
					\end{gather}
					Since from \eqref{proof:trans-bi-nec-b-3} we know that the synthesised monitor in \eqref{proof:trans-bi-nec-b-12} does not disable any action performable by \pV, and since from the definition of \eSembiS{-} we know that the synthesis is incapable of producing action replacing monitors, we can deduce that
					\begin{gather}
						\actInV=\actInVV. \label{proof:trans-bi-nec-b-14}
					\end{gather}
					With the knowledge of \eqref{proof:trans-bi-nec-b-14}, from \eqref{proof:trans-bi-nec-b-13} we can thus deduce that 
					\begin{gather}
						\pV\traS{\actInV}\pV'  \label{proof:trans-bi-nec-b-17}
					\end{gather}
					as required.
					Knowing \eqref{proof:trans-bi-nec-b-14} we can also deduce that in \eqref{proof:trans-bi-nec-b-12} the monitor transforms an action \actInV either (i) via an identity transformation that was synthesised from one of the \emph{disjoint} conjunction branches, \ie from a branch $\prf{\actSIDs{\pateJ}{\bVJ}}{\eSembiP{\hVJ}}$ for some $j{\,\in\,}\IndSet$, or else (ii) via the default monitor synthesised by \defmon{\formula}. We consider both eventualities.
					\begin{enumerate}[(i)]
						\item In this case we apply rules \rtit{eRec}, \rtit{eSel} and \rtit{eTrn} on \eqref{proof:trans-bi-nec-b-12} and deduce that
						\begin{gather}
							\exists j\in\IndSet\cdot\mtchS{\actSNVJ}{\actInV}=\sV  \label{proof:trans-bi-nec-b-15}\\
							\eV'=\eSembiP{\hVJ\sV}.  \label{proof:trans-bi-nec-b-16}
						\end{gather}
						and so from \eqref{proof:trans-bi-nec-b-17}, \eqref{proof:trans-bi-nec-b-15} and \eqref{proof:trans-bi-nec-b-1} we infer that $\pV'\hSatS\hVJ\sV$ from which by the definition of \R we have that $(\pV',\eBI{\eSembiP{\hVJ\sV}}{\pV'})\in\R$, and so from \eqref{proof:trans-bi-nec-b-13.5} and \eqref{proof:trans-bi-nec-b-16} we can conclude that
						\begin{gather}
							(\pV',\pVV')\in\R  \label{proof:trans-bi-nec-b-18}
						\end{gather}
						as required, and so this case is done by \eqref{proof:trans-bi-nec-b-17} and \eqref{proof:trans-bi-nec-b-18}.
						\item When we apply rules \rtit{eRec}, \rtit{eSel} and \rtit{eTrn} we deduce that $\eV'=\eIden$ and so by the definition of \eSembiS{-} we have that 
						\begin{gather}
							\eV'=\eSembiP{\hTru}. \label{proof:trans-bi-nec-b-19}
						\end{gather} 
						Consequently, as every process satisfies \hTru, we know that $\pV'\hSatS\hTru$ and so by the definition of \R we have that $(\pV',\eBI{\eSembiP{\hTru}}{\pV'})\in\R$, so that from \eqref{proof:trans-bi-nec-b-13.5} and \eqref{proof:trans-bi-nec-b-19} we can conclude that
						\begin{gather}
							(\pV',\pVV')\in\R  \label{proof:trans-bi-nec-b-20}
						\end{gather}
						as required. Hence this case is done by \eqref{proof:trans-bi-nec-b-17} and \eqref{proof:trans-bi-nec-b-20}.
					\end{enumerate}
				\end{itemize}
			\item \emph{$\actu=\actOutV$:} When $\actu=\actOutV$, the transition in \eqref{proof:trans-bi-nec-b-2} can be performed via \rtit{iDef}, \rtit{iTrnO} or \rtit{iEnO}. We omit the proof for this case due to its strong resemblance to that of case $\actu=\actInV$.
			\end{itemize}
		\end{Case}
	}
	
	{
		\newcommand{\formula}{\hMaxXF}
		\newcommand{\formUnfold}{\hVMaxXFSub}
		\newcommand{\formUnfoldFull}{\hBigAndUhVI\subb{\cdot\cdot}}
		\newcommand{\subOmega}{\subb{\SubE{\formula}{\hVarX},\SubE{(\hMax{\hVarY_{0}}{\ldots\hMax{\hVarY_{n}}{\hBigAndUhVI}})}{\hVarY_{0}},\ldots}}%,\SubE{(\hMax{\hVarY_{n}}{\hBigAndDhVI})}{\hVarY_{n}}}}
		\begin{Case}[\hV=\formula \andt \hVarX\in\fv{\hV}]
			Now, lets assume that 
			\begin{gather}
			\eBI{\eSembiP{\formula}}{\pV} \traS{\actu} \pVV' \label{proof:trans-max-b-1}
			\end{gather}
			and that $\pV\hSatS\formula$ from which by the definition of \hSatS we have that
			\begin{gather}
			\pV\hSatS\formUnfold. \label{proof:trans-max-b-2}
			\end{gather}
			Since $\formUnfold{\,\in\,}\SHMLnf$, by the restrictions imposed by \SHMLnf we know that: \hV cannot be \hVarX because (bound) logical variables are required to be \emph{guarded}, and it also cannot be \hTru or \hFls since \hVarX is required to be defined in \hV, \ie $\hVarX\in\fv{\hV}$. Hence, we know that \hV can only have the following form, that is
			\begin{gather}
			\hV=\hMax{\hVarY_{0}}{\ldots\hMax{\hVarY_{n}}{\hBigAndDhVI}}  \label{proof:trans-max-b-3}
			\end{gather}
			and so by \eqref{proof:trans-max-b-2}, \eqref{proof:trans-max-b-3} and the definition of \hSatS we have that 
			\begin{gather}
			\pV\hSatS\formUnfoldFull \quad \wheret \label{proof:trans-max-b-4} \\
			\subb{\cdot\cdot}=\subOmega. \nonumber
			\end{gather}
			Since $\eSembiP{\formUnfoldFull}$ synthesises the \emph{unfolded equivalent} of \eSembiP{\formula}, from \eqref{proof:trans-max-b-1} we know that 
			\begin{gather}
			\eBI{\eSembiP{\formUnfoldFull}}{\pV} \traS{\actu} \pVV'. \label{proof:trans-max-b-5}
			\end{gather}
			Hence, since we know \eqref{proof:trans-max-b-4} and \eqref{proof:trans-max-b-5}, from this point onwards the proof proceeds as per the previous case. We thus omit showing the remainder of this proof.\\
		\end{Case}
	} %\vspace{-5mm}
	
	\noindent From the above cases we can therefore conclude that $(b)$ holds as well. 
\end{proof}

\medskip

We are finally in a position to state and prove our eventual transparency results following \Cref{def:evtenf}.

  \begin{prop}[Eventual Transparency] \label[proposition]{lemma:evt-transparency-bi}
    For every input port set \prtSet, \SHMLnf formula $\hV$, system states $\pV,\pV'{\,\in\,}\Sys$, action disabling monitor $\eV'$ and trace \tr, if $\eI{\eSembiP{\hV}}{\pV}\wtraS{\tr}\eI{\eV'}{\pV'}$ and $\pV'\in\hSemS{\afterF{\hV}{\tr}}$ then $\eI{\eV'}{\pV'}\bisim\pV'$. \qed
  \end{prop}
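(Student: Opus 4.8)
The plan is to obtain the result by directly composing the two auxiliary results already established, namely \Cref{lemma:evt-x} and the Transparency \Cref{lemma:transparency-bi}; once these are in place, eventual transparency follows with essentially no further work. The guiding observation is that a synthesised monitor, after it has processed a trace \tr, is itself the synthesis of the \emph{residual} formula $\afterF{\hV}{\tr}$. Transparency applied to that residual formula then delivers the desired bisimilarity, provided the residual \sus state satisfies it—which is precisely the second hypothesis.

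Concretely, I would assume $\eI{\eSembiP{\hV}}{\pV}\wtraS{\tr}\eI{\eV'}{\pV'}$ together with $\pV'{\,\in\,}\hSemS{\afterF{\hV}{\tr}}$. First I would apply \Cref{lemma:evt-x} to the transition hypothesis $\eI{\eSembiP{\hV}}{\pV}\wtraS{\tr}\eI{\eV'}{\pV'}$. This yields a formula $\hVV{\,\in\,}\SHMLnf$ such that $\hVV = \afterF{\hV}{\tr}$ and $\eSembiP{\hVV} = \eV'$. Thus the residual monitor $\eV'$ is identified with $\eSembiP{\afterF{\hV}{\tr}}$, and the residual formula $\afterF{\hV}{\tr}$ is itself a normalised \SHMLnf formula, so that transparency is applicable to it.

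Next, using the second hypothesis rewritten via $\hVV = \afterF{\hV}{\tr}$, I have $\pV'{\,\in\,}\hSemS{\hVV}$. I would then invoke \Cref{lemma:transparency-bi} on the formula $\hVV$ and the state $\pV'$; since $\pV'{\,\in\,}\hSemS{\hVV}$, transparency gives $\eI{\eSembiP{\hVV}}{\pV'}{\,\bisim\,}\pV'$. Substituting $\eSembiP{\hVV} = \eV'$ from the previous step yields exactly $\eI{\eV'}{\pV'}{\,\bisim\,}\pV'$, which is the claim.

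In this final statement there is no genuine obstacle: the hard reasoning has already been absorbed into the lemmas. The substantial work lies upstream, in \Cref{lemma:evt-x}—which itself relies on the technical \Cref{lemma:evt-y-bi} showing that silent transitions preserve the residual-monitor shape—and in the bisimulation construction underpinning \Cref{lemma:transparency-bi}. The only detail worth stating explicitly is the exact identification $\hVV = \afterF{\hV}{\tr}$, so that the set $\hSemS{\afterF{\hV}{\tr}}$ appearing in the hypothesis is literally the $\hSemS{\hVV}$ demanded by transparency; this equality is provided verbatim by \Cref{lemma:evt-x}, so the two results dovetail without any gap.
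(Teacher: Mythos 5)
Your proposal is correct and follows essentially the same route as the paper's own proof: apply \Cref{lemma:evt-x} to identify $\eV'$ with $\eSembiP{\afterF{\hV}{\tr}}$, then invoke \Cref{lemma:transparency-bi} on the residual formula and state $\pV'$ to obtain the bisimilarity. The paper's argument is exactly this two-step composition, so there is nothing missing.
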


  \begin{proof} We must prove that for every formula $\hV{\,\in\,}\SHMLnf$ if $\eSembiP{\hV}{\,=\,}\eV$ then $\evtenfdef{\eV}{\hV}$. 
    %
    % Since \SHMLnf is equally expressive as \SHML 
    We prove that for every $\hV{\,\in\,}\SHMLnf$, if $\eI{\eSembiP{\hV}}{\pV}\wtraS{\tr}\eI{\eV'}{\pV'}$ and $\pV'\hSatS\afterF{\hV}{\tr}$ then $\eI{\eV'}{\pV'}\bisim\pV'$.
    \noindent Now, assume that 
    \begin{gather}
    \eI{\eSembiP{\hV}}{\pV}\wtraS{\tr}\eI{\eV'}{\pV'}  \label{proof:evt-transparency-1} \\
    \pV'\hSatS\afterF{\hV}{\tr}		\label{proof:evt-transparency-2}
    \end{gather}
    and so from \eqref{proof:evt-transparency-1} and \Cref{lemma:evt-x} we have that 
    \begin{gather}
    \exists\hVV\in\SHMLnf\cdot\hVV=\afterF{\hV}{\tr} 	\label{proof:evt-transparency-3} \\
    \eSembiP{\afterF{\hV}{\tr}}=\eV'=\eSembiP{\hVV}.	\label{proof:evt-transparency-4} 
    \end{gather}
    Hence, knowing \eqref{proof:evt-transparency-2} and \eqref{proof:evt-transparency-3}, by \Cref{lemma:transparency-bi} (Transparency) we conclude that $\eI{\eSembiP{\afterF{\hV}{\tr}}}{\pV'}\bisim\pV'$ as required, and so we are done.
  \end{proof}

  %
  % \noindent The proofs of these propositions are given in \Cref{sec:appendix}.

\section{Transducer Optimality}
\label{sec:optimality}

Recall \Cref{def:enforcement} from \Cref{sec:enforcement}.
Through criteria such as \Cref{def:senf,def:evtenf}, it defined what it means for a monitor to adequately enforce a formula.
However, it did \emph{not} assess whether a monitor is (to some extent) the ``\emph{best}'' that one can find to enforce a property.
In order to define such a notion we must first be able to compare monitors to one another via some kind of \emph{distance measurement} that tells them apart.
One potential measurement is to assess the monitor's level of \emph{intrusiveness} when enforcing the property.

\begin{figure}
	\[
	\mc{\eV}{\txr} \defeq 
	\begin{xbrace}{cl}
	1+\mc{\eV'}{\txr'} & \ift \txr{\,=\,}\actu\txr' \andt \eI{\eV}{\trcsys{\actu\txr'}}\traS{\actu'}\eI{\eV'}{\trcsys{\txr'}} \andt \actu{\,\neq\,}\actu' \\		
	1+\mc{\eV'}{\txr} & \ift \txr{\,\in\,}\set{\actu\txr',\txrE} \andt \eI{\eV}{\trcsys{\txr}}\traS{\actu'}\eI{\eV'}{\trcsys{\txr}} \\[1mm]
	\mc{\eV'}{\txr'} & \ift \txr{\,=\,}\actu\txr' \andt \eI{\eV}{\trcsys{\actu\txr'}}\traS{\actu}\eI{\eV'}{\trcsys{\txr'}} \\
	\len{\txr} & \ift \txr{\,\in\,}\set{\actu\txr',\txrE} \andt \forall\actu'\cdot\eI{\eV}{\trcsys{\txr}}\ntraS{\actu'}
	\end{xbrace}
	\] 
  % \vspace{-5mm}
	\caption{Modification Count (\mcS).}
	\label{fig:mc}
  % \vspace{-5mm}
\end{figure}

In \Cref{fig:mc} we  define function \mcS that inductively analyses a system run, represented as an explicit trace \txr, and counts the number of modifications applied by the monitor. %applies in order to enforce the property of interest.
In each case the function reconstructs a trace system \trcsys{\txr} and instruments it with the monitor \eV in order to assess the type of transformation applied.
Specifically, in the first two cases, \mcS increments the counter whenever the monitor adapts, disables or enables an action, and then it recurses to keep on inspecting the run (\ie the suffix $\txr'$ in the first, and the same trace $\txr$ in the second) vis-a-vis the subsequent monitor state, $\eV'$.
The third case, specifies that the counter stays unmodified when the monitor applies an identity transformation, while the last case returns the length of \txr when $\eBI{\eV}{\trcsys{\txr}}$ is unable to execute further.
\begin{exa} \label[example]{ex:mc} Recall the monitors of \Cref{ex:transducers} and consider the following system run $\txr^{0}{=}\actInPidVvVA.\actInPidVvVB.\actt.\actOut{\pidV}{\vVVB}.\actOut{\pidV}{\vVVB}.\actLoggTuple{\vVB}{\vVVB}$. 
	For \eVe and \eVa, function \mcS respectively counts three enabled actions, \ie $\mc{\eVe}{\txr^{0}}{=}3$, and four adapted actions, \ie $\mc{\eVa}{\txr^{0}}{=}4$ (since $\actLoggTuple{\vVB}{\vVVB}$ remains unmodified). 
	The maximum count of 5 is attained by \eVd as it immediately blocks the first input $\actInPidVvVA$, and so none of the actions in $\txr^{0}$ can be executed by the composite system \ie $\forall\actu\cdot\eBI{\eVd}{\trcsys{\txr^{0}}}\!\ntraS{\actu}$ and so $\mc{\eVd}{\txr^{0}}{=}5$.
	Similarly, $\mc{\eVdt}{\txr^{0}}{=}4$ since \eVdt allows the first request to be made, but blocks the second erroneous one, and as a result it also forbids the execution of the subsequent actions, \ie $\forall\actu\cdot\eBI{\eVdt}{\trcsys{\txr^{0}}}\traS{\actInPidVvVA}{\!\cdot\!}\ntraS{\actu}$.
	%{(\ch{\eVcls}{\prf{\pidVOutAPat}{\bigl(\ch{\prf{\actSN{\pidVOutBPat}{\actdot}}{\eVd}}{\ch{\prf{\actSet{\actLoggTuple{\vVA}{\dvVVA}}}{\eVdt} }{\eVcls}} \bigr)}})}{\trcsys{\actReq.\actAns.\actAns.\actLog}}\ntraS{\actu}. $
	%
	Finally, \eVdet performs the least number of modifications, namely $\mc{\eVdet}{\txr^{0}}{=}2$.
	The first modification is caused when the monitor blocks the second erroneous input and internally \emph{inserts} a default input value that allows the composite system to proceed over a \actt-action.
	This contrasts with \eVd and \eVdt which fail to perform this insertion step thereby contributing to their high intrusiveness score. 
	The second modification is attained when \eVdet suppresses the redundant response.
	\exqed 
\end{exa}

\begin{figure}
  % [t]
	\[
		\trpBI{\eV} \defeq 
		\begin{xbrace}{cl}
			\emptyset & \ift \eV{\,=\,}\rV \\
			\bigunionU{i\in\IndSet}\trpBI{\eV_i}  & \ift \eV{\,=\,}\chBigU{i\in\IndSet}\eV_i \\
			\trpBI{\eV'} & \ift \eV{\,=\,}\recX{\eV'} \ort \eV{\,=\,}\prf{\actSID{\pate}{\bV}}{\eV'} \\
			\set{\DIS}{\,\cup\,}\trpBI{\eV'} & \ift\eV{\,=\,}\prf{\actSTD{\patOutV}{\bV}}{\eV'} \ort \eV{\,=\,}\prf{\actSTI{\bV}{\actInV}}{\eV'} \\
			\set{\EN}{\,\cup\,}\trpBI{\eV'} & \ift\eV{\,=\,}\prf{\actSTD{\patInV}{\bV}}{\eV'} \ort \eV{\,=\,}\prf{\actSTI{\bV}{\actOutV}}{\eV'} \\
			\set{\ADPT}{\,\cup\,}\trpBI{\eV'} & \ift\eV{\,=\,}\prf{\actSTN{\pate}{\bV}{\pate'}}{\eV'} \andt \pate'{\,\neq\,}\underline{\pate}{\,\neq\,}\actdot
		\end{xbrace}
	\]
%	\[
%		\si{\eV} \defeq 
%		\begin{xbrace}{cl}
%			\emptyset & \ift \eV{\,=\,}\rV 
%			\\
%			\bigunionU{i\in\IndSet}\si{\eV_i}  & \ift \eV{\,=\,}\chBigU{i\in\IndSet}\eV_i 
%			\\
%			\set{\pidV}{\,\cup\,}\si{\eV'} & \ift \eV{\,=\,}\prf{\actSTN{\pate}{\bV}{\actInV}}{\eV'} (\text{for all }\pidV{\in}\Port)
%			\\			
%			\si{\eV'} & \otherwiset\!\!\ift \eV{\,\in\,}\set{\recX{\eV'},\prf{\actSTN{\pate}{\bV}{\pate'}}{\eV'}} 
%		\end{xbrace}
%	\]  
%	\[
%		\fairpre{\eV}{\eVV} \defeq \trpBI{\eV}\subseteq\trpBI{\eVV} \andt \si{\eV}\subseteq\si{\eVV}
%	\]
	% \vspace{-3mm} 
	\caption{Enforcement Capabilities (\trpBIS).}
	\label{fig:ec}
  % \vspace{-5mm} 
\end{figure}

We can now use function \mcS to compare monitors to each other in order to identify the least intrusive one, \ie the monitor that applies the least amount of transformations when enforcing a specific property.
However, for this comparison to be fair, we must also compare like with like.
This means that if a monitor enforces a formula by only disabling actions, it is only fair to compare it to other monitors of the same kind.
It is reasonable to expect that monitors with more enforcement capabilities are likely to be ``better'' than those with fewer capabilities.
We determine the enforcement capabilities of a monitor via function \trpBIS of \Cref{fig:ec}. 
It inductively analyses the structure of the monitor and deduces whether it can enable, disable and adapt actions based on the type of transformation triples it defines.
For instance, if the monitor defines an output suppression triple, $\prf{\actSTD{\patOutV}{\bV}}{\eV'}$, or an input insertion branch, $\prf{\actSTI{\bV}{\actInV}}{\eV'}$, then \trpBIS determines that the monitor can disable actions \DIS, while if it defines an input suppression, $\prf{\actSTD{\patInV}{\bV}}{\eV'}$, or an output insertion branch, $\prf{\actSTI{\bV}{\actOutV}}{\eV'}$, then it concludes that the monitor can enable actions, \EN. 
Similarly, if a monitor defines a replacement transformation, it infers that the monitor can adapt actions, \ADPT.
%

%\cmt{UPDATE!! WE FOCUS ON THE HARDCODED INFO ABOUT THE SYSTEM, ie THE INPUT PORT INFORMATION WHICH CAN BE USED TO INTERACT WITH THE SUS.}
%%
%We also determine the knowledge that the monitor has about the system via the \siS function, which returns a set containing the ports on which the monitor can insert a value.
%%
%It creates this set by inspecting the structure of the monitor, and adding the port identifier to the resultant set whenever the monitor defines an input or output insertion branch.
%%
\begin{exa} Recall the monitors of \Cref{ex:transducers}. 
	With function \trpBIS we determine that $\trpBI{\eVe}{=}\set{\EN}$, $\trpBI{\eVa}{=}\set{\ADPT}$, $\trpBI{\eVd}{=}\trpBI{\eVdt}{=}\trpBI{\eVdet}{=}\set{\DIS}$. 
	Monitors may also have multiple types of enforcement capabilities. 
  For instance, 
  \begin{equation*}
    \trpBI{\recX{\ch{\prf{\actSN{\patInV}{\actdot}}{\rV}}{\prf{\actSN{\patOutV}{\actdot}}{\rV}}}}{=}\set{\EN,\DIS}.
    \tag*{\exqed}  
  \end{equation*}
\end{exa}

With these definitions we now define \emph{optimal enforcement}.
\begin{defi}[Optimal Enforcement] \label[definition]{def:opt-enf-bi} A monitor \eV is \emph{optimal} when enforcing \hV, denoted as $\optenfdef{\eV}{\hV}$, iff it \emph{enforces} \hV (\Cref{def:enforcement}) and when for every state \pV, explicit trace \txr and monitor \eVV, if $\trpBI{\eVV}{\,\subseteq\,}\trpBI{\eV}$, $\enfdef{\eVV}{\hV}$ and $\pV\traS{\txr}$ then $\mc{\eV}{\txr}{\,\leq\,}\mc{\eVV}{\txr}$. \qed
\end{defi}
\noindent \Cref{def:opt-enf-bi} states that an adequate (sound and eventually transparent) monitor \eV is \emph{optimal} for \hV, if one cannot find another adequate monitor \eVV, with the same (or fewer) enforcement capabilities, that performs \emph{fewer modifications} than \eV and is thus \emph{less intrusive}.
%
%Optimal enforcement does not exclude that a monitor with different, or additional enforcement capabilities (or system information) might be more optimal when enforcing \hV, but it ensures that one cannot find a more optimal monitor with less or the same enforcement capabilities and information as \eV.
%

\begin{exa} \label[example]{ex:oenf} %Showing that a monitor \eV implements the optimal approach for enforcing a formula \hV, \optenfdef{\eV}{\hV}, is inherently difficult. 
	%
	%This again transpires from the universal quantifications over all system states \pV, explicit traces \txr, and other adequate monitors \eVV to which \eV is compared. %, and also since \eV must enforce \hV, which we have already argued why this is difficult in \Cref{ex:sound-enf,ex:evtenf}.
	%
	Recall formula $\hV_1$ of \Cref{ex:shml-formula-bi} and monitor \eVdet of \Cref{ex:evtenf}.
	Although showing that $\optenfdef{\eVdet}{\hV_1}$ is inherently difficult, from \Cref{ex:mc} we already get the intuition that it holds since \eVdet imposes the least amount of modifications compared to the other monitors of \Cref{ex:transducers,ex:other-transducers}. 
	We further reaffirm this intuition using systems \pVbo and \pVbi from \Cref{ex:shml-formula-bi}.
	In fact, when considering the invalid runs $\txr^{1}{\defeq}{\actIn{\pidV}{\vVA}.\actt.\actOut{\pidV}{\vVVA}.\actOut{\pidV}{\vVVA}.\actLoggTuple{\vVA}{\vVVA}}$ of \pVbo, and $\txr^{2}{\defeq}{\actIn{\pidV}{\vVA}.\actIn{\pidV}{\vVB}.\actt.\actOut{\pidV}{\vVVB}.\actLoggTuple{\vVB}{\vVVB}}$ of \pVbi, one can easily deduce that no other adequate action disabling monitor can enforce $\hV_1$ with fewer modifications than those imposed by \eVdet, namely, $\mc{\eVdet}{\txr^{1}}{\,=\,}\mc{\eVdet}{\txr^{2}}{\,=\,}1$.
	Furthermore, consider the invalid traces $\txr^{1}\sub{\pidVVV}{\pidV}$ and $\txr^{2}\sub{\pidVVV}{\pidV}$ that are respectively produced by versions of \pVbo and \pVbi that interact on some port $\pidVVV$ instead of $\pidV$ (for any port $\pidVVV{\,\neq\,}\pidV$).
	Since \eVdet binds the port \pidVVV to its data binder \dvV and uses this information in its insertion branch, $\prf{\actSN{\actdot}{\dvVVVInBPat}}\rVV$, the \emph{same} modification count is achieved for these traces, as well \ie $\mc{\eVdet}{\txr^{1}\sub{\pidVVV}{\pidV}}{\,=\,}\mc{\eVdet}{\txr^{2}\sub{\pidVVV}{\pidV}}{\,=\,}1$.  \exqed
\end{exa}

\Cref{ex:oenf} describes the case where formula \hV is optimally enforced by a \emph{finite-state} and \emph{finitely-branching} monitor, \eVdet. 
In the general case, this is not always possible.
\begin{exa} \label[example]{ex:oenf-fail}
	Consider formula $\hV_2$ stating that an initial input on port \pidV followed by another input from some other port $\dvVB{\neq}\pidV$ constitutes invalid system behaviour.
	Also consider monitor $\eV_1$ where $\enfdef{\eV_1}{\hV_2}$.
	\begin{align*}
		\hV_2&\defeq\hNec{\actSet{\actIn{\pidV}{\dbVdc}}}\hNec{\actSN{\actIn{\dbVB}{\dbVdc}}{\dvVB{\neq}\pidV}}\hFls \\ \eV_1&\defeq\prf{\actSet{\actIn{\pidV}{\dbVdc}}}{\rec{\rVV}{(\ch{\prf{\actSIDs{\actdot}{\actIn{\pidVV}{\vVdef}}}\rVV}}{\prf{\actSet{\actIn{\pidV}{\dbVdc}}}\eIden})}
	\end{align*}
	When enforcing a system that generates the run $\txr^{3}{\,\defeq\,}\prf{\actInPidVvVA}\prf{\actIn{\pidVV}{\vVB}}\prf{\actOut{\pidV}{\vVVA}}\txrr^{3}$, monitor $\eV_1$ modifies the trace only \emph{once}.
	Although it disables the input $\actIn{\pidVV}{\vVB}$, it subsequently unblocks the \sus by inserting $\actIn{\pidVV}{\vVdef}$ and so trace $\txr^{3}$ is transformed into $\prf{\actInPidVvVA}\prf{\actt}\prf{\actOut{\pidV}{\vVVA}}\txrr^{3}$. 
	However, for a slightly modified version of $\txr^{3}$, \eg $\txr^{3}\sub{\pidVVV}{\pidVV}$, $\eV_1$ scores a modification count of $2+\len{\txrr^{3}}$. 
  This is the case because, although it blocks the invalid input on port \pidVVV, it fails to insert the default value that unblocks the \sus.
	A more expressive version of $\eV_1$, such as $$\eV_2{\,\defeq\,}\prf{\actSet{\actIn{\pidV}{\dbVdc}}}{\rec{\rVV}{(\ch{ \ch{\prf{\actSIDs{\actdot}{\actIn{\pidVV}{\vVdef}}}\rVV}{ \underline{\prf{\actSIDs{\actdot}{\actIn{\pidVVV}{\vVdef}}}\rVV}} }  {\prf{\actSet{\actIn{\pidV}{\dbVdc}}}\eIden})}},$$ circumvents this problem by defining an extra insertion branch (underlined), but still fails to be optimal in the case of $\txr^{3}\sub{\pidVVVV}{\pidVV}$.
	In this case, there does not exist a way to finitely define a monitor that can insert a default value on every possible input port $\dvVB{\,\neq\,}\pidV$.
	Hence, it means that the optimal monitor $\eV_{\textsf{opt}}$ for $\hV_1$ would be an \emph{infinite branching} one, \ie it requires a countably infinite summation that is not expressible in \Enf, $$\eV_{\textsf{opt}}{\defeq}\prf{\actSet{\actIn{\pidV}{\dbVdc}}}{(\rec{\rVV}{\ch{ \chBigU{\pidVV{\,\in\,}\Port\andt\pidV{\neq}\pidVV}\prf{\actSN{\actdot}{\actIn{\pidVV}{\vVdef}}}\rVV } }{\prf{\actSet{\actIn{\pidV}{\dbVdc}}}\eIden})}$$ or alternatively $$\prf{\actSet{\actIn{\pidV}{\dbVdc}}}{(\rec{\rVV}{\ch{ \chBigU{\pidVV{\,\in\,}\Port}\prf{\actSTI{\pidV{\neq}\pidVV}{\actIn{\pidVV}{\vVdef}}}\rVV } }{\prf{\actSet{\actIn{\pidV}{\dbVdc}}}\eIden})}$$ where the condition $\pidV{\neq}\pidVV$ is evaluated at runtime. \exqed
\end{exa}

Unlike \Cref{ex:oenf}, \Cref{ex:oenf-fail} presents a case where optimality can only be attained by a monitor that defines an \emph{infinite number of branches}; this is problematic since monitors are required to be \emph{finitely} described.
As it is not always possible to find a finite monitor that enforces a formula using the least amount of transformation for every possible system, this indicates that \Cref{def:opt-enf-bi} is too strict. 
We thus mitigate this issue by weakening \Cref{def:opt-enf-bi} and redefine it in terms of the set of system states \SysPrtSet, \ie the set of states that can only perform inputs using the ports specified in a finite $\prtSet{\,\subset\,}\Port$.
Although this weaker version does \emph{not} guarantee that the monitor \eV optimally enforces \hV on \emph{all} possible systems, it, however, ensures optimal enforcement for all the systems that input values via the ports specified in \prtSet.

\begin{defi}[Weak Optimal Enforcement] 
  \label[definition]{def:opt-enf-bi-finite} 
  A monitor \eV is \emph{weakly optimal} when enforcing \hV, denoted as $\optenfdefFinP{\eV}{\hV}$, iff it \emph{enforces} \hV (\Cref{def:enforcement}) and when for every state \underline{$\pV{\,\in\,}\SysPrtSet$}, explicit trace \txr and monitor \eVV, if $\trpBI{\eVV}{\,\subseteq\,}\trpBI{\eV}$, $\enfdef{\eVV}{\hV}$ and $\pV\traS{\txr}$ then $\mc{\eV}{\txr}{\,\leq\,}\mc{\eVV}{\txr}$. \qed
\end{defi}

\begin{exa}
	Monitor $\eV_1$ from \Cref{ex:oenf-fail} ensures that $\hV_2$ is optimally enforced on systems that interact on ports \pidV and \pidVV, \ie when $\prtSet{\,=\,}\set{\pidV,\pidVV}$, while monitor $\eV_2$ guarantees it when $\prtSet{\,=\,}\set{\pidV,\pidVV,\pidVVV}$. \exqed
\end{exa}

We can show that a synthesised monitor \eSembiP{\hV} obtained using the synthesis function \Cref{def:synthesis-bi} from \Cref{sec:synthesis} is also guaranteed to be \nw{weakly} optimal (as stated by  \Cref{def:opt-enf-bi-finite}) when enforcing \hV on a \sus \pV whose input ports are specified by \prtSet, \ie $\pV{\,\in\,}\SysPrtSet$. 
Since our synthesis produces only action disabling monitors, \ie $\trpBI{\eSembiP{\hV}}{\,=\,}\set{\DIS}$ for all \hV and \prtSet, we can limit ourselves to monitors pertaining to the set $\DisEnfS{\defeq}\!\Setdef{\!\!\eVV}{\text{if }\trpBI{\eVV}{\,\subseteq\,}\set{\DIS}\!\!}$.
The proof for \Cref{thm:oenf} below relies on the following lemmas.
\begin{lem} \label[lemma]{lemma:oenf-x1-bi} 
	For every $\eV{\,\in\,}\DisEnfS$ and explicit trace $\txr$, there exists some $\nV$ such that $\mc{\eV}{\txr}{\,=\,}\nV$. \qed
\end{lem}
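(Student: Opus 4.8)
The plan is to show that, for $\eV{\,\in\,}\DisEnfS$, the recursion defining $\mcS$ in \Cref{fig:mc} always terminates, so that a value \nV with $\mc{\eV}{\txr}{\,=\,}\nV$ is produced. The only clause of the definition that does \emph{not} shorten the trace argument is the second one, in which $\eI{\eV}{\trcsys{\txr}}\traS{\actu'}\eI{\eV'}{\trcsys{\txr}}$ leaves the underlying trace system \emph{unchanged}; the crux of the proof is that this clause can never be triggered by a disabling monitor. Every other clause either strips the leading action \actu off $\txr$ (clauses one and three, whose recursive calls are on the strictly shorter suffix $\txr'$) or terminates immediately, returning $\len{\txr}$ (clause four).

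First I would establish a closure property: whenever $\eV{\,\in\,}\DisEnfS$ and $\eV\traS{\ioact{\actg}{\actgp}}\eVV$, then $\eVV{\,\in\,}\DisEnfS$. Inspecting the monitor rules \rtit{eTrn}, \rtit{eSel} and \rtit{eRec} of \Cref{fig:mod-bi-re}, every derivative \eVV is a sub-monitor of \eV, possibly under a data substitution \sV and possibly after one unfolding of a recursive construct; none of these operations alters whether a transformation prefix is an identity, a suppression, an insertion or a replacement, nor whether its pattern and action are inputs or outputs. Hence $\trpBI{\eVV}{\,\subseteq\,}\trpBI{\eV}{\,\subseteq\,}\set{\DIS}$, so the whole $\mcS$ computation stays inside $\DisEnfS$; since $\trpBI{\eIden}{\,=\,}\emptyset$, the same holds for the monitor \eIden introduced by the defaulting rule \rtit{biDef}.

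Next I would tie the syntactic capability function \trpBIS of \Cref{fig:ec} to the instrumentation semantics. Keeping the trace system fixed while the composite externalises an action is possible \emph{only} through rules \rtit{biEnO} and \rtit{biEnI}: in every other rule of \Cref{fig:mod-bi-re} the system component transitions, so starting from $\trcsys{\actu\txr'}$ it must consume \actu and move to $\trcsys{\txr'}$. But \rtit{biEnO} needs a monitor step $\eV\traS{\ioact{\actdot}{(\actOutV)}}\eVV$ (an output insertion) and \rtit{biEnI} needs $\eV\traS{\ioact{(\actInV)}{\actdot}}\eVV$ (an input suppression), and by \Cref{fig:ec} any prefix that can produce such a step, namely $\prf{\actSTI{\bV}{\actOutV}}{\eV'}$ or $\prf{\actSTD{\patInV}{\bV}}{\eV'}$, contributes \EN to $\trpBI{\eV}$. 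As $\EN{\,\notin\,}\trpBI{\eV}$ (because $\trpBI{\eV}{\,\subseteq\,}\set{\DIS}$), no such prefix occurs in a disabling monitor, so neither \rtit{biEnO} nor \rtit{biEnI} is ever enabled on $\eI{\eV}{\trcsys{\txr}}$, and the second clause of \Cref{fig:mc} is vacuous. I expect this step to be the main obstacle, since it is here that the purely syntactic notion of enforcement capability must be reconciled with exactly which instrumentation rules can fire.

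Finally I would conclude by strong induction on $\len{\txr}$. If $\txr{\,=\,}\txrE$, only clauses two and four can match; clause two being excluded, the computation returns $\len{\txrE}{\,=\,}0$. If $\txr{\,=\,}\actu\txr'$, any available transition of $\eI{\eV}{\trcsys{\txr}}$ falls under clause one or three (via \rtit{biDisO}, the default insertion of \rtit{biDisI}, an identity step through \rtit{biTrnO} or \rtit{biTrnI}, \rtit{biAsy}, or \rtit{biDef}), whose recursive call is on $\txr'$ with a monitor $\eV'{\,\in\,}\DisEnfS$ by the closure property; the induction hypothesis supplies a value for $\mc{\eV'}{\txr'}$, and adding $0$ or $1$ yields one for $\mc{\eV}{\txr}$. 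If instead no transition is available, clause four returns $\len{\txr}$. In every case the required \nV exists.
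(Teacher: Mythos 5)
Your proposal is correct. The paper itself gives no proof of \Cref{lemma:oenf-x1-bi} (it is stated with the proof omitted), so there is nothing to compare against; your argument fills the gap soundly and in what is surely the intended way: the second clause of \Cref{fig:mc} is the sole source of possible non-termination, it can fire only via \rtit{biEnO} or \rtit{biEnI} since every other instrumentation rule forces the trace system to consume its head action, and those two rules require an output-insertion or input-suppression prefix, which by \Cref{fig:ec} would place \EN in \trpBI{\eV}, contradicting $\eV{\,\in\,}\DisEnfS$. Your two supporting observations---closure of \DisEnfS under monitor transitions (data substitution and recursive unfolding preserve the syntactic classification of prefixes) and $\trpBI{\eIden}{\,=\,}\emptyset$ to cover the \rtit{biDef} case---are exactly what is needed for the induction on $\len{\txr}$ to go through.
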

\begin{lem}  \label[lemma]{lemma:oenf-x2-bi}
	For every action \acta and monitor
   $\eV{\,\in\,}\DisEnfS$, if it is the case that $\eV\traS{\actaa}\eV'$, $\enfdef{\eV}{\hBigAndUhVI}$ and $\mtchS{\actSNVJ}{\acta}{\,=\,}\sV$ (for some $j{\in}\IndSet$) then $\enfdef{\eV'}{\hVJ\sV}$. \qed
\end{lem}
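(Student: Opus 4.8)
The plan is to unfold $\enfS$ through \Cref{def:enforcement} into soundness (\Cref{def:senf}) and eventual transparency (\Cref{def:evtenf}) and to establish each one for the derivative monitor $\eV'$ and residual formula $\hVJ\sV$ by \emph{pulling it back} one $\acta$-step to the monitor $\eV$ and formula $\hBigAndUhVI$, which we already know to be adequately enforced. The central device is a \emph{bridging transition}: for any system state $\pV$, letting $\prf{\acta}\pV$ denote (a state of) a system whose only initial move is $\prf{\acta}\pV\traS{\acta}\pV$, the given identity step $\eV\traS{\actaa}\eV'$ lifts to a composite transition $\eI{\eV}{\prf{\acta}\pV}\traS{\acta}\eI{\eV'}{\pV}$. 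This lift is obtained by rule \rtit{biTrnO} when $\acta$ is an output (the system emits $\acta$ and the monitor relays it unchanged) and by rule \rtit{biTrnI} when $\acta$ is an input (the monitor forwards $\acta$ and the system consumes it); in both cases the identity label $\actaa$ makes the externally visible action coincide with $\acta$.

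For soundness I would fix an arbitrary $\pV{\,\in\,}\Sys$ and use the bridge to get $\eI{\eV}{\prf{\acta}\pV}\wtraS{\acta}\eI{\eV'}{\pV}$. Soundness of $\eV$ for $\hBigAndUhVI$ gives $\eI{\eV}{\prf{\acta}\pV}{\,\in\,}\hSemS{\hBigAndUhVI}$, so in particular $\eI{\eV}{\prf{\acta}\pV}{\,\in\,}\hSemS{\hNec{\actSNVJ}{\hVJ}}$. Since $\mtchS{\actSNVJ}{\acta}{\,=\,}\sV$, the semantics of the necessity modality in \Cref{fig:recHML} forces $\eI{\eV'}{\pV}{\,\in\,}\hSemS{\hVJ\sV}$. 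As $\pV$ was arbitrary this is precisely soundness of $\eV'$ for $\hVJ\sV$; moreover any single such $\pV$ already yields $\isSat{\hVJ\sV}$, so the obligation is well-posed, and this also confirms $\hVJ{\,\neq\,}\hFls$, consistent with the move being an identity rather than a suppression.

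For eventual transparency I would take arbitrary $\pV,\pV'$, trace $\tr$ and monitor $\eV''$ with $\eI{\eV'}{\pV}\wtraS{\tr}\eI{\eV''}{\pV'}$ and $\pV'{\,\in\,}\hSemS{\afterF{\hVJ\sV}{\tr}}$. Prefixing the bridge step gives $\eI{\eV}{\prf{\acta}\pV}\wtraS{\acta\tr}\eI{\eV''}{\pV'}$. It remains to align the residual obligations: by the definition of \afterFS we have $\afterF{\hBigAndUhVI}{\acta\tr}{\,=\,}\afterF{(\afterF{\hBigAndUhVI}{\acta})}{\tr}$, and because the normalised conjunction has pairwise disjoint guards, $\acta$ matches only branch $j$, so every other conjunct of $\afterF{\hBigAndUhVI}{\acta}$ collapses to $\hTru$ and $\afterF{\hBigAndUhVI}{\acta}$ is semantically $\hVJ\sV$. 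Hence $\hSemS{\afterF{\hBigAndUhVI}{\acta\tr}}{\,=\,}\hSemS{\afterF{\hVJ\sV}{\tr}}$, so $\pV'{\,\in\,}\hSemS{\afterF{\hBigAndUhVI}{\acta\tr}}$, and eventual transparency of $\eV$ for $\hBigAndUhVI$ yields $\eI{\eV''}{\pV'}{\,\bisim\,}\pV'$. Combining the two parts gives $\enfdef{\eV'}{\hVJ\sV}$.

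The main obstacle is the bookkeeping in the eventual-transparency step: one must check carefully that the single leading $\acta$-step corresponds to exactly one application of \afterFS, and that disjointness of the normalised guards is genuinely what reduces $\afterF{\hBigAndUhVI}{\acta}$ to $\hVJ\sV$ rather than to a conjunction carrying spurious constraints from the other branches. A secondary point requiring care is the input/output split when lifting $\eV\traS{\actaa}\eV'$ to a composite move, since the two are handled by different instrumentation rules; the hypothesis $\eV{\,\in\,}\DisEnfS$ keeps us within the identity/suppression/insertion regime, so the chosen transition is the visible identity step and not a hidden adaptation.
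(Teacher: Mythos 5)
The paper never proves this lemma: it is stated with a terminal \qed and invoked as an unproven auxiliary fact inside the proof of \Cref{thm:oenf}, so there is no reference argument to compare yours against; your proposal fills a genuine gap. On its merits, the proof is correct. The bridging device---instantiating the universally quantified system of \Cref{def:senf,def:evtenf} with the prefix system $\prf{\acta}\pV$, so that the hypothesis $\eV\traS{\actaa}\eV'$ lifts to the composite step $\eI{\eV}{\prf{\acta}\pV}\traS{\acta}\eI{\eV'}{\pV}$ via \rtit{biTrnO} or \rtit{biTrnI}---is exactly the right way to pull both soundness and eventual transparency back to $\eV$ and $\hBigAndUhVI$, and the two halves are then discharged correctly (the necessity semantics for soundness; the \afterFS computation plus eventual transparency of $\eV$ for the rest).

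Two implicit assumptions deserve to be stated. First, you assume that \Sys is closed under action prefixing, i.e.\ that $\prf{\acta}\pV$ is an admissible system state for every $\pV$; this is consistent with the paper's CCS-style system model and its use of \trcsys{\txr}, but it is a property of the model rather than a consequence of the lemma's hypotheses. Second, as you yourself flag, disjointness of the guards is load-bearing and is nowhere in the literal statement of the lemma: if a second branch $k\neq j$ also matches $\acta$, then $\afterF{\hBigAndUhVI}{\acta}$ retains the conjunct $\hV_k\sV'$, the inclusion $\hSemS{\afterF{\hVJ\sV}{\tr}}\subseteq\hSemS{\afterF{\hBigAndUhVI}{\acta\tr}}$ fails, and the lemma itself becomes false---for instance, a monitor adequately enforcing $\hAnd{\hNec{\actSJ}\hTru}{\hNec{\actSK}\hNec{\actS}\hFls}$ with overlapping guards $\actSJ,\actSK$ keeps disabling $\actS$-actions after the identity step, so its derivative cannot be eventually transparent for $\hTru$ alone. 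The statement is therefore only true under the paper's standing convention that $\hBigAndUhVI$ is a normalised \SHMLnf conjunction, which is precisely the context in which \Cref{thm:oenf} applies it; your proof makes that dependency explicit. A last, minor observation: the hypothesis $\eV{\,\in\,}\DisEnfS$ is never used in your argument, so you in fact establish the (correct) stronger statement for arbitrary transducers possessing the identity transition.
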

\begin{lem} \label[lemma]{lemma:oenf-x3-bi} 
	For every monitor $\eV{\,\in\,}\DisEnfS$, whenever $\enfdef{\eV}{\hBigAndUhVI}$ and, for some $\eV'$, $\eV\traS{\ioact{\actOutVB}{\actdot}}\eV'$ then $\enfdef{\eV'}{\hBigAndUhVI}$. \qed
\end{lem}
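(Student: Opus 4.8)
The plan is to reduce both ingredients of $\enfdef{\eV'}{\hBigAndUhVI}$---soundness (\Cref{def:senf}) and eventual transparency (\Cref{def:evtenf})---back to the corresponding properties of $\eV$, which we already have from the hypothesis $\enfdef{\eV}{\hBigAndUhVI}$. The bridge between $\eV$ and $\eV'$ is the suppression step $\eV\traS{\ioact{\actOutVB}{\actdot}}\eV'$. For an arbitrary system state $\pV{\,\in\,}\Sys$ I would fix the auxiliary system $\pVV{\,\defeq\,}\prf{\actOutV}{\pV}$, so that $\pVV\traS{\actOutV}\pV$, and apply the instrumentation rule \rtit{biDisO} (using the premise $\eV\traS{\ioact{(\actOutV)}{\actdot}}\eV'$) to obtain the single internal step
\[
\eI{\eV}{\pVV}\traS{\actt}\eI{\eV'}{\pV}.
\]
Every argument below hangs off this observation; note that $\hBigAndUhVI$ is satisfiable (e.g. by $\nil$), so soundness of $\eV$ is indeed in force.

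For soundness, since $\enfdef{\eV}{\hBigAndUhVI}$ entails $\senfdef{\eV}{\hBigAndUhVI}$, instantiating soundness at $\pVV$ gives $\eI{\eV}{\pVV}\hSatS\hBigAndUhVI$. Combining this with the internal step above and the $\actt$-closure of \SHML (\Cref{prop:tau-closure}) yields $\eI{\eV'}{\pV}\hSatS\hBigAndUhVI$, i.e. $\eI{\eV'}{\pV}{\,\in\,}\hSemS{\hBigAndUhVI}$. As $\pV$ is arbitrary, this establishes $\senfdef{\eV'}{\hBigAndUhVI}$.

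For eventual transparency, take arbitrary $\pV,\pV'$, a trace \tr and a monitor $\eV''$ with $\eI{\eV'}{\pV}\wtraS{\tr}\eI{\eV''}{\pV'}$ and $\pV'{\,\in\,}\hSemS{\afterF{\hBigAndUhVI}{\tr}}$. Prefixing this derivation with the internal step $\eI{\eV}{\pVV}\traS{\actt}\eI{\eV'}{\pV}$ and absorbing the leading \actt into the weak transition yields $\eI{\eV}{\pVV}\wtraS{\tr}\eI{\eV''}{\pV'}$, whose observable trace is still exactly \tr. Since $\enfdef{\eV}{\hBigAndUhVI}$ also entails $\evtenfdef{\eV}{\hBigAndUhVI}$, and since the hypothesis $\pV'{\,\in\,}\hSemS{\afterF{\hBigAndUhVI}{\tr}}$ is precisely what \Cref{def:evtenf} demands for the trace \tr, I can invoke eventual transparency of $\eV$ to conclude $\eI{\eV''}{\pV'}{\,\bisim\,}\pV'$. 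Together with the soundness argument this gives $\enfdef{\eV'}{\hBigAndUhVI}$.

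The proof is short because output suppression is externally silent, so $\eV$ and $\eV'$ agree on all visible behaviour once the suppressed output has been absorbed. The only point that needs care---and the one place the argument could go wrong---is the bookkeeping of the observable trace: I must verify that the \actt introduced by \rtit{biDisO} is genuinely absorbed by $\wtraS{\tr}$ and does not perturb the trace index passed to \afterFS, since eventual transparency of $\eV$ is sensitive to the exact trace. Because weak transitions quotient out $\actt$-steps, the observable trace from $\eI{\eV}{\pVV}$ coincides with \tr, so the application of \Cref{def:evtenf} for $\eV$ is justified. (The hypothesis $\eV{\,\in\,}\DisEnfS$ is not actually used here; it is carried along from the surrounding development.)
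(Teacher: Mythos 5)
The paper never actually proves \Cref{lemma:oenf-x3-bi}: it is one of four auxiliary lemmas (\Cref{lemma:oenf-x1-bi,lemma:oenf-x2-bi,lemma:oenf-x3-bi,lemma:oenf-x4-bi}) that are asserted without proof and then invoked inside the proof of \Cref{thm:oenf}, so there is no paper argument to compare yours against; what follows assesses your proposal on its own terms.

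Your proof is correct. The central device---realising the monitor step $\eV\traS{\ioact{\actOutVB}{\actdot}}\eV'$ as the silent composite step $\eI{\eV}{\prf{\actOutV}{\pV}}\traS{\actt}\eI{\eV'}{\pV}$ by applying \rtit{biDisO} to the auxiliary system $\prf{\actOutV}{\pV}$---correctly reduces both conjuncts of $\enfdef{\eV'}{\hBigAndUhVI}$ to the corresponding conjuncts of $\enfdef{\eV}{\hBigAndUhVI}$. For soundness, \Cref{prop:tau-closure} legitimately applies to the composite state, since \Cref{def:senf} already treats monitored systems $\eI{\eV}{\pV}$ as states in the semantic domain of \SHML, and your satisfiability remark (\nil satisfies any conjunction of necessities) is exactly what is needed for the soundness clause of \Cref{def:senf} to be in force. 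For eventual transparency, the absorption of the leading $\actt$ is sound because the paper defines $\wtraS{\acta}$ as $(\traS{\actt})^{\ast}\cdot\traS{\acta}$ and $\wtraS{\trE}$ as $\traSC{\actt}$, so prefixing one $\actt$ leaves the weak transition, and hence the trace passed to \afterFS, unchanged---the one failure point you rightly single out and discharge. You are also right that the hypothesis $\eV{\,\in\,}\DisEnfS$ is never needed. The single tacit assumption you should state explicitly is that the constructed states $\prf{\actOutV}{\pV}$ and \nil actually belong to \Sys, i.e.\ that the system LTS is closed under action prefixing: \Cref{def:senf,def:evtenf} quantify over an abstract \Sys, so this is not automatic, although it is fully consistent with the paper's conventions (systems are written in value-passing CCS, and the optimality development of \Cref{fig:mc} already instruments constructed systems of the form \trcsys{\txr}).
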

\begin{lem} \label[lemma]{lemma:oenf-x4-bi} 
	For every monitor $\eV{\,\in\,}\DisEnfS$, whenever $\enfdef{\eV}{\hBigAndUhVI}$ and, for some $\eV'$, $\eV\traS{\ioact{\actdot}{\actInVB}}\eV'$ then $\enfdef{\eV'}{\hBigAndUhVI}$. \qed
\end{lem}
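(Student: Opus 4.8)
The plan is to exploit the fact that an input-insertion step of the monitor manifests, at the level of the composite system, as a \emph{silent} transition. Concretely, the hypothesis $\eV\traS{\ioact{\actdot}{\actInVB}}\eV'$ is exactly the monitor premise of rule \rtit{biDisI} in \Cref{fig:mod-bi-re}: whenever it is paired with a system that can perform the input $\actInV$, the composite system performs a $\actt$-transition while the monitor advances from $\eV$ to $\eV'$. I would therefore use a \emph{witness} system of the form $\prf{\actInV}\pV$, which trivially performs $\prf{\actInV}\pV\traS{\actInV}\pV$, to bridge the behaviours of $\eI{\eV}{\cdot}$ and $\eI{\eV'}{\cdot}$, and then transfer both enforcement requirements—soundness (\Cref{def:senf}) and eventual transparency (\Cref{def:evtenf})—from $\eV$ to $\eV'$. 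This is the precise dual of \Cref{lemma:oenf-x3-bi}, with the output-suppression rule \rtit{biDisO} replaced by the input-insertion rule \rtit{biDisI}.

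For soundness, I would fix an arbitrary state $\pV$ and set $\pVV\defeq\prf{\actInV}\pV$, so that $\pVV\traS{\actInV}\pV$. Combining this with $\eV\traS{\ioact{\actdot}{\actInVB}}\eV'$ through rule \rtit{biDisI} yields $\eI{\eV}{\pVV}\traS{\actt}\eI{\eV'}{\pV}$. Since $\enfdef{\eV}{\hBigAndUhVI}$ entails soundness of $\eV$ (which presupposes that $\hBigAndUhVI$ is satisfiable), we have $\eI{\eV}{\pVV}\in\hSemS{\hBigAndUhVI}$; the $\actt$-closure of $\SHML$, \Cref{prop:tau-closure}, then gives $\eI{\eV'}{\pV}\in\hSemS{\hBigAndUhVI}$. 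As $\pV$ was arbitrary, $\eV'$ is sound for $\hBigAndUhVI$.

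For eventual transparency, I would take arbitrary $\pV,\pV'$, a trace $\tr$, and a monitor $\eVV$ with $\eI{\eV'}{\pV}\wtraS{\tr}\eI{\eVV}{\pV'}$ and $\pV'\in\hSemS{\afterF{\hBigAndUhVI}{\tr}}$, and reuse the witness $\pVV\defeq\prf{\actInV}\pV$. As above, $\eI{\eV}{\pVV}\traS{\actt}\eI{\eV'}{\pV}$, and prepending this silent step to the assumed run gives $\eI{\eV}{\pVV}\wtraS{\tr}\eI{\eVV}{\pV'}$ with the \emph{same} visible trace $\tr$. Because $\afterFS$ is defined on visible traces only (with $\afterF{\hV}{\actt}=\hV$), the side condition $\pV'\in\hSemS{\afterF{\hBigAndUhVI}{\tr}}$ is untouched and matches exactly the premise of eventual transparency for $\eV$; that hypothesis then yields $\eI{\eVV}{\pV'}\bisim\pV'$, as required.

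Putting the two parts together establishes $\enfdef{\eV'}{\hBigAndUhVI}$ by \Cref{def:enforcement}. I expect the argument to be essentially routine; the only points needing care are $(i)$ checking that the witness $\prf{\actInV}\pV$ is a legitimate system state whose initial transition enables rule \rtit{biDisI}, and $(ii)$ verifying that prepending the silent transition leaves both the visible trace $\tr$ and hence $\afterF{\hBigAndUhVI}{\tr}$ unchanged, so that the eventual-transparency hypothesis for $\eV$ applies verbatim. These are exactly the dual bookkeeping obligations discharged in the proof of \Cref{lemma:oenf-x3-bi}.
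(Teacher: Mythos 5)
Your proof is correct, but there is nothing in the paper to compare it against: \Cref{lemma:oenf-x4-bi}, like its companions \Cref{lemma:oenf-x1-bi,lemma:oenf-x2-bi,lemma:oenf-x3-bi}, is stated with its proof omitted (it is only ever invoked inside the proof of \Cref{thm:oenf}), so your argument supplies the missing justification rather than an alternative to an existing one. The structure is the natural one: the witness $\prf{\actInV}{\pV}$ together with rule \rtit{biDisI} gives $\eI{\eV}{\prf{\actInV}{\pV}}\traS{\actt}\eI{\eV'}{\pV}$; soundness of $\eV'$ then follows from soundness of $\eV$ plus \actt-closure (\Cref{prop:tau-closure}), which is legitimately applied to composite states since \Cref{def:senf} already interprets \SHML formulas over the instrumented LTS; and eventual transparency transfers because prepending the silent step changes neither the visible trace $\tr$ nor, since $\afterF{\hV}{\actt}=\hV$, the formula $\afterF{\hBigAndUhVI}{\tr}$, so the hypothesis of \Cref{def:evtenf} for $\eV$ applies verbatim.

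One remark: your caveat $(i)$ is not mere bookkeeping, and you are right to flag it. The lemma holds only because the system model is rich enough to contain, for every state $\pV$, some state with an $\actInV$-transition into $\pV$. In an arbitrary LTS lacking such states, the insertion transition of $\eV$ would be inert under \rtit{biDisI}, so $\eV$ could be adequate while $\eV'$ behaves arbitrarily badly (say, suppressing every output and hence violating eventual transparency), falsifying the statement. Your appeal to the prefix construct of the paper's value-passing CCS syntax (an input prefix with a fresh binder, so that the witness indeed has the transition $\traS{\actInV}$ to $\pV$) is exactly what licenses the step, and assuming such states exist is consistent with the paper's own practice of instrumenting monitors with the canonical systems \trcsys{\txr} in the definition of \mcS and in the proof of \Cref{thm:oenf}.
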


\begin{thm}[Weak Optimal Enforcement] \label[theorem]{thm:oenf} For every system 
  % state 
  $\pV{\,\in\,}\SysPrtSet$, explicit\linebreak[4]trace \txr and monitor \eV, if $\trp{\eV}{\,\subseteq\,}\trp{\eSembiP{\hV}}$, $\enfdef{\eV}{\hV}$ and $\pV\traS{\txr}$ implies\linebreak[4]$\mc{\eSembiP{\hV}}{\txr}{\,\leq\,}\mc{\eV}{\txr}$. 
  % \qed
\end{thm}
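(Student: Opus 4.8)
The plan is to establish the statement in the equivalent inductive form: for every $\hV{\,\in\,}\SHMLnf$, every $\pV{\,\in\,}\SysPrtSet$, every explicit trace \txr with $\pV\traS{\txr}$, and every disabling monitor \eV with $\enfdef{\eV}{\hV}$, we have $\mc{\eSembiP{\hV}}{\txr}{\,\leq\,}\mc{\eV}{\txr}$. Note that the hypothesis $\trp{\eV}{\,\subseteq\,}\trp{\eSembiP{\hV}}{\,=\,}\set{\DIS}$ is exactly the condition $\eV{\,\in\,}\DisEnfS$, since the synthesis of \Cref{def:synthesis-bi} only ever produces suppressions and blocking insertions. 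By \Cref{lemma:oenf-x1-bi} both sides are well-defined natural numbers, so the inequality is meaningful. I would argue by induction on $\len{\txr}$, with an inner case analysis on the normal form of \hV, carrying the invariant that, after each consumed prefix, both $\eSembiP{\cdot}$ of the residual formula \emph{and} the residual state of \eV still adequately enforce that residual formula.

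First I would dispatch the base case $\txr{\,=\,}\txrE$, where the synthesised monitor performs no transformation, so $\mc{\eSembiP{\hV}}{\txrE}{\,=\,}0{\,\leq\,}\mc{\eV}{\txrE}$ trivially. For the inductive step $\txr{\,=\,}\actu\txr'$ I would split on \actu. When $\actu{\,=\,}\actt$ both composite systems step by \rtit{biAsy} (an identity step contributing to neither count), the formula is preserved by the $\actt$-closure \Cref{prop:tau-closure}, and the inductive hypothesis applies to $\txr'$. When \actu is visible I would use the $\SHMLnf$ shape $\hBigAndU{i\in\IndSet}\hNec{\actSN{\pateI}{\bVI}}\hVI$, reducing $\hMaxXF$ to this form by unfolding exactly as in \Cref{lemma:transparency-bi}, treating $\hTru$ as the empty conjunction, and observing that $\hFls$ and $\hVarX$ yield no valid top-level monitor; then I would ask whether \actu matches a \emph{violating} branch $\hNec{\actSN{\pateI}{\bVI}}\hFls$.

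The crux is the per-action comparison. If \actu matches no branch or a non-violating one, the synthesised monitor relays it via an identity transformation (or via the default monitor for unspecified inputs), adding $0$ to its count for this step; since counts are non-negative this step can only help, and I would close the case with \Cref{lemma:evt-x} (so $\eSembiP{\hV}$ evolves into $\eSembiP{\afterF{\hV}{\actu}}$), \Cref{lemma:oenf-x2-bi} (so \eV's residual still enforces the residual formula), and the inductive hypothesis. If $\actu{\,=\,}\actOutV$ matches $\hNec{\actSN{\patOutV}{\bVI}}\hFls$, the synthesised monitor suppresses it for cost exactly $1$ (the first clause of \Cref{fig:mc}, since \actu becomes \actt); crucially, \emph{soundness} forces any adequate \eV to prevent this output too, and because $\eV{\,\in\,}\DisEnfS$ can neither adapt nor enable, its only means is likewise a suppression, costing at least $1$. \Cref{lemma:oenf-x3-bi} then preserves \eV's enforcement and I recurse. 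The delicate case is the violating \emph{input} $\actu{\,=\,}\actInV$: here $\eSembiP{\hV}$ blocks the input and restores progress by a single default insertion (one \actt-modification, again the first clause of \Cref{fig:mc}), and this is precisely where $\pV{\,\in\,}\SysPrtSet$ is indispensable, since it guarantees the relevant port lies in \prtSet, which is exactly the set covered by the synthesised insertion branches; consequently the synthesised monitor never stalls and never needs more than one modification (contrast \Cref{ex:oenf-fail}). Any adequate disabling \eV must also block this input, and whether it unblocks by an insertion ($\geq 1$) or stalls and triggers the final ``stuck'' clause of \Cref{fig:mc} ($\len{\txr}{\,\geq\,}1$), its cost is at least $1$; \Cref{lemma:oenf-x4-bi} then preserves \eV's enforcement and the induction proceeds.

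The main obstacle I anticipate is this input case. I must rule out that some competing $\eV{\,\in\,}\DisEnfS$ prevents the violation with strictly fewer modifications than $\eSembiP{\hV}$, and simultaneously confirm that the synthesised monitor's block-then-insert pattern contributes exactly one to \mcS rather than firing the ``stuck'' clause that would charge the full remaining length $\len{\txr}$. Both points hinge on the disabling-only constraint $\trp{\eV}{\,\subseteq\,}\set{\DIS}$ and on $\pV{\,\in\,}\SysPrtSet$ ensuring the offending port is insertable; carefully matching which clause of \Cref{fig:mc} fires against the instrumentation rules of \Cref{fig:mod-bi-re} is the most error-prone bookkeeping in the argument.
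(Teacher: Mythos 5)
Your proposal is correct and follows essentially the same route as the paper's proof: the same reduction to disabling monitors via $\trpBI{\cdot}$, the same reliance on \Cref{lemma:oenf-x1-bi,lemma:oenf-x2-bi,lemma:oenf-x3-bi,lemma:oenf-x4-bi}, the same per-action case split (identity relay at cost $0$ forced by eventual transparency, violating output suppressed at cost $1$ forced by soundness, violating input blocked-and-inserted at cost $1$ with $\pV{\,\in\,}\SysPrtSet$ ruling out stalling of the synthesised monitor), and the same dismissal of the enabling clause of \mcS. The only difference is cosmetic: you induct on $\len{\txr}$ with an inner case analysis, whereas the paper phrases it as rule induction on the clauses of $\mc{\eSembiP{\hV}}{\txr}$, which unfolds to the identical case structure.
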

\begin{proof} Since, from \Cref{lemma:oenf-x1-bi}, we know that for every $\eV{\,\in\,}\DisEnfS$, there exists some $\nV$ such that $\mc{\eV}{\txr}{\,=\,}\nV$, we can prove that if $\enfdef{\eV}{\hV}$, $\pV{\traS{\txr}}$ and $\mc{\eSembiP{\hV}}{\txr}{\,=\,}\nV$ then $\nV{\,\leq\,}\mc{\eV}{\txr}$. 
	We proceed by rule induction on $\mc{\eSembiP{\hV}}{\txr}$.

	\begin{Case}[\mc{\eSembiP{\hV}}{\txr} \whent \txr{\,=\,}\actu\txr' \andt \eBI{\eSembiP{\hV}}{\trcsys{\actu\txr'}}\traS{\actu}\eBI{\eV'_{\hV}}{\trcsys{\txr'}}] Assume that 
		\begin{gather}
			\mc{\eSembiP{\hV}}{\actu\txr'}=\mc{\eV'_{\hV}}{\txr'}=\nV \label{proof:opt-enf-bi--a-1}
		\end{gather}
		which implies that
		\begin{gather}
			\eBI{\eSembiP{\hV}}{\trcsys{\actu\txr'}}\traS{\actu}\eBI{\eV'_{\hV}}{\trcsys{\txr'}} \label{proof:opt-enf-bi--a-2}
		\end{gather}
		and also assume that 
		\begin{gather}
			\enfdef{\eV}{\hV} \label{proof:opt-enf-bi--a-3} 
		\end{gather}
		and that $\pV\traS{\actu\txr'}$. By the rules in our model we can infer that the reduction in \eqref{proof:opt-enf-bi--a-2} can result from rule \rtit{iAsy} when $\actu{\,=\,}\actt$,  \rtit{iDef} and \rtit{iTrnO} when $\actu{\,=\,}\actOutV$, or \rtit{iTrnI} when $\actu{\,=\,}\actInV$. We consider each case individually.
		\begin{itemize}
			\item \rtit{iAsy}: By rule \rtit{iAsy} from \eqref{proof:opt-enf-bi--a-2} we know that $\actu{\,=\,}\actt$ and that
			\begin{gather}
				\eV'_{\hV}=\eSembiP{\hV}.  \label{proof:opt-enf-bi--a-5}
			\end{gather}
			Since from \eqref{proof:opt-enf-bi--a-3} we know that \eV is sound and eventual transparent, we can thus deduce that \eV does not hinder internal \actt-actions from occurring and so the composite system $\eBI{\eSembiP{\hV}}{\trcsys{\actt\txr'}}$ can always transition over \actt via rule \rtit{iAsy}, that is,
			\begin{gather}
				\eBI{\eV}{\trcsys{\actt\txr'}}\traS{\actt}\eBI{\eV}{\trcsys{\txr'}}.   \label{proof:opt-enf-bi--a-6}
			\end{gather}
			Hence, by \eqref{proof:opt-enf-bi--a-1}, \eqref{proof:opt-enf-bi--a-3} and since $\pV\traS{\actt\txr'}$ entails $\pV\traS{\actt}\pV'$ and $\pV'\traS{\txr'}$ we can apply the \emph{inductive hypothesis} and deduce that $\nV{\,\leq\,}\mc{\eV}{\txr'}$ so that by \eqref{proof:opt-enf-bi--a-6} and the definition of \mcS, we conclude that $\nV{\,\leq\,}\mc{\eV}{\actt\txr'}$ as required.			
			\item \rtit{iDef}: From \eqref{proof:opt-enf-bi--a-2} and rule \rtit{iDef} we know that $\actu{\,=\,}\actOutV$, $\eSembiP{\hV}\ntraS{\actOutV}$ and that $\eV'_{\hV}=\eIden$. Since \eIden does not modify actions, we can deduce that $\mc{\eV_{\hV}'}{\txr'}{\,=\,}0$ and so by the definition of \mcS we know that $\mc{\eSembiP{\hV}}{\actOutVB\txr'}{\,=\,}0$ as well. This means that we cannot find a monitor that performs fewer transformations, and so we conclude that $0{\,\leq\,}\mc{\eV}{\actOutVB\txr'}$ as required.		
			\item \rtit{iTrnI}: From \eqref{proof:opt-enf-bi--a-2} and rule \rtit{iTrnI} we know that $\actu{\,=\,}\actInV$ and that 
			\begin{gather}
				\eSembiP{\hV}\traS{\ioact{\actInVB}{\actInVB}}\eV'_{\hV}.  \label{proof:opt-enf-bi--a-7}
			\end{gather}
			We now inspect the cases for \hV.
			\begin{itemize}
				\item $\hV{\,\in\,}\set{\hFls,\hTru,\hVarX}$: The cases for \hFls and \hVarX do not apply since \eSembiP{\hFls} and \eSembiP{\hVarX} do not yield a valid monitor, while the case when $\hV{\,=\,}\hTru$ gets trivially satisfied since $\eSembiP{\hTru}{\,=\,}\eIden$ and $\mc{\eIden}{\actInVB\txr'}{\,=\,}0$.
				\item $\hV{\,=\,}\hBigAndUhVI$ where $\bigdistinct{i\in\IndSet}\actSNVI$: Since $\hV=\hBigAndUhVI$, by the definition of \eSembiS{-} we have that 
				\begin{align}
					\begin{array}{@{\hspace{-1mm}}r@{\hspace{-40mm}}cl}
						\eSembiP{\hV_{\hAndS}=\hBigAndUhVI} \\
						&=&\rec{\rVV}
						{
							\ch{
								\begin{xbrackets}{c}
									\chBigI
									\begin{xbrace}{lc}
									\dis{\pateI}{\bVI}{\rVV}{\prtSet}			& \ift \hVI=\hFls \\
									\prf{\actSIDs{\pateI}{\bVI}}{\eSembiP{\hVI}}		& \otherwiset
									\end{xbrace}
								\end{xbrackets}
							}{\defmon{\hBigAndDhVI}}
						}\\
						&=&
						{
							\ch{
								\begin{xbrackets}{c}
									\chBigI
									\begin{xbrace}{lc}
									\dis{\pateI}{\bVI}{\eSembiP{\hV_{\hAndS}}}{\prtSet} & \ift \hVI=\hFls \\
									\prf{\actSIDs{\pateI}{\bVI}}{\eSembiP{\hVI}}  & \otherwiset
									\end{xbrace}
								\end{xbrackets}
							}{\defmon{\hBigAndDhVI}}
						}
					\end{array}  \label{proof:opt-enf-bi--a-8}
				\end{align}
				Since normalized conjunctions are disjoint, \ie $\bigdistinct{i\in\IndSet}\actSNVI$, from \eqref{proof:opt-enf-bi--a-8} we can infer that the identity reduction in \eqref{proof:opt-enf-bi--a-7} can only happen when \actInV matches an identity branch, $\prf{\actSIDs{\pateJ}{\bVJ}}{\eSembiP{\hVJ}}$ (for some $j{\,\in\,}\IndSet$), and so we have that
				\begin{gather}
					\mtchS{\actSNVJ}{\actInV}=\sV. \label{proof:opt-enf-bi--a-9}
				\end{gather}
				Hence, knowing \eqref{proof:opt-enf-bi--a-7} and \eqref{proof:opt-enf-bi--a-9}, by rule \rtit{eTrn} we know that $\eV'_{\hV}{\,=\,}\eSembiP{\hVJ\sV}$ and so by \eqref{proof:opt-enf-bi--a-1} we can infer that 
				\begin{gather}
					\mc{\eV'_{\hV}}{\txr'}=\nV \; \wheret \eV'_{\hV}{\,=\,}\eSembiP{\hVJ\sV}. \label{proof:opt-enf-bi--a-10}
				\end{gather}
				Since from \eqref{proof:opt-enf-bi--a-8} we also know that the monitor branch $\prf{\actSIDs{\pateJ}{\bVJ}}{\eSembiP{\hVJ}}$ is derived from a non-violating modal necessity, \ie $\hNec{\actSNVJ}\hVJ$ where $\hVJ{\,\neq\,}\hFls$, we can infer that \actInV is not a violating action and so it should not be modified by any other monitor \eV, as otherwise it would infringe the eventual transparency constraint of assumption \eqref{proof:opt-enf-bi--a-3}. %for every $\pV{\,\in\,}\hSemS{\hV}$. 
				Therefore, we can deduce that 
				\begin{gather}
					\eV\traS{\ioact{\actInVB}{\actInVB}}\eV' \quad (\text{for some }\eV') \label{proof:opt-enf-bi--a-11} 
				\end{gather}
				and subsequently, knowing \eqref{proof:opt-enf-bi--a-11} and that $\txr{\,=\,}\actInVB\txr'$ and also that $\trcsys{\actInVB\txr'}{\traS{\actInV}}\trcsys{\txr'}$, by rule \rtit{iTrnI} and the definition of \mcS we infer that 
				\begin{gather}
					\mc{\eV}{\actInVB\txr'}=\mc{\eV'}{\txr'}. \label{proof:opt-enf-bi--a-12}
				\end{gather}
				As by \eqref{proof:opt-enf-bi--a-3}, \eqref{proof:opt-enf-bi--a-7}, \eqref{proof:opt-enf-bi--a-9} and \Cref{lemma:oenf-x2-bi} we know that $\enfdef{\eV'}{\hVJ\sV}$, by \eqref{proof:opt-enf-bi--a-10} and since $\pV\traS{\actInVB\txr'}$ entails that $\pV\traS{\actInV}\pV'$ and $\pV'\traS{\txr'}$, we can apply the \emph{inductive hypothesis} and deduce that $\nV{\,\leq\,}\mc{\eV'}{\txr'}$ and so from \eqref{proof:opt-enf-bi--a-12} we conclude that $\nV{\,\leq\,}\mc{\eV}{\actInVB\txr'}$ as required.
				\item $\hV{\,=\,}\hMaxX{\hV'}$ and $\hVarX{\,\in\,}\fv{\hV'}$: Since $\hV{\,=\,}\hMaxX{\hV'}$, by the syntactic restrictions of \SHMLnf we infer that $\hV'$ cannot be \hFls or \hTru since $\hVarX{\,\notin\,}\fv{\hV'}$ otherwise, and it cannot be \hVarX since every logical variable must be guarded. Hence, $\hV'$ must be of a specific form, \ie $\hMax{\hVarY_1\ldots\hVarY_n}{\hBigAndUhVI}$, and so by unfolding every fixpoint in $\hMaxX{\hV'}$ we reduce our formula to $\hV\defeq\hBigAndUhVI\subb{\subE{\hMaxX{\hV'}}{\hVarX},\ldots}$. We thus omit the remainder of this proof as it becomes identical to that of the subcase when $\hV{\,=\,}\hBigAndDhVI$.
			\end{itemize}
		\item \rtit{iTrnO}: We elide the proof for this case as it is very similar to that of \rtit{iTrnI}.
		\end{itemize}
	\end{Case}
	
	\begin{Case}[\mc{\eSembiP{\hV}}{\txr} \whent \txr{=}\actu\txr' \andt \eBI{\eSembiP{\hV}}{\trcsys{\actu\txr'}}{\traS{\actu'}}\eBI{\eV'_{\hV}}{\trcsys{\txr'}} \andt \actu'{\neq}\actu] \\
		Assume that 
		\begin{gather}
			\mc{\eSembiP{\hV}}{\actu\txr'}=1+\nVV \label{proof:opt-enf-bi--b-1} \\
			\wheret \nVV=\mc{\eV'_{\hV}}{\txr'} \label{proof:opt-enf-bi--b-2}
		\end{gather}
		which implies that 
		\begin{gather}	
			\eBI{\eSembiP{\hV}}{\trcsys{\actu\txr'}}\traS{\actu'}\eBI{\eV'_{\hV}}{\trcsys{\txr'}} \; \wheret\actu'\neq\actu \label{proof:opt-enf-bi--b-3}
		\end{gather}
		and also assume that 
		\begin{gather}
			\enfdef{\eV}{\hV} \label{proof:opt-enf-bi--b-4} 
		\end{gather}
		and that $\pV\traS{\actu\txr'}$. 
		Since we only consider action disabling monitors, the $\actu'$ reduction of \eqref{proof:opt-enf-bi--b-3} can only be achieved via rules \rtit{iDisO} or \rtit{iDisI}. We thus explore both cases. 
		\begin{itemize}
			\item \rtit{iDisI}: From \eqref{proof:opt-enf-bi--b-3} and by rule \rtit{iDisI} we have that $\actu=\actInV$ and $\actu'=\actt$ and that
			\begin{gather}
				\eSembiP{\hV}\traS{\ioact{\actdot}{\actInV}}\eV'_{\hV}. \label{proof:opt-enf-bi--b-6}
			\end{gather}
			We now inspect the cases for \hV.
			\begin{itemize}
				\item $\hV{\,\in\,}\set{\hFls,\hTru,\hVarX}$: These cases do not apply since \eSembiP{\hFls} and \eSembiP{\hVarX} do not yield a valid monitor, while $\eSembiP{\hTru}{\,=\,}\eIden$ does not perform the reduction in \eqref{proof:opt-enf-bi--b-6}.
				\item $\hV{\,=\,}\hBigAndUhVI$ where $\bigdistinct{i\in\IndSet}\actSNVI$: Since $\hV=\hBigAndUhVI$, by the definition of \eSembiS{-} we have that 
				\begin{align}
					\begin{array}{@{\hspace{-1mm}}r@{\hspace{-40mm}}cl}
					\eSembiP{\hV_{\hAndS}=\hBigAndUhVI} \\
					&=&\rec{\rVV}
					{
						\ch{
							\begin{xbrackets}{c}
							\chBigI
							\begin{xbrace}{lc}
							\dis{\pateI}{\bVI}{\rVV}{\prtSet}			& \ift \hVI=\hFls \\
							\prf{\actSIDs{\pateI}{\bVI}}{\eSembiP{\hVI}}		& \otherwiset
							\end{xbrace}
							\end{xbrackets}
						}{\defmon{\hBigAndDhVI}}
					}\\
					&=&
					{
						\ch{
							\begin{xbrackets}{c}
							\chBigI
							\begin{xbrace}{lc}
							\dis{\pateI}{\bVI}{\eSembiP{\hV_{\hAndS}}}{\prtSet} & \ift \hVI=\hFls \\
							\prf{\actSIDs{\pateI}{\bVI}}{\eSembiP{\hVI}}  & \otherwiset
							\end{xbrace}
							\end{xbrackets}
						}{\defmon{\hBigAndDhVI}}
					}
					\end{array}   \label{proof:opt-enf-bi--b-7}
				\end{align}
				Since normalized conjunctions are disjoint \ie $\bigdistinct{i\in\IndSet}\actSNVI$, and since $\pV\traS{\actu\txr'}$ where $\actu=\actInVB$, by the definition of \disS, from \eqref{proof:opt-enf-bi--b-7} we deduce that the reduction in \eqref{proof:opt-enf-bi--b-6} can only be performed by an insertion branch of the form, $\prf{\actSTI{\bVJ\Sub{\pidV}{\dvV}}{\actInV}}{\eSembiP{\hBigAndUhVI}}$ that can only be derived from a violating modal necessity $\hNec{\actSNVJ}\hFls$ (for some $j{\,\in\,}\IndSet$). Hence, we can infer that
				\begin{gather}
					\eV'_{\hV}=\eSembiP{\hBigAndUhVI} \label{proof:opt-enf-bi--b-8} \\
					\pateJ=\patInV \andt \ceval{\bVJ\Sub{\pidV}{\dvV}}{\boolT}. \label{proof:opt-enf-bi--b-9}
				\end{gather}
				Knowing \eqref{proof:opt-enf-bi--b-9} and that $\hNec{\actSNVJ}\hFls$ we can deduce that any input on port \pidV is erroneous and so for the soundness constraint of assumption \eqref{proof:opt-enf-bi--b-4} to hold, any other monitor \eV is obliged to somehow \emph{block} this input port. As we consider action disabling monitors, \ie $\eV{\,\in\,}\DisEnfS$, we can infer that monitor \eV may block this input in two ways, namely, either by not reacting to the input action, \ie $\eV\ntraS{\actInV}$, or by additionally inserting a default value \vV, \ie $\eV\traS{\ioact{\actdot}{\actInVB}}\eV'$. We explore both cases.
				\begin{itemize}
					\item $\eV\ntraS{\actInV}$: Since $\trcsys{\actInVB\txr'}\traS{\actInV}\trcsys{\txr'}$ and since $\eV\ntraS{\actInV}$, by the rules in our model we know that for every action $\actu'$, $\eBI{\eV}{\trcsys{\actInVB\txr'}}{\ntraS{\actu'}}$ and so by the definition of \mcS we have that $\mc{\eV}{\actInVB\txr'}{\,=\,}\len{\actInVB\txr'}$ meaning that by blocking inputs on \pidV, \eV also blocks (and thus modifies) every subsequent action of trace $\txr'$. Hence, this suffices to deduce that \emph{at worst} $1+\nVV$ is equal to $\len{\actInVB\txr'}$, that is $1+\nVV{\,\leq\,}\len{\actInVB\txr'}$, and so from \eqref{proof:opt-enf-bi--b-1} we can deduce that $1+\nVV{\,\leq\,}\mc{\eSembiP{\hV}}{\actu\txr'}$ as required.
					\item $\eV\traS{\ioact{\actdot}{\actInVB}}\eV'$: Since $\trcsys{\actInVB\txr'}\traS{\actInV}\trcsys{\txr'}$ and since $\eV\traS{\ioact{\actdot}{\actInVB}}\eV'$, by rule \rtit{iDisI} we know that $\eBI{\eV}{\trcsys{\actInVB\txr'}}\traS{\actt}\eBI{\eV}{\trcsys{\txr'}}$ and so by the definition of \mcS we have that 
					\begin{gather}
						\mc{\eV}{\actInVB\txr'}=1+\mc{\eV'}{\txr'}. \label{proof:opt-enf-bi--b-11}
					\end{gather}
					As by \eqref{proof:opt-enf-bi--b-4}, \eqref{proof:opt-enf-bi--b-6} and \Cref{lemma:oenf-x4-bi} we infer that $\enfdef{\eV'}{\hBigAndUhVI}$, by \eqref{proof:opt-enf-bi--b-2}, \eqref{proof:opt-enf-bi--b-11} and since $\pV{\,\traS{\actInVB\txr'}}$ entails that $\pV{\,\traS{\actInVB}\,}\pV'$ and $\pV'{\,\traS{\txr'}}$, we can apply the \emph{inductive hypothesis} and deduce that $\nVV{\,\leq\,}\mc{\eV'}{\txr'}$ and so from \eqref{proof:opt-enf-bi--b-1}, \eqref{proof:opt-enf-bi--b-2} and \eqref{proof:opt-enf-bi--b-11} we conclude that $1+\nVV{\,\leq\,}\mc{\eV}{\actInVB\txr'}$ as required.					
				\end{itemize}			
				\item $\hV{\,=\,}\hMaxX{\hV'}$ and $\hVarX{\,\in\,}\fv{\hV'}$: We omit showing this proof as it is a special case of when $\hV{\,=\,}\hBigAndUhVI$.
			\end{itemize}
			\item \rtit{iDisO}: We omit showing the proof for this subcase as it is very similar to that of case \rtit{iDisI}. Apart from the obvious differences (\eg \actOutV instead of \actInV), \Cref{lemma:oenf-x3-bi} is used instead of \Cref{lemma:oenf-x4-bi}.
		\end{itemize}
	\end{Case}

	\begin{Case}[\mc{\eSembiP{\hV}}{\txr} \whent \txr{\,\in\,}\set{\actu\txr',\txrE} \andt \eBI{\eSembiP{\hV}}{\trcsys{\actu\txr'}}\ntraS{\actu'}]
		Assume that 
		\begin{gather}
			\mc{\eSembiP{\hV}}{\txr}=\len{\txr} \quad (\text{where } \txr{\,\in\,}\set{\actu\txr',\txrE})  \label{proof:opt-enf-bi--c-1} \\
			\eBI{\eSembiP{\hV}}{\trcsys{\actu\txr'}}\ntraS{\actu'}  \label{proof:opt-enf-bi--c-2} \\
			\enfdef{\eV}{\hV} \label{proof:opt-enf-bi--c-2.1} 
		\end{gather}
		Since $\txr{\,\in\,}\set{\actu\txr',\txrE}$ we consider both cases individually.
		\begin{itemize}
			\item $\txr=\txrE:$ This case holds trivially since by \eqref{proof:opt-enf-bi--c-1}, \eqref{proof:opt-enf-bi--c-2} and the definition of \mcS, $\mc{\eSembiP{\hV}}{\txrE}=\len{\txrE}=0$.
			\item $\txr=\actu\txr':$ Since $\txr=\actu\txr'$ we can immediately exclude the cases when $\actu{\,\in\,}\set{\actt,\actOutV}$ since rules \rtit{iAsy} and \rtit{iDef} make it impossible for \eqref{proof:opt-enf-bi--c-2} to be attained in such cases. Particularly, rule \rtit{iAsy} always permits the \sus to independently perform an internal \actt-move, while rule \rtit{iDef} allows the monitor to default to \eIden whenever the system performs an unspecified output \actOutV. However, in the case of inputs, \actInV, the monitor may completely block inputs on a port \pidV and as a consequence cause the entire composite system $\eBI{\eSembiP{\hV}}{\trcsys{\actu\txr'}}$ to block, thereby making \eqref{proof:opt-enf-bi--c-2} a possible scenario. We thus inspect the cases for \hV vis-a-vis $\actu{\,=\,}\actInV$.
			\begin{itemize}
				\item $\hV{\,\in\,}\set{\hFls,\hTru,\hVarX}$: These cases do not apply since $\eSembiP{\hFls}$ and $\eSembiP{\hVarX}$ do not yield a valid monitor and since $\eSembiP{\hTru}{\,=\,}\eIden$ is incapable of attaining \eqref{proof:opt-enf-bi--c-2}.
				\item $\hV{\,=\,}\hBigAndDhVI$ where $\bigdistinct{i\in\IndSet}\actSNVI$: Since $\hV{\,=\,}\hBigAndDhVI$, by the definition of $\eSembiS{-}$ we have that 				
				\begin{align}
					\begin{array}{@{\hspace{-1mm}}r@{\hspace{-40mm}}cl}
					\eSembiP{\hV_{\hAndS}=\hBigAndUhVI} \\
					&=&\rec{\rVV}
					{
						\ch{
							\begin{xbrackets}{c}
							\chBigI
							\begin{xbrace}{lc}
							\dis{\pateI}{\bVI}{\rVV}{\prtSet}			& \ift \hVI=\hFls \\
							\prf{\actSIDs{\pateI}{\bVI}}{\eSembiP{\hVI}}		& \otherwiset
							\end{xbrace}
							\end{xbrackets}
						}{\defmon{\hBigAndDhVI}}
					}\\
					&=&
					{
						\ch{
							\begin{xbrackets}{c}
							\chBigI
							\begin{xbrace}{lc}
							\dis{\pateI}{\bVI}{\eSembiP{\hV_{\hAndS}}}{\prtSet} & \ift \hVI=\hFls \\
							\prf{\actSIDs{\pateI}{\bVI}}{\eSembiP{\hVI}}  & \otherwiset
							\end{xbrace}
							\end{xbrackets}
						}{\defmon{\hBigAndDhVI}}
					}
					\end{array}     \label{proof:opt-enf-bi--c-3}
				\end{align}
				Since $\actu=\actInV$, from \eqref{proof:opt-enf-bi--c-3} and by the definitions of \disS and \defmonS we can infer that the only case when \eqref{proof:opt-enf-bi--c-2} is possible is when the inputs on port \pidV satisfy a violating modal necessity, that is, there exists some $j{\,\in\,}\IndSet$ such that $\hNec{\actSNVJ}\hFls$ and for every $\vV{\,\in\,}\Val$, $\mtch{\pateJ}{\actInV}{\,=\,}\sV$ and $\ceval{\bVJ\sV}{\boolT}$. At the same time, the monitor is also \emph{unaware} of the port on which the erroneous input can be made, \ie
				%\begin{gather}
					$\pidV{\,\notin\,}\prtSet$. % \label{proof:opt-enf-bi--c-5}
				%\end{gather}
				%
				Hence, this case does not apply since we limit ourselves to \SysPrtSet, \ie states of system that can only input values via the ports specified in \prtSet.
				%
				%Although the monitor still blocks the invalid inputs, it does not define an insertion branch that internally provides the required input that allows the composite system to move on to the next state.
				%
				%Since from \eqref{proof:opt-enf-bi--c-4} we know that any input on port $\pidV$, $\actInV$, is erroneous, we can deduce that any other sound monitor, such as \eV (assumption \eqref{proof:opt-enf-bi--c-2.1}), must also block these invalid inputs. As we assume that $\eV{\,\in\,}\DisEnfS$ where $\DisEnfS{\,\defeq\,}\!\Setdef{\eVV}{\text{if }\trpBI{\eVV}{\,\subseteq\,}\set{\DIS} \text{ and } \si{\eVV}{\,\subseteq\,}\prtSet'}$ and $\prtSet'{\,\subseteq\,}\prtSet$, we can infer that $\pidVV{\,\notin\,}\si{\eV}$, and so by the definition of $\siS$ we know that \eV also fails to define an insertion branch that unblocks the \sus awaiting the input on the blocked port. This means that $\forall\actu'\cdot\eBI{\eV}{\trcsys{\actInVB\txr'}}\ntraS{\actu'}$ and so by \eqref{proof:opt-enf-bi--c-1} and the definition of \mcS we have that $\mc{\eV}{\actInVB\txr'}{\,=\,}\len{\actInVB\txr'}=\mc{\eSembiP{\hV}}{\actInVB\txr'}$ as required.
				%
				\item $\hV{\,=\,}\hMaxX{\hV'}$: As argued in previous cases, this 
        % subcase 
        is a special case of $\hV{\,=\,}\hBigAndUhVI$ and so we omit this part of the proof.
			\end{itemize}
		\end{itemize}
	\end{Case}
	\begin{Case}[\mc{\eSembiP{\hV}}{\txr} \whent \txr{\,\in\,}\set{\actu\txr',\txrE} \andt \eBI{\eSembiP{\hV}}{\trcsys{\txr}}\traS{\actu'}\eBI{\eV'_{\hV}}{\trcsys{\txr}}]
		As we only consider action disabling monitors, this case does not apply since the transition $\eBI{\eSembiP{\hV}}{\trcsys{\txr}}\traS{\actu'}\eBI{\eV'_{\hV}}{\trcsys{\txr}}$ can only be achieved via action enabling and rules \rtit{iEnO} and \rtit{iEnI}.\vspace*{-\baselineskip}
	\end{Case} 
  % \vspace{-4mm}
\end{proof}

%
%This result therefore ensures that one cannot find a \emph{less intrusive} action disabling (finite) monitor that has the same information portrayed by \prtSet and that performs fewer modifications to the system's behaviour at runtime than the ones we synthesise.
%
%However, it does not exclude that a monitor with different, or additional, transformation capabilities and port information might be more optimal when enforcing \hV, yet it ensures that one cannot find a more optimal monitor with the same \prtSet and transformation capabilities as \eSembiP{\eV}.
%
%

% % \section{Related Work}
% % \label{sec:related-work}
% % \input{related-work.tex}

\section{Conclusions and Related Work}
\label{sec:conclusion}
% !TEX root = journal.tex

This work extends the framework presented in the precursor to this work~\cite{Cassar2018Concur} to 
% carry out 
the setting of bidirectional enforcement where observable actions such as inputs and outputs require different treatment.
We achieve this by: 
\begin{enumerate}
  \item augmenting substantially our instrumentation relation (\Cref{fig:mod-bi-re});
  \item  refining our definition of enforcement to incorporate transparency over violating systems (\Cref{def:enforcement}); 
  \item providing a more extensive synthesis function (\Cref{def:synthesis-bi}) that is proven correct (\Cref{thm:enf}); and
  \item exploring notions of transducer optimality in terms of limited levels of intrusiveness (\Cref{def:opt-enf-bi,def:opt-enf-bi-finite} and \Cref{thm:oenf}). 
\end{enumerate}

\paragraph*{Future work.}
There are a number of possible avenues for extending our work.
One immediate step would be the implementation of the monitor operational model presented in \Cref{sec:model} together with the synthesis function described in \Cref{sec:synthesis}. 
This effort should be integrated it within the detectEr tool suite~\cite{Attard2016,Cassar2017Eaop,CassarFAAI17,FrancalanzaX20,AttardAAFIL21,AchilleosEFLX22,AcetoAAEFI22}.  
This would allow us to assess the overhead induced by our proposed bidirectional monitoring~\cite{AcetoAFI21}. 
Other tools that are closely related to detectEr's monitoring approach, such as STMonitor~\cite{BurloFS21,BurloFSTT22} and the tool by Lanotte \etal \cite{LanotteMM23} can, in principle, also adopt our theoretical framework for bidirectional enforcement in a fairly straightforward manner.

A bidirectional treatment of actions~\cite{Pinisetty2016,Pinisetty2017} can also be used to increase the precision in related runtime techniques such as monitoring with  blame-assignment~\cite{Jia2016Popl} and explainability~\cite{Explainability22,FrancalanzaC21}.
Bidirectional enforcement can also be extended to take into consideration quantitative concerns such as cost~\cite{FrancalanzaVH14,DasB0PS21}, which would impinge on our notion of monitor optimality from \Cref{sec:optimality}. 
Another possible avenue would be the development of behavioural theories for the transducer operational model presented in  \Cref{sec:model,sec:enforcement}, along the lines of the refinement preorders studied in earlier work on sequence recognisers~\cite{francalanza2016theory,Fra17:Concur,AcetoAFIL21}.
We believe the operational models presented in \Cref{sec:model} already lend themselves to such a treatment.

% \noindent
% \textsl
\paragraph*{Related work.} 
In his seminal work \cite{schneider2000}, Schneider introduced the concept of runtime enforcement and enforceability in a linear-time setting. 
Particularly, in his setting a property is deemed enforceable if its \emph{violation} can be \emph{detected} by a \emph{truncation automaton}, and prevented via system termination.
By preventing misbehaviour, these automata can only enforce safety properties.  
Ligatti \etal extended this work in \cite{Ligatti2005} via \emph{edit automata}---an enforcement mechanism capable of \emph{suppressing} and \emph{inserting} system actions. 
A property is thus enforceable if it can be expressed as an edit automaton that \emph{transforms} invalid executions into valid ones via suppressions and insertions.
%
%Edit automata are capable of enforcing instances of safety and liveness properties, along with other properties such as infinite renewal properties \cite{Ligatti2005,Bielova2011PhD}. 
%
As a means to assess the correctness of these automata, the authors introduced \emph{soundness} and \emph{transparency}.
Both settings by Schneider \cite{schneider2000} and Ligatti \etal \cite{Ligatti2005} assume a trace based view of the \sus and that every action can be freely manipulated by the monitor.
They also do not distinguish between the specification and the enforcement mechanism, as properties are encoded in terms of the enforcement model itself, \ie as edit/truncation automata.
In our prior work \cite{Cassar2018Concur}, we addressed this issue by separating the specification and verification aspects of the logic and explored the enforceability of \recHML in a unidirectional context and in relation to a definition of adequate enforcement defined in terms of soundness and transparency.
In this paper we adopt a stricter notion of enforceability that requires adherence to eventual transparency and investigate the enforceability of \SHML formulas in the context of bidirectional enforcement.
%

%\cmt{Bielova's Work on Predictability --> which relates to optimality but still under the general assumption.}
%
Bielova and Massacci \cite{Bielova2011PhD,Bielova2011Predictability} remark that, on their own, soundness and transparency fail to specify the extent in which a transducer should modify invalid runtime behaviour and thus introduce a \emph{predictability} criterion.
A transducer is said to be \emph{predictable} if one can predict the edit-distance between an invalid execution and a valid one.
With this criterion, adequate monitors are further restricted by setting an upper bound on the number of transformations that a monitor can apply to correct invalid traces.
Although this is similar to our notion of optimality, we however use it to compare an adequate (sound and eventual transparent) monitor to \emph{all} the other adequate monitors and determine whether it is the least intrusive monitor that can enforce the property of interest.
In \cite{Konighofer2017} K{\"o}nighofer \etal present a synthesis algorithm similar to our own that produces action replacement monitors called \emph{shields} from safety properties encoded as automata-based specifications. 
Although their shields can analyse both the inputs and outputs of a reactive system, they still perform unidirectional enforcement since they only modify the data associated with the system's output actions. %whenever it deviates from the specified behaviour. 
By definition, shields should adhere to correctness and minimum deviation which are, in some sense, analogous to soundness and transparency respectively. 
In \cite{Pinisetty2016,Pinisetty2017}, Pinisetty \etal conduct a preliminary investigation of RE in a bidirectional setting.
They, however, model the behaviour of the \sus as a trace of input and output pairs, \aka \emph{reactions}, and focus on enforcing properties by modifying the payloads exchanged by these reactions.
This way of modelling system behaviour is, however, quite restrictive as it only applies to synchronous reactive systems that output a value in reaction to an input.
This differs substantially from the way we model systems as LTSs, particularly since we can model more complex systems that may opt to collect data from multiple inputs, or supply multiple outputs in response to an input. 
The enforcement abilities studied in \cite{Pinisetty2016,Pinisetty2017} are also confined to action replacement that only allows the monitor to modify the data exchanged by the system in its reactions, and so the monitors in \cite{Pinisetty2016,Pinisetty2017} are unable to disable and enable actions.
Due to their trace based view of the system, their correctness specifications do not allow for defining correct system behaviour in view of its different execution branches.
This is particularly useful when considering systems whose inputs may lead them into taking erroneous computation branches that produce invalid outputs.
Moreover, since their systems do not model communication ports, their monitors cannot influence directly the control structure of the \sus, \eg by opening, closing or rerouting data through different ports.

Finally, Lanotte \etal~\cite{LanotteMM20,LanotteMM23} employ similar synthesis techniques and correctness criteria to ours (\Cref{def:senf,def:tenf}) to generate enforcement monitors for a timed setting. 
They apply their process-based approach to build tools that enforce data-oriented security properties. 
Although their implementations handle the enforcement of first-order properties, the theory on which it is based does not, nor does it investigate logic enforceability.

% ------The Old version from FORTE-------
%
% As we discussed already in the Introduction, most work on RE assumes a trace-based view of the \sus\cite{schneider2000,Ligatti2005,Ligatti2009}, where few distinguish between actions with different control profiles (\eg inputs versus outputs).
% %
% Although shields~\cite{Konighofer2017} can analyse both  input and output actions, they still perform unidirectional enforcement and only modify the data associated with the output actions. 
% %
% The closest to our work is that by Pinisetty \etal~\cite{Pinisetty2017}, who consider bidirectional RE, modelling the system as a trace of input-output pairs. 
% %
% However, their enforcement is limited to replacements of payloads and their setting is too restrictive to model enforcements such as action rerouting and the closing of ports.
% %
% Finally, Lanotte \etal~\cite{LanotteMM20} employ similar synthesis techniques and correctness criteria to ours (\Cref{def:senf,def:tenf}) to generate enforcement monitors for a timed setting. 

\bibliographystyle{alphaurl}
\bibliography{refs}

\end{document}